\newtheorem{Theorem}{\hskip\parindent\it{Theorem}}
\newtheorem{lemma}{\hskip\parindent\it{Lemma}}
\newtheorem{remark}{\hskip\parindent\it{Remark}}
\def\BibTeX{{\rm B\kern-.05em{\sc i\kern-.025em b}\kern-.08em
    T\kern-.1667em\lower.7ex\hbox{E}\kern-.125emX}}
\begin{document}
\title{Blind Channel Estimation and Data Detection with Unknown Modulation and Coding Scheme}

\author{\IEEEauthorblockN{
Yu Liu
and Fanggang Wang
}
}

\maketitle

\begin{abstract}
This paper investigates a complete blind receiver approach in an unknown multipath fading channel, which has multiple tasks including blind channel estimation, noise power estimation, modulation classification, channel coding recognition, and data detection.
The side information required only is the candidates of the channel encoders and the modulation formats.
Each of these tasks has been sufficiently studied in the literature.
Few works studied the combination of two or three of them jointly.
However, to the best of our knowledge, this overall problem which involves the five aforementioned tasks has not been investigated previously.
Simply cascading the solution to each individual task naively is apparently far from the optimality.
This paper is the first attempt to address this overall problem jointly.
We propose a complete {\em blind} receiver approach that jointly {\em estimates} the unknown parameters (channel state information and noise power), {\em recognizes} the unknown patterns (modulation and coding scheme), {\em detects} the data of interest, and thus named {\em BERD} receiver.
In particular, the proposed BERD receiver exhibits an iterative manner, and the essential steps in the iteration are as follows:
1) multipath channel estimation based on the expectation-maximization algorithm;
2) noise power estimation;
3) received signal equalization using the Bayes equalizer;
4) soft-output demodulation and decoding;
5) re-encoding and re-modulation.
Another merit of the proposed BERD receiver is that it can be implemented for both cases of a single receiver and multiple receivers.
For multiple receivers, it supports both distributed and cooperative manners and allowing multiple receivers ensures successful estimation, recognition, and detection for such an extremely difficult problem.
Furthermore, the solution to the overall problem applies to any reduced one with parts of the five tasks.
The BERD receiver applies to the reduced problems as well and it still outperforms the exiting work on the individual or the joint tasks, which is validated by the simulation results.
In addition, numerical results show the performance of the complete blind BERD receiver within three folds:
a) Regarding estimation, the BERD receiver outperforms the linear minimum mean squared error (LMMSE) pilot-based channel estimator by over $3.5\,\mathrm{dB}$ at the mean square error of $10^{-2}$;
b) Regarding recognition, the correct modulation/coding recognition performance of the BERD receiver is within $0.3\,\mathrm{dB}$ as close to the recognition benchmark when the perfect channel state information (CSI) is available;
c) Regarding detection, the BERD receiver is within $0.5\,\mathrm{dB}$ at the bit error rate of $10^{-3}$ compared to the benchmark when the modulation, the channel coding, and the CSI are perfectly known.
Finally, the BERD receiver finds many applications in both civilian and military scenarios, such as the interference cancelation in spectrum sharing, real-time signal interception, and processing in electronic warfare operations, automatic recognition of a detect signal in software-defined radio, etc.
\begin{IEEEkeywords}
Blind channel estimation, blind data detection, channel encoder identification, modulation classification, likelihood fusion.
\end{IEEEkeywords}
\end{abstract}

\section{Introduction}\label{S1}
With the rapid development of wireless communication, the increasing demands for high data rate, reliability, and quality of service (QoS) have attracted significant research attention.
The lack of spectrum resources due to the explosive data traffic becomes an urgent problem to be solved \cite{INTencoder4}.
Various standardization organizations have proposed flexible dynamic spectrum access and sharing technologies to improve the spectrum efficiency with {\em a priori} information of the spectrum occupation.
However, in a non-cooperative communication manner, a receiver is incapable of getting {\em a priori} information from the desired signals.
In addition, even in a cooperative manner, there are still co-channel interference from the adjacent cells, the transmission of other operators, and even some malicious emitters, which are cumbersome without any prior information of related parameters.
To address this issue, the techniques of the blind channel estimation, modulation classification, channel encoder identification, and blind data detection, etc., have emerged accordingly and played important roles in both the military and the civilian applications \cite{OA1}.
In this paper, we investigate a complete blind receiver approach, which is designed to estimate related parameters, recognize the unknown modulation and coding patterns, and detect the data of interest, with no {\em a priori}  information.

The overall blind receiver design is composed of the following five tasks, i.e., blind channel estimation, noise power estimation, modulation classification, channel encoder identification, and blind data detection.
Most of the individual tasks have been sufficiently studied in the literature.
Blind channel estimation has been studied in \cite{ref:CE-subspace-LB,ref:CE-ML-MMSE,ref:CE-ML-KL,
ref:CE-classes-sos,ref:CE-sos-timedomain,
ref:CE-multipath-subspace,ref:CE-multipath-modal,
ref:CE-MB-matching-correlation,ref:CE-MB-matching-cyc2,
ref:CE-MB-matching-cyc}, which can be classified into the maximum likelihood-based and moment-based methods \cite{ref:CE-subspace-LB}.
Modulation classification has been investigated in \cite{ref:modut1,ref:modut2,ref[TWC15],ref[TWC17],FC1,FC2,ref[TWC4],ref[TWC5],
MC-LB-ALRT-MILCOM,MC-LB-WEI,MC-LB-MILCOM,MC-LB-TSMC,
MC-LB-GLRT-BDULEK,MC-LB-GLRT-CCECE,MC-LB-HLRT-MIL2000,MC-LB-HLRT-TWC2015,
ref:MC-wcsp13,ref:MC-wcnc11,MC-FB-CL2011,
MC-FB-PDF,MC-FB-phase,MC-FB-high}, including both the likelihood-based (LB) methods and the feature-based (FB) methods.
The channel encoder identification has been studied in \cite{code1,CODEMS1,INTencoder4,TB3,ref:blind-turbo,ref:blind-cyclic,ref:EC-linear-block,ref:EC-cyclic-code,ref:EC-convo-code,ref:EC-turbo}, which can recognize different channel encoders, including both block codes and convolutional codes.
Few works studied the combination of two or three tasks of the five ones.
In \cite{ref:blind-DT1,ref:DT-CE-BW,ref:DT-MC,ref:ZJW,TWC[26],LDPC1,MULTI1,ref:LY}, two tasks were investigated jointly.
Blind data detection and channel estimation were simultaneously studied in \cite{ref:blind-DT1,ref:DT-CE-BW}.
Blind data detection and modulation classification were considered jointly to improve the data detection performance in \cite{ref:DT-MC}.
In \cite{ref:ZJW,TWC[26],LDPC1,MULTI1,ref:LY}, three tasks were considered at the same time.
In \cite{ref:ZJW,TWC[26]}, the blind channel estimation, modulation classification, and blind data detection were addressed jointly.
The joint approaches for blind channel estimation, noise power estimation, and encoder identification were also investigated in \cite{LDPC1,MULTI1}.
Recently, we proposed a joint scheme in \cite{ref:LY}, which simultaneously accomplished the channel estimation, encoder recognition, and data detection.
In the following, the literature of each individual task and the combination of partial tasks were reviewed respectively.

Regarding the two classes of the blind channel estimation in \cite{ref:CE-subspace-LB}, the maximum likelihood-based methods are usually optimal for big data records and they approach the minimum variance unbiased estimators, which has been investigated in \cite{ref:CE-ML-MMSE,ref:CE-ML-KL}.
Unfortunately, it is difficult to derive the closed-form solutions of the maximum likelihood-based methods since the existence of the local optimal solutions complicates the implementation of the methods.
In light of this, the moment-based methods are proposed, which can be further classified into the subspace approaches \cite{ref:CE-classes-sos,ref:CE-sos-timedomain,ref:CE-multipath-subspace,ref:CE-multipath-modal} and the moment matching approaches \cite{ref:CE-MB-matching-correlation,ref:CE-MB-matching-cyc2,ref:CE-MB-matching-cyc}.
In \cite{ref:CE-classes-sos}, the classes of the multipath channels were identified from the second-order statistics using multiple antennas.
A parametric subspace approach using the second-order moment was adopted to identify the specular multipath propagation channels in \cite{ref:CE-multipath-subspace}.
The proposed parametric method can estimate the channel parameters including attenuations, relative delays, and spatial signatures, which are robust to the channel order overestimation compared to the classical subspace method.
To achieve more robust performance against channel conditions and channel order selection, the moment matching methods were developed.
The cross-correlation matching approach based on the second-order statistics of the channel outputs was proposed in \cite{ref:CE-MB-matching-correlation}, which estimates the channel response without knowing the length of the finite impulse response channel.
In \cite{ref:CE-MB-matching-cyc2, ref:CE-MB-matching-cyc}, the cyclic correlation matching algorithms were investigated to estimate the channel impulse response and the variance of the additive noise.
Even when the channel is not uniquely estimated from the second-order statistics, the proposed approach still provides a useful estimate.
However, the moment matching methods are not easy to implement due to the multiple local optimal solutions and the cost of the complexity.

The modulation classification methods are categorized into two groups, i.e. the LB methods and the FB methods \cite{ref[TWC4],ref[TWC5]}.
The LB methods have been thoroughly investigated in the additive white Gaussian noise (AWGN) channel and the flat-fading channel, as the LB method is the optimal classifier in the Bayesian sense \cite{ref:modut1,ref:modut2}.
Regarding the model built for the unknown parameters, three prominent approaches have been proposed, i.e., average likelihood ratio test (ALRT)\cite{MC-LB-ALRT-MILCOM,MC-LB-WEI,MC-LB-MILCOM,MC-LB-TSMC}, generalized likelihood ratio test (GLRT) \cite{MC-LB-GLRT-BDULEK,MC-LB-GLRT-CCECE}, and hybrid likelihood ratio test (HLRT) \cite{MC-LB-HLRT-MIL2000,MC-LB-HLRT-TWC2015}.
However, the LB methods have high computational complexity and sensitivity to the unknown channel conditions.
In contrast, the FB methods have much lower complexity and could be robust to some particular conditions as per feature extraction \cite{FC1,FC2,MC-FB-CL2011,ref[TWC15],ref[TWC17],ref:MC-wcsp13,ref:MC-wcnc11}. A large amount of features has been proposed in the literature, such as the statistical moments and the probability density function (PDF) of the phase to classify the phase-shift keying (PSK) modulation \cite{MC-FB-PDF,MC-FB-phase}, and the cyclic cumulants for the high-order modulation classification \cite{MC-FB-high}, etc.
In \cite{FC1,FC2}, the higher-order statistics are applied to solve the classification task in the unknown multipath channel.
In \cite{FC1}, a blind channel estimator was proposed first, and then a fourth-order cumulant-based classifier is developed to extract essential features for classification.
However, the channel state information estimated from the fourth-order moments is inaccurate.
An enhanced approach using sixth-order cumulants is proposed in \cite{FC2} to improve the classification performance using this inaccurate channel information.
In our earlier work \cite{ref[TWC15],ref[TWC17],ref:MC-wcsp13,ref:MC-wcnc11}, a goodness of fit approach using Kolmogorov-Smirnov (KS) was proposed to solve the modulation classification in various channels, such as the AWGN channel, the flat-fading channel, and the channel with unknown phase and/or unknown frequency offsets.
The proposed algorithm achieves better classification performance and even lower complexity than the cumulant-based ones.
However, the feature extraction in the FB methods is difficult to be incorporated with the likelihood-based soft demodulation and decoding at a receiver. That is, the joint design is troublesome.

The channel encoder identification is to determine the unknown channel encoder from the output bits of demodulation.
The existing work of coding identification is mainly distinguished between two types of error-correcting codes, i.e., the block codes and the convolutional codes \cite{code1,CODEMS1,ref:blind-turbo,ref:blind-cyclic,INTencoder4,
TB3,ref:EC-linear-block,ref:EC-cyclic-code,ref:EC-convo-code,ref:EC-turbo}.
The linear block codes identification based on Euclidean distance distribution was studied in \cite{CODEMS1}, which determines both the code length and the code dimension from a soft output of demodulation.
In \cite{TB3}, a blind encoder identification for low-density parity-check (LDPC) codes as well as frame synchronization was investigated in the multipath fading channel.
A two-stage search method using the quasi-cyclic nature of the parity-check matrix was proposed.
In \cite{ref:blind-cyclic}, the blind reconstruction of the binary cyclic codes was discussed.
The proposed approach identifies the correct synchronization, the length, and the factors of the generator polynomial of the code.
An iterative method for the convolutional encoder identification at a specific coding rate was proposed in \cite{INTencoder4}.
The blind identification method based on the algebraic properties of the convolutional encoder was considered both in noiseless and noisy cases.
A turbo encoder identification scheme based on the expectation-maximization (EM) algorithm was studied in \cite{ref:blind-turbo}, which can determine the optimal connections of the shift-registers.
Moreover, a joint identification scheme for the type of error-correcting codes and the interleaver parameters was studied in \cite{code1}.
The proposed scheme classifies the incoming data among block codes, convolutional codes, and uncoded data based on the analytical and histogram approaches.

In addition to the previous work mostly focused on an individual task, the joint problems by combing several of these tasks have been investigated recently.
In \cite{ref:blind-DT1,ref:DT-CE-BW}, the channel estimation and data detection were simultaneously studied.
A Bayes equalizer was designed for the restoration of finite-alphabet symbols and the Gibbs sampler was adopted to estimate the complex coefficients of both the Gaussian intersymbol interference (ISI) channel and the non-Gaussian ISI channel in \cite{ref:blind-DT1}.
To improve data detection performance, the modulation classification and data detection were jointly investigated in \cite{ref:DT-MC}.
The proposed method improved the symbol detection performance via relaxing the constraints on the modulation classification performance in the AWGN channel.
Blind channel estimation, modulation classification, and data detection were jointly considered \cite{ref:ZJW,TWC[26]}.
An LB scheme was proposed in \cite{ref:ZJW}, which jointly estimates the multipath channel and classifies the unknown modulation formats.
In \cite{TWC[26]}, a hybrid maximum likelihood modulation classification scheme using the EM algorithm was proposed.
The method blindly estimates unknown time offset, channel amplitude, and channel phase in a flat-fading channel.
In \cite{LDPC1,MULTI1}, the tasks of blind channel estimation, noise power estimation, and channel encoder identification are investigated jointly.
In \cite{LDPC1}, the unknown LDPC encoder is identified by using the average log-likelihood ratio (LLR) of the {\em a posteriori} probability (APP) of the syndrome, where the unknown channel gain and the noise power are estimated by the EM algorithm.
In \cite{MULTI1}, a blind LDPC encoder identification scheme was firstly proposed for quadrature amplitude modulation (QAM) signals in a flat-fading channel.
The EM algorithm was adopted as well to estimate channel amplitude, channel phase, and noise power.
Recently, we proposed a joint channel estimation, encoder identification, and data detection scheme in \cite{ref:LY}.
The proposed approach iterates between an EM-based channel estimator and a Bayes detector, which simultaneously estimates the channel gain, channel phase, and recognize the channel coding.
In summary, the aforementioned literature focuses on either the individual task or the combination of two or three tasks of the overall problem in this paper.
To the best of our knowledge, the overall problem consisting of all the five tasks has not been addressed previously in the literature.
A straightforward recipe is to simply cascade the solutions to each individual task, which is apparently far from the optimal solution.

In this paper, we make a first attempt to consider the overall problem and propose a complete blind receiver approach, which jointly estimates the channel state information and the noise power, recognizes the unknown modulation and coding scheme (MCS), detects the data of interest, and thus is called BERD receiver.
Regarding the difficulty of inter-symbol interference induced by the multipath channel, the BERD receiver is well designed and exhibits an iterative manner among different modules with each hypothesis candidate of MCS, i.e., the blind channel and noise power estimator, the Bayes equalizer module, the soft demodulator and decoder module, the re-encoder and re-modulator module, the stop criteria module, and the multistage likelihood decision module.
The essential steps in the iteration for each hypothetical candidate modulation and channel coding scheme are summarized as follows:
1) the EM algorithm is applied to estimate the unknown multipath channel including both the amplitude and the phases of each path;
2) the noise power is determined simply by subtracting the noise-free signal reconstructed by the estimated channel states and the information symbols predetermined in the previous iteration from the received signal;
3) given the estimated channel state information and the noise power, the received signal is equalized by using the Bayes equalizer to obtain the {\em a posteriori} probability of each modulated symbol;
4) the soft-output symbols from the equalizer is demodulated and decoded for each candidate MCS (as the final decision is made out of the iteration);
5) the output bits of the decoder is re-encoded and re-modulated with the corresponding MCS, which is required in the step of channel estimation in the next iteration.
The main contributions of this paper are summarized as follows:
\begin{itemize}
\item 	Proposed a complete blind receiver approach in a multipath fading channel, i.e., the BERD receiver, which solves the five tasks jointly, including blind channel estimation, noise power estimation, modulation classification, channel coding identification, and data detection. To the best of our knowledge, the BERD receiver is the first attempt to investigate the overall problem, which iteratively proceeds each of the five tasks.

\item Design a soft-information detector to iteratively enhance the accuracy of channel estimation and the correctness of data detection when the MCS is unknown. The detector contains a Bayes equalizer, a soft demodulator, and a soft decoder. The main advantage is that errors are corrected and then more reliable modulated symbols are regenerated for future channel estimation.
    The accuracy of channel estimation is improved accordingly which further helps the following detection. The iterative approach provides an efficient solution to the joint problem in a multipath fading channel.

\item 	The proposed BERD receiver is applicable to both single and multiple receivers. For a single receiver, the classification performance and the BER can be improved by allowing more iterations, while using multiple receivers cooperatively facilitate a shorter delay since fewer iterations are required to achieve an identical performance. Furthermore, the BERD receiver also supports a distributed manner that the decision of each receiver is fused at the end instead of soft likelihood information fusion during iteration.

\item 	The proposed BERD receiver is dedicated to the case of linear block codes. However, it can be easily extended to the other channel codes having distinguishable features that can be characterized by the likelihood representation. Last but not least, the solution to the overall problem can be applied to any reduced version of the original problem, such as the individual task or any partial combination of the tasks.
\end{itemize}

The remainder of this paper is organized as follows.
Section \ref{sec:system-model} introduces the system model.
The proposed BERD receiver is presented in Section \ref{sec:joint-scheme} and the solution to each individual task is addressed in the following sections.
In Section \ref{sec:channel_and_noise}, the blind channel estimation and the noise power estimation are proposed.
The soft-information detector is studied in Section \ref{sec:DEC}.
A multistage likelihood decision procedure is illustrated in Section \ref{sec:multistage-decision}.
The BERD approach for the system with multiple receivers is investigated in section \ref{sec:BERD-K-receive}.
Numerical results are shown in Section \ref{sec:simulation-results}.
At last, Section \ref{sec:conclusion} concludes this paper.

\textit{Notation:}
Throughout this paper, variables, vectors, and matrices are written as italic letters $x$, bold italic letters $\boldsymbol x$, and bold capital italic letters $\boldsymbol{X}$, respectively;
A random variable and its realization are respectively denoted by $\sf x$ and $x$;
a random matrix and its realization are respectively
denoted by $\boldsymbol{\sf X}$ and $\boldsymbol X$;
$|\mathcal{X}|$ is the cardinality of set $\mathcal{X}$;
$p(x)$ denotes the PDF $p_{\sf x}(x)$ of the random variable $\sf x$, and $p(x|y)$ denotes the conditional PDF $p_{\sf{x|y}}(x|y)$ of the random variable $\sf x$ conditioned on random variable $\sf y$;
$\mathbb{E}\{\cdot\}$ denotes the expectation with respect to (w.r.t.) all the randomness in the argument;
$\Re\{c\}$ and $\Im\{c\}$ represent the real and imaginary part of the complex number $c$, respectively;
$\mathcal{CN}(\mu,\sigma^2)$ denotes the PDF of a random variable following the complex Gaussian distribution with mean vector $\mu$ and the variance $\sigma^2$;
$\mathrm{GF}(q)$ denotes the Galois field of the integer $q$.
The operators $[\cdot]^{\mathrm{T}}$, $[\cdot]^{*}$, $(\cdot)^{\mathrm{H}}$ denote the transpose, the conjugate, and the Hermitian of their arguments, respectively;
the operator $\|\cdot\|$ denotes the $\ell_2$ norm of the argument;
the operator $\lfloor\cdot\rfloor$ represents the floor of the argument;
the operator $\oplus$ represents the addition in $\mathrm{GF}(2)$ of the argument;
the operator $\otimes$ represents the Kronecker product of the argument;
the $L$-by-$L$ identity matrix and $L$-by-$1$ identity vector are denoted by $\boldsymbol{I}_L$ and $\boldsymbol{1}_L$, respectively;
$\log c$ and $\ln c$ denote the logarithm of a real number $c$ to the base $2$ and $e$, respectively;
the imaginary unit is denoted by $\imath=\sqrt{-1}$;
$\mathbb{Z}$, $\mathbb{Z}_2$, and $\mathbb{Z}_2^n$ represent all the integers, the set with $\{0,1\}$, and the set with $n$ elements which take the value from $\mathbb{Z}_2$, respectively.
Define $\mathcal{I}_K=\{1,2,\ldots,K\}$ as shorthand as the index set.
The definition of the notations is summarized in Table \ref{table:symbol-list} in Appendix \ref{app:table} for the convenience of the readers.

\section{System Model}
\label{sec:system-model}
Consider a non-cooperative wireless transmission in which the receiver has no prior knowledge of the multipath channel state information, the noise power, or the MCS scheme. The ultimate goal is to correctly decode the message of interest from the unknown signal.
To accomplish this task, it is required to estimate the multipath channel and the noise power without any aid of pilots, classify the unknown modulation $\eta\in\mathcal{M}$, recognize the unknown channel coding $\zeta\in\mathcal{C}$, and detect the data of interest.
Denote the MCS by $\theta=\{\eta, \zeta\}\in \mathcal{M}\times \mathcal{C}$.
Then, the received signal can be expressed as
\begin{equation}
\label{LY1}
{\mathsf r_{j}} = \sum\limits_{\ell\in\mathcal{I}_{L}-1} {{a_{\ell}}{e^{\imath{\varphi}_{\ell}}}{\sf s}_{j-\ell} + {{\sf v}_{j}}},
\quad j\in\mathcal{I}_N
 \end{equation}
where $L$ is the number of paths of
the wireless channel;\footnote{To simplify the notation, we start from the case of a single receiver for brief illustration, the notation for multiple receivers in both cooperative and distributed manners is defined later in Section \ref{sec:BERD-K-receive}. In addition, the channel fading $a_\ell$ at some spots of the delay profile could be zero since the number of channel paths $L$ is unknown in the blind communication system.} ${a}_{\ell}\geqorig0$ and ${\varphi}_{\ell}\in[0,2\pi)$ are the unknown channel gain and the phase of the $\ell$th path;
${\sf s}_{j}$ is the modulated symbol from the unknown constellation $\mathcal{S}^\eta$, which is the set of all constellation points in the modulation format $\eta$, and ${\sf s}_j$ maps to $\log |\mathcal{S}^\eta| $ coded bits in a codeword $\boldsymbol{\tilde{\sf c}}\in\mathbb{Z}_{2}^{n}$.
We first define the uncoded information bit sequence with the length of $q$ as $\boldsymbol{\sf b}\in\mathbb{Z}_{2}^q$.
Assume that a $(n,q)$ linear block code with code rate $R=\frac{q}{n}$ is adopted in the transmission.
The codeword $\boldsymbol{\tilde{\sf c}}$ is obtained by encoding $\boldsymbol{\sf b}$ using the generator matrix $\boldsymbol{G}\in\mathbb{Z}_2^{q\times n}$, which can be expressed as
\begin{align}
\boldsymbol{\tilde{\sf c}}=\boldsymbol{G}^{\mathrm T}\boldsymbol{\sf b}.
\end{align}
This generator matrix $\boldsymbol G$ corresponds to a unique parity-check matrix $\boldsymbol H\in\mathbb{Z}_{2}^{(n-q)\times n}$.
The relationship between them can be written as
\begin{align}
\boldsymbol H {\boldsymbol G}^{\mathrm T}=\boldsymbol 0.
\end{align}
The noise ${\sf v}_{j}$, $j\in\mathcal{I}_N$, follows independent identically distributed (i.i.d.) zero-mean circularly symmetric complex Gaussian (CSCG) distribution i.e., ${\sf v}_{j}\sim \mathcal{CN}(0,\sigma^{2})$, $j\in\mathcal{I}_N$.

The tasks of the proposed BERD receiver are to jointly estimate the multipath channel states, including channel gain ${a}_{\ell}$ and channel phase ${{\varphi}}_{\ell}$, $\ell\in\mathcal{I}_{L}-1$, estimate the noise power ${\sigma}^2$, determine the unknown modulation $\eta$ and the unknown  channel coding $\zeta$ from a candidate set $\mathcal{M}\times\mathcal{C}$, and the last but the most important, detect the transmitted information bits $\boldsymbol{\sf b}$.
In the following, we present the function of each module involved in the BERD receiver.

\section{The Proposed BERD Receiver}
\label{sec:joint-scheme}
In this section, the process of the proposed receiver is briefly exhibited by introducing each functional module, which is followed by the pseudo-code of the overall receiver algorithm.
The algorithm and the information flow of the proposed receiver are shown in Figure \ref{fig:system_model}.
The BERD receiver is composed of six modules,
i.e., the blind channel and noise power estimator, the Bayes equalizer module, the soft demodulator and decoder module, the re-encoder and re-modulator module, the stop criteria module, and the multistage likelihood decision module.
The function of the six modules can be summarized as follows.
The overall receiver algorithm is summarized in Algorithm \ref{algo:total}.
\begin{figure}
\centering
\subfigure[ ]
{\begin{minipage}[b]{0.75\textwidth}
\includegraphics[width=1\textwidth]{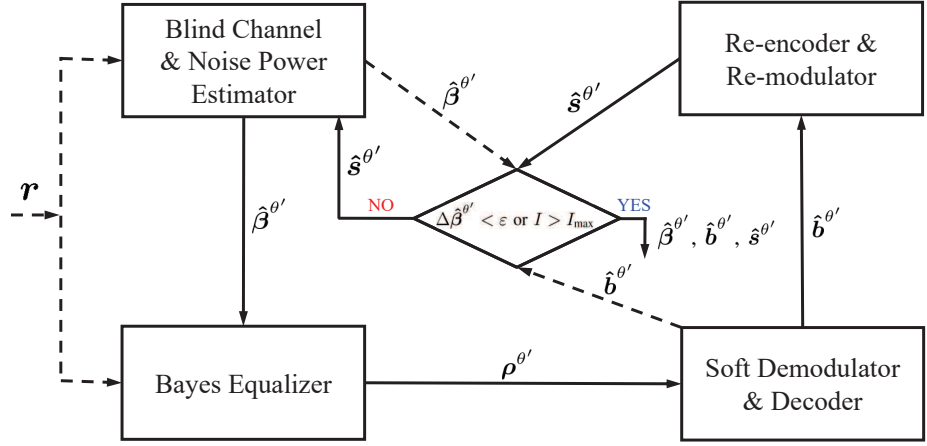}
\end{minipage}}
\subfigure[ ]
{\begin{minipage}[b]{0.5\textwidth}
\includegraphics[width=1\textwidth]{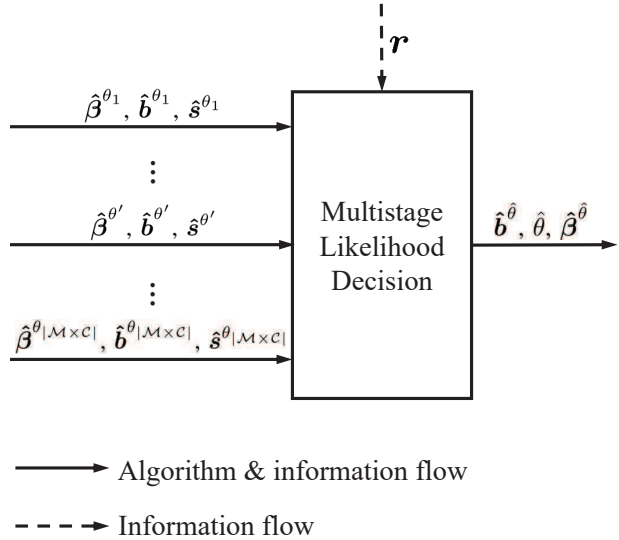}
\end{minipage}}
\caption{The block diagram of the BERD receiver.
(a) The algorithm and information flow between the different  modules in the hypothesis MCS candidate $\theta^{\prime}$.
(b) The algorithm and information flow of the multistage likelihood decision module.}
\label{fig:system_model}
\end{figure}

\noindent\emph{A. Blind Channel and Noise Power Estimator:}
{The proposed estimator is deployed to estimate the channel gain ${\boldsymbol {a}}$, the channel phase ${\boldsymbol {\varphi}}$, and the noise power ${{\sigma}^2}$.
In the hypothesis MCS scheme $\theta^{\prime}\in\mathcal {M}\times\mathcal C$, the inputs of the estimator is the received signal $\boldsymbol{\sf r}$ and the re-modulated symbols $\boldsymbol{\hat {s}}^{\theta^{\prime}}$, which are regenerated by the following re-encoder and re-modulator module.
The outputs of the proposed estimator are the channel state information ${\hat a_\ell}$ and $\hat \varphi_\ell$, $\ell\in\mathcal I_L-1$, and the noise power $\hat \sigma^2$, which are collectively denoted by
$\boldsymbol{\hat{\beta}}^{\theta^{\prime}}=[{\hat a_0, \hat a_1,\ldots,\hat a_{L-1}},{\hat \varphi_0,\hat \varphi_1,\ldots,\hat \varphi_{L-1}},{{\hat \sigma}^2}]^{\mathrm T}$.
The details of the blind channel and noise estimator will be further illustrated in Section \ref{sec:EM-based}.}

\noindent\emph{B. Bayes Equalizer:}
{The Bayes equalizer is adopted to equalize the multipath effect of the wireless channel.
The inputs of the equalizer are the received signal $\boldsymbol{\sf r}$ and the estimated channel information $\boldsymbol{\hat{\beta}}^{\theta^{\prime}}$.
The output of the Bayes equalizer is the posterior probability of the modulated symbols $\boldsymbol\rho^{\theta^{\prime}}$, which serves as the input of the soft demodulator and decoder module.
The details of the Bayes equalizer are deferred to Section \ref{sec:DEC}.}

\noindent\emph{C. Soft Demodulator and Decoder Module:}
{This module is applied to demodulate and decode the output  signal from the Bayes equalizer.
It suppresses the noise and the inter-symbol interference induced by the multipath channel.
The input of this module is the output of the Bayes equalizer,
and the output of it is the decoded bits $\boldsymbol{\hat b}^{\theta^{\prime}}$.
The details of the soft demodulator and decoder module are provided in Section \ref{sec:DEC}.}

\noindent\emph{D. Re-encoder and Re-modulator Module:}
{This module is deployed to re-encode and re-modulate the information bits from the previous soft demodulator and decoder module.
The input of this module is the decoded bits $\boldsymbol{\hat b}^{\theta^{\prime}}$.
The output is the regenerative modulated symbols $\boldsymbol{\hat s}^{\theta^{\prime}}$.
The details of the re-encoder and re-modulator module are presented in Section \ref{sec:DEC}.
In addition, the Bayes equalizer module, the soft demodulator and decoder module, the re-encoder and re-modulator module are cascaded to detect and regenerate the received signal in each iteration of the proposed BERD receiver.}

\begin{algorithm}[!t]
\caption{The Proposed BERD Receiver}
\begin{algorithmic}[1]
\label{algo:total}
\STATE{\textbf{Init:} $\boldsymbol{\hat\beta}^{\theta^{\prime}}$;}
\WHILE{the variation of the estimated channel is more than $\varepsilon$ or the number of iterations does not exceed the maximum threshold}
\STATE{Compute $\boldsymbol\rho^{\theta^{\prime}}$ in the Bayes equalizer module according to Section \ref{sec:DEC};}
\STATE{Detect $\boldsymbol{\hat b}^{\theta^{\prime}}$ in the soft demodulator and decoder module according to Section \ref{sec:DEC};}
\STATE{Regenerate $\boldsymbol{\hat s}^{\theta^{\prime}}$ in the re-encoder and re-modulator module according to Section \ref{sec:DEC};}
\STATE{Update $\boldsymbol{\hat\beta}^{\theta^{\prime}}$ in the blind channel and noise power estimator according to Section \ref{sec:EM-based};}
\ENDWHILE
\STATE{The outputs including the detected bits $\boldsymbol{\hat b}^{\hat \theta}$, the MCS $\hat\theta$, and the estimated channel information $\boldsymbol{\hat \beta}^{\hat \theta}$ are determined in the multistage likelihood decision module according to Section \ref{sec:multistage-decision};}
\end{algorithmic}
\end{algorithm}

\noindent\emph{F. Stop Criteria:}
{The stop criteria module decides whether the iteration stops.
The inputs of the stop criteria module are the estimated channel information $\boldsymbol{\hat{\beta}}^{\theta^{\prime}}$, the decoded bits $\boldsymbol{\hat b}^{\theta^{\prime}}$, and the regenerated symbols $\boldsymbol{\hat s}^{\theta^{\prime}}$ in the current iteration.
The stop criteria are that the mean square error (MSE) of the estimated channel information in the current iteration and the previous one is less than the stopping threshold $\varepsilon$, i.e.,
$\Delta \boldsymbol{\hat{\beta}}^{\theta^{\prime}}={\Big\|\boldsymbol{\hat{\beta}}^{\theta^{\prime}}[I+1]-\boldsymbol{\hat{\beta}}^{\theta^{\prime}}[I]\Big\|^2}<\varepsilon$
or the iterations exceed the maximum iterations, i.e., $I>I_{\text{max}}$.
If the stop criteria are not satisfied, the output $\boldsymbol{\hat s}^{\theta^{\prime}}$ is adopted by the blind channel and noise power estimator in the next iteration.
Otherwise, the iteration stops and this module outputs $\boldsymbol{\hat{\beta}}^{\theta^{\prime}}$, $\boldsymbol{\hat{b}}^{\theta^{\prime}}$, and $\boldsymbol{\hat s}^{\theta^{\prime}}$, which serve as the inputs of the following multistage likelihood decision module.}

\noindent\emph{G. Multistage Likelihood Decision Module:}
{This module makes the final decision of the information bit  $\boldsymbol{\hat b}^{\hat\theta}$, the adopted MCS $\hat\theta$, and the estimated channel information $\boldsymbol{\hat\beta}^{\hat\theta}$.
The inputs are the outputs of the stop criteria module in each hypothesis MCS candidate $\theta^{\prime}\in\mathcal{M}\times\mathcal{C}$.
The details of the multistage likelihood decision module are illustrated in Section \ref{sec:multistage-decision}.}

\section{Blind Channel and Noise Power Estimator}
\label{sec:channel_and_noise}
In this section, we propose an algorithm to estimate the unknown multipath channel information and the noise power, including the channel gain ${a}_\ell$, the channel phase ${{\varphi}}_\ell$, $\ell\in\mathcal{I}_{L}-1$, and the noise power ${\sigma}^2$.
To solve this problem, the maximum likelihood (ML) estimator is adopted, which aims to estimate the unknown parameter $\boldsymbol \beta=[{a_0,a_2,\ldots,a_{L-1}}, {\varphi_0,\varphi_1,\ldots,\varphi_{L-1}},{{\sigma}^2}]^{\mathrm T}$ of the likelihood function ${p}( {{\boldsymbol{r}}|\boldsymbol{\hat{s}};\boldsymbol{{\beta}}})$.\footnote{The modulated symbols $\boldsymbol{\hat s}$ can be obtained from the re-modulator and re-encoder module, which is in Section \ref{sec:DEC}.}
Then, for each hypothesis MCS candidate {$\theta^{\prime} \in \mathcal{M}\times\mathcal{C}$}, the explicit expression of ${p}( {{\boldsymbol{r}}|\boldsymbol{\hat{s}};\boldsymbol{{\beta}}})$ is given by
\begin{align}
{p}( {{\boldsymbol{r}}|\boldsymbol{\hat{s}};\boldsymbol{{\beta}}})
&=\prod_{j\in\mathcal I_N}{p}( {{r_j}|\boldsymbol{\hat{s}}_{j-L+1}^{j};\boldsymbol{{\beta}}}) \\
& \propto \frac{1}{{\sigma}^N}\exp{\Bigg(\sum_{j\in\mathcal{I}_N}-\frac{1}{{{\sigma}^2}}
{{\bigg|{r}_{j}-\sum_{\ell\in\mathcal{I}_{L}-1} {{a}_{\ell}{e^{\imath{\varphi}_{\ell}}}} {{\hat s}_{j-\ell}}\bigg|^{2}}}\Bigg)}
\label{eq:likelihood_probability}
\end{align}
where $\boldsymbol{\hat{s}}=[{\hat{s}}_1,{\hat{s}}_2,\ldots,{\hat{s}}_N]^{\mathrm T}\in\mathcal{S}^N$, and $\boldsymbol{\hat{s}}_{j-L+1}^j=[{\hat s}_{j-L+1},{\hat s}_{j-L+2},\ldots,{\hat s}_{j}]^{\mathrm T}\in\mathcal{S}^L$.\footnote{Note that, if $j\leqorig \ell$, ${\hat s}_{j-\ell}=0$. Considering the memory characteristics of the multipath channel, the received symbol $r_j$ is conditional independent to the other received symbols given $\boldsymbol{\hat {s}}_{j-L+1}^j$ and $\boldsymbol \beta$.}
Consequently, the log-likelihood function $\mathcal F(\boldsymbol {\beta})$ can be expressed as\footnote{The likelihood function $\mathcal F(\boldsymbol\beta)$ in \eqref{Lbeta} is evaluated in the multistage likelihood decision module, which is introduced in Section \ref{sec:multistage-decision}.}
\begin{align}
\mathcal F(\boldsymbol {\beta}) &= \ln{{p}( {{\boldsymbol{r}}| \boldsymbol{\hat{s}};\boldsymbol{{\beta}}})} \label{eq:likelihood_prob_origin} \\
& =\sum_{j\in\mathcal{I}_N}{-\frac{1}{{{\sigma}^{2}}}\bigg|{r}_{j}-\sum_{\ell\in\mathcal{I}_{L}-1} {{a}_{\ell}{e^{\imath{\varphi}_{\ell}}}} {\hat{s}_{j-\ell}}\bigg|^{2}}-N\ln{\sigma}.
\label{Lbeta}
\end{align}
Then, the maximum likelihood estimator (MLE) of $\boldsymbol {\beta}$ is given by
\begin{equation}\label{LYbeta}
\boldsymbol {\hat{\beta}}
= \mathop {\arg \max \limits_{\boldsymbol {\beta}}}  {\mathcal F}(\boldsymbol {\beta} ).
\end{equation}
However, the problem in \eqref{LYbeta} is a non-convex problem that is intractable.
In addition, due to the multipath scenario, the received signal is a superposition of the signals from all paths which are difficult to be decoupled.
In the following, to deal with this problem, we design a blind channel and noise power estimator to estimate the unknown parameter $\boldsymbol {\beta}$.

We proposed an EM-based estimation algorithm to provide a local optimal solution to \eqref{LYbeta}.
Assume $\hat s_{j}$ is the $j$th detected modulated symbol, and the total power of the transmitted symbols is $P=\sum_{j\in\mathcal{I}_{N}} |\hat s_{j}|^{2}$.
Additionally, define $\boldsymbol{\hat Z}[t]$ as the complete data and ${\hat z}_{\ell,j}[t]$ is the $j$th complete data of the $\ell$th path in iteration $t$.\footnote{The choice of the complete data ${\hat z}_{\ell,j}[t]$ has a significant impact on the convergence result of the EM-based algorithm, which can be derived according to \eqref{LYcomplete}.
The determination details of ${\hat z}_{\ell,j}[t]$ are discussed in Appendix \ref{app:EM}. In addition, it is noteworthy that $t$ is the iteration index of the EM-based algorithm.}
Then, the closed-form expressions of the estimated channel information are stated in the following Lemmas \ref{le:a_phi} and \ref{le:sigma}.
\label{sec:EM-based}
\begin{figure}[!t]
\centering
\includegraphics[width=5.5in]{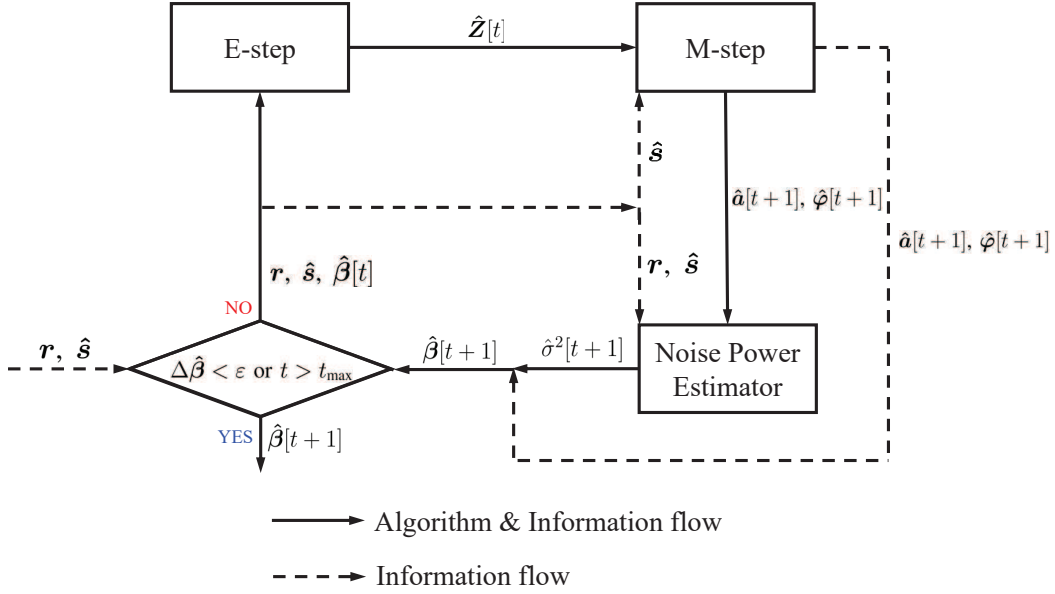}
\caption{The structure of the blind channel and noise power estimator.}
\label{fig:EM_module}
\end{figure}
\begin{lemma}
\label{le:a_phi}
\em{Given the modulated symbols ${\hat s}_j$, $j\in\mathcal I_N$, in the EM algorithm, the estimated channel gain ${\hat a}_{\ell}[t+1]$ and the estimated channel phase ${\hat\varphi}_{\ell}[t+1]$ in iteration $t+1$ are updated by}
\begin{equation}
\label{LYat}
{\hat a}_{\ell}[t+1] = \frac{1}{{{P}}}\sum\limits_{j\in\mathcal{I}_{N}} {\Re \big(
{{\hat s_{j-\ell}^{\ast}}{\hat{z}_{\ell,j}[t]}{e^{ - \imath{\hat\varphi}_{\ell}[t+1]}}} \big)}, \quad \ell\in\mathcal{I}_{L}-1
\end{equation}
\begin{equation}
\label{LYphit1}
{\hat\varphi}_{\ell}[t+1] = {\tan ^{ - 1}}\frac{\Im
 \Big( \sum_{j\in\mathcal{I}_N}{\hat s_{j-\ell}^{\ast}}{\hat{z}_{\ell,j}[t]} \Big)}
 {{\Re \Big( \sum_{j\in\mathcal{I}_N}{\hat s_{j-\ell}^{\ast}}{\hat{z}_{\ell,j}[t]} \Big)}},
 \quad \ell\in\mathcal{I}_{L}-1.
\end{equation}
\end{lemma}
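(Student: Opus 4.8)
The plan is to recognize that the two update equations are precisely the M-step of the EM iteration, so everything reduces to maximizing the expected complete-data log-likelihood (the $Q$-function) over $(a_\ell,\varphi_\ell)$ for each path $\ell$ separately. First I would invoke the per-path decomposition of the complete data promised in Appendix~\ref{app:EM}, by which the observation splits as $r_j=\sum_{\ell}a_\ell e^{\imath\varphi_\ell}\hat s_{j-\ell}+v_j$ with the noise allocated across paths so that the complete-data log-likelihood is a sum of per-path Gaussian terms. In the E-step, conditioning on $r_j$ and the current estimate $\boldsymbol\beta[t]$ produces the statistic $\hat z_{\ell,j}[t]=\mathbb E\{z_{\ell,j}\mid r_j;\boldsymbol\beta[t]\}$; because a quadratic in $z_{\ell,j}$ is being averaged, the conditional variance appears as an additive constant independent of $(a_\ell,\varphi_\ell)$. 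Hence maximizing the $Q$-function is equivalent to minimizing, for each $\ell$, the least-squares cost $J_\ell(a_\ell,\varphi_\ell)=\sum_{j\in\mathcal I_N}\big|\hat z_{\ell,j}[t]-a_\ell e^{\imath\varphi_\ell}\hat s_{j-\ell}\big|^2$.

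The core of the argument is then the stationarity analysis of $J_\ell$. Expanding the squared modulus gives $J_\ell=\sum_j|\hat z_{\ell,j}[t]|^2-2a_\ell\,\Re\!\big(w_\ell e^{-\imath\varphi_\ell}\big)+a_\ell^2 P$, where I abbreviate $w_\ell=\sum_{j\in\mathcal I_N}\hat s_{j-\ell}^\ast\hat z_{\ell,j}[t]$ and use $P=\sum_j|\hat s_{j-\ell}|^2$. Setting $\partial J_\ell/\partial a_\ell=0$ yields $a_\ell=\tfrac1P\Re(w_\ell e^{-\imath\varphi_\ell})$, which is exactly \eqref{LYat} once $\varphi_\ell$ is replaced by its updated value. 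For the phase, I would differentiate with respect to $\varphi_\ell$, obtaining $\partial J_\ell/\partial\varphi_\ell=-2a_\ell\,\Im(w_\ell e^{-\imath\varphi_\ell})$; setting this to zero forces $\Im(w_\ell e^{-\imath\varphi_\ell})=0$, i.e. $\Im(w_\ell)\cos\varphi_\ell=\Re(w_\ell)\sin\varphi_\ell$, which rearranges to $\tan\varphi_\ell=\Im(w_\ell)/\Re(w_\ell)$ and hence \eqref{LYphit1}.

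An equivalent and perhaps cleaner route, which I would keep in reserve to pin down the correct branch of the arctangent, is to treat the complex gain $g_\ell=a_\ell e^{\imath\varphi_\ell}$ as a single unknown. Then $J_\ell$ is an ordinary linear least-squares problem whose minimizer is $\hat g_\ell=w_\ell/P$; reading off modulus and argument gives $a_\ell=|w_\ell|/P$ and $\varphi_\ell=\arg w_\ell$, which coincides with the two displayed formulas because $\Re(w_\ell e^{-\imath\arg w_\ell})=|w_\ell|$. This reformulation also makes transparent that the stationary point is the \emph{global} minimizer of the cost, which is convex in $g_\ell$, and that among the two phase roots $\arg w_\ell$ and $\arg w_\ell+\pi$ one must pick the former to keep $a_\ell\ge 0$.

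The step I expect to be the main obstacle is not the calculus but the justification that the per-path least-squares costs decouple and that $\sum_j|\hat s_{j-\ell}|^2$ may be treated as the common constant $P$ for every $\ell$; the former hinges entirely on the particular choice of complete data $\hat z_{\ell,j}[t]$ (deferred to Appendix~\ref{app:EM}), and the latter relies on the boundary convention $\hat s_{k}=0$ for $k\le 0$ together with $N\gg L$, so that the edge terms dropped by the shift are negligible. Once these modeling facts are granted, the derivation of \eqref{LYat} and \eqref{LYphit1} is the routine two-variable stationarity computation sketched above.
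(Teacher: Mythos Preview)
Your proposal is correct and follows essentially the same route as the paper: set up the per-path Gaussian complete-data model, reduce the M-step to minimizing $\sum_j|\hat z_{\ell,j}[t]-a_\ell e^{\imath\varphi_\ell}\hat s_{j-\ell}|^2$ over $(a_\ell,\varphi_\ell)$, and solve by stationarity. The only cosmetic difference is that the paper obtains \eqref{LYphit1} by first solving for $a_\ell$ via $\partial/\partial a_\ell=0$ and then substituting that expression back into the cost before optimizing over $\varphi_\ell$, whereas you take $\partial/\partial\varphi_\ell=0$ directly; both yield the same stationary equations, and your complex-gain reformulation $g_\ell=a_\ell e^{\imath\varphi_\ell}$ is a clean way to confirm the correct branch and global optimality, which the paper leaves implicit.
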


\makeatletter
\renewenvironment{proof}[1][\proofname]{\par%
\pushQED{\qed}%
\normalfont \topsep6\p@\@plus6\p@\relax%
\trivlist%
\item[\hskip\labelsep%
#1]\ignorespaces%
}{%
\popQED\endtrivlist\@endpefalse%
}
\makeatother
\begin{proof}[\quad\it{Proof:}]
See Appendix \ref{app:EM}.
\end{proof}

\begin{lemma}
\label{le:sigma}
\em{Given the estimated channel gain ${\hat a}_\ell[t+1]$ in \eqref{LYat} and the channel phase ${\hat \varphi}_\ell[t+1]$ in \eqref{LYphit1}, $\ell\in\mathcal I_{L}-1$, the updated noise power ${\hat\sigma}^2[t+1]$ in iteration $t+1$ is given by}
\begin{align}
\label{sigma}
{\hat\sigma}^2[t+1] = \frac{1}{N}\sum_{j\in\mathcal{I}_{N}}\bigg|r_{j} - \sum_{\ell\in\mathcal{I}_{L}-1}{\hat a}_{\ell}[t+1]e^{\imath{\hat\varphi}_{\ell}[t+1]}
\hat s_{j-\ell}\bigg|^2.
\end{align}
\end{lemma}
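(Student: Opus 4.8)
The plan is to obtain \eqref{sigma} as the maximizer of the log-likelihood \eqref{Lbeta} over the noise power $\sigma^2$ alone, after the channel gains and phases have already been fixed at the iteration-$(t+1)$ values $\hat a_\ell[t+1]$ and $\hat\varphi_\ell[t+1]$ supplied by Lemma \ref{le:a_phi}. Concretely, I would first substitute those channel estimates into the reconstructed noise-free signal and introduce the per-symbol residual $\hat\epsilon_j = r_j - \sum_{\ell\in\mathcal{I}_L-1}\hat a_\ell[t+1]e^{\imath\hat\varphi_\ell[t+1]}\hat s_{j-\ell}$, so that the only free variable remaining in \eqref{Lbeta} is $\sigma^2$. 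Carrying the full complex-Gaussian normalizer $(\pi\sigma^2)^{-N}$, the profile objective reduces, up to an additive constant, to $g(\sigma^2)=-N\ln(\sigma^2)-\sigma^{-2}\sum_{j\in\mathcal{I}_N}|\hat\epsilon_j|^2$.

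The remaining steps are a routine one-dimensional optimization. I would treat $u=\sigma^2$ as a single scalar variable and differentiate, obtaining the stationarity condition $\partial g/\partial u = -N/u + u^{-2}\sum_{j}|\hat\epsilon_j|^2 = 0$. Solving this equation gives the unique positive root $u = \frac{1}{N}\sum_{j}|\hat\epsilon_j|^2$, which, after unfolding the definition of $\hat\epsilon_j$, is exactly \eqref{sigma}. To confirm that this root is the maximizer rather than a saddle or minimum, I would note that $g(u)\to-\infty$ both as $u\to0^{+}$ and as $u\to\infty$, so the single interior stationary point must be the global maximizer; a sign check of the second derivative at the root makes this rigorous. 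This establishes \eqref{sigma} as the maximum-likelihood noise-power update given the channel estimates.

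The point that needs genuine care, and which I expect to be the main obstacle, is the choice of which likelihood to maximize for $\sigma^2$. Unlike the gains and phases in Lemma \ref{le:a_phi}, which are updated through the per-path complete-data surrogate of the EM algorithm, the noise power in \eqref{sigma} must be obtained from the observed-data log-likelihood \eqref{Lbeta} directly, equivalently by profiling it over $\sigma^2$ after inserting the channel estimates. Were one to differentiate a per-path complete-data surrogate instead, the splitting of the noise across paths would introduce per-path weights and a different normalization, and the clean residual-variance form with coefficient $1/N$ would not emerge. Hence the argument must make explicit that $\hat a_\ell[t+1]$ and $\hat\varphi_\ell[t+1]$ are held fixed as known quantities, that the observed-data likelihood is used, and that the complex-Gaussian normalizer $(\pi\sigma^2)^{-N}$ is retained in full so that the leading coefficient comes out as $1/N$ rather than $2/N$.
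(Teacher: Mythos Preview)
Your proposal is correct and arrives at \eqref{sigma} by a valid maximum-likelihood argument. The paper's own proof is considerably less formal: it simply reconstructs the noise-free signal from the updated channel estimates and the regenerated symbols, subtracts this from $r_j$ to isolate the noise samples, and then takes the empirical mean of their squared magnitudes as the noise-power estimate---essentially asserting the residual-variance formula without writing down an optimization. Your route via profiling the observed-data log-likelihood over $\sigma^2$ is a more rigorous justification of the same update, and your explicit remark that one must use the observed-data likelihood \eqref{Lbeta} rather than the per-path complete-data surrogate (which would bring in the decomposition weights $w_\ell$ and spoil the clean $1/N$ coefficient) is a point the paper does not make explicit. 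In short, both arguments lead to the same estimator; yours supplies the derivation the paper omits.
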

\begin{proof}[\quad\it{Proof:}]
See Appendix \ref{app:EM}.
\end{proof}

\begin{algorithm}[!t]
\caption{EM-based Channel Estimation}
\begin{algorithmic}[1]
\label{algo:EM_based}
\STATE{{\bf{Init:}}  $\boldsymbol{\hat{\beta}}$;}
\WHILE{the variation of the estimated channel is more than  $\varepsilon$ or the number of iterations does not exceed the maximum threshold}
\STATE{Update ${\hat a}_{\ell}$ and ${\hat\varphi}_{\ell}$, $\ell\in\mathcal{I}_{L}-1$, according to \eqref{LYat} and \eqref{LYphit1}, respectively;}
\STATE{Update ${\hat\sigma}^2$ according to \eqref{sigma};}
\ENDWHILE
\STATE{Update $\boldsymbol{\hat\beta}$, and then, feedback the updated $\boldsymbol{\hat\beta}$ to the Bayes equalizer module and the stop criteria module in the BERD receiver, respectively;}
\end{algorithmic}
\end{algorithm}

Given the Lemmas \ref{le:a_phi} and \ref{le:sigma}, the EM-based algorithm iterates between the E-step and M-step until the stop criteria are satisfied, i.e., $\Delta \boldsymbol{\hat{\beta}}={\big\|\boldsymbol{\hat{\beta}}[t+1]-\boldsymbol{\hat{\beta}}[t]\big\|^2}<\varepsilon$  or $t>t_{\text{max}}$.
Note that, in our blind channel and noise power estimation  problem, the E-step is actually used to determine the complete data ${\hat z}_{\ell,j}[t]$, $\ell\in\mathcal{I}_L-1$, $j\in\mathcal{I}_N$, which is given by \eqref{LYcomplete} in Appendix \ref{app:EM}.
The M-step can further update the estimates of the channel gain ${\hat a}_{\ell}[t+1]$ in \eqref{LYat} and the channel phase ${\hat \varphi}_{\ell}[t+1]$ in \eqref{LYphit1}, $\ell\in\mathcal{I}_L-1$.
Then, the noise power ${\hat\sigma}^2[t+1]$ is updated from \eqref{sigma}.
The convergence of the proposed EM-based estimation algorithm is analyzed in Appendix \ref{app:convergence}.

Update $\boldsymbol{\hat\beta}$ when the iteration of the EM-based algorithm stops.
The blind channel and noise power estimator outputs the estimated $\boldsymbol{\hat\beta}$ as the input of the Bayes equalizer to update the modulated symbol $\boldsymbol{\hat s}$ as Section \ref{sec:DEC}.
Meanwhile, the updated $\boldsymbol{\hat\beta}$ also serves as the input of the stop criteria module in the BERD receiver.
The algorithm and the information flow of the proposed blind channel and noise power estimator are shown in Figure \ref{fig:EM_module}.
Moreover, the proposed EM-based channel estimation algorithm is summarized in Algorithm \ref{algo:EM_based}.

\subsection{Initial of the EM-based Scheme}
\label{sec:initial}
The result achieved by the EM algorithm highly depends on the initial.
With a poor initial, the EM algorithm may converge to a local optimal solution far away from the global optimal one.
Thus, we first determine the initial of the unknown parameter $\boldsymbol \beta$.
In the literature, there are various methods to initialize the EM algorithm, such as the random restart \cite{ref:random-start}, the coarse grid search over the parameter space \cite{ref:modut2}, and the simulated annealing \cite{ref:SA1}.
However, in our problem, we consider the multipath channel of $L$ paths with the unknown parameters including the channel gain $\boldsymbol {a}$, the channel phase $\boldsymbol {\varphi}$, and the noise power ${\sigma}^2$, which need to be initialized simultaneously.
Hence, the dimension of the initial values is $2L+1$.\footnote{The typical values of $L$ could be $4$ \cite{FC1}, $6$ \cite{ref:ZJW}, etc.}
It is improper to adopt the random restart, the coarse grid search, or the simulated annealing as the initialization method since the computational complexity is exponential w.r.t. $L$, which is extremely involved.

To facilitate the initialization with a mild complexity and a good estimation performance, we adopt two initialization methods.
First, we initialize $\boldsymbol\beta$ by the true value of it with some bias, which is widely adopted in the literature, such as \cite{TWC[26],ref:ZJW,ref:ZJW-ICC}.
The initial channel gain, the initial channel phase, and the initial noise power are assumed to be uniformly distributed in the regions $(0,a_{\ell}+\delta_{a}]$, $[\varphi_{\ell}-\delta_{\varphi},\varphi_{\ell}+\delta_{\varphi}]$, and $(0,\sigma^2+\delta_{\sigma}]$, respectively, where $\delta_{a}$, $\delta_{\varphi}$, and $\delta_{\sigma}$ are the maximum biases for the unknown parameters, respectively.
Second, we apply a modified fourth-order moment-based method \cite{FC2,ref:ZJW} to initialize the unknown multipath channel and adopt the coarse grid search to initialize the noise power.
From \cite{FC2,ref:ZJW}, the fourth-order moment of the received signal is defined as $m_4^{{r}}(\kappa_1,\kappa_2,\kappa_3,\kappa_4)=\frac{1}{N}{\sum_{j\in\mathcal I_N}{r}_{j+\kappa_1}{r}_{j+\kappa_2}{r}_{j+\kappa_3}{r}_{j+\kappa_4}}$.
The normalized multipath channel coefficient $\hat {h}_{\ell}$ of the $\ell$th path is estimated by
\begin{align} \label{eq:initial_h}
\hat {h}_{\ell}
=\frac{m_4^{{r}}(\kappa,\kappa,\kappa,\kappa)}{m_4^{{r}}(\kappa,\kappa,\kappa,\ell)},
\quad \ell\in\mathcal{I}_{L}-1.
\end{align}
With the loss of generality, the leading path with $\kappa=0$ is assumed to be the dominant path.
Then, the initial values of the channel gain $a_\ell$ and the channel phase $\varphi_\ell$ can be determined directly from \eqref{eq:initial_h}, i.e., ${\hat a}_\ell = |{\hat h}_\ell|$ and ${\hat\varphi}_\ell =\angle {\hat h}_\ell$, $\ell \in \mathcal I_L-1$.
Then, we apply the coarse grid search algorithm \cite{ref:modut2} to initialize the noise power $\sigma^2$.
The parameter space of the coarse grid search is set to $(0,\frac 1 N \sum_{j\in\mathcal I_N }|{r}_j|^2)$ with a search step size $\alpha$.
By evaluating the log-likelihood function in \eqref{Lbeta} with the initial channel state information and the noise power of each grid in the parameter space, the initial of the noise power can be determined from \eqref{LYbeta}.

\section{Soft-information Detector and Regenerator}
\label{sec:DEC}
\begin{figure}[!t]
\centering
\includegraphics[width=4.5in]{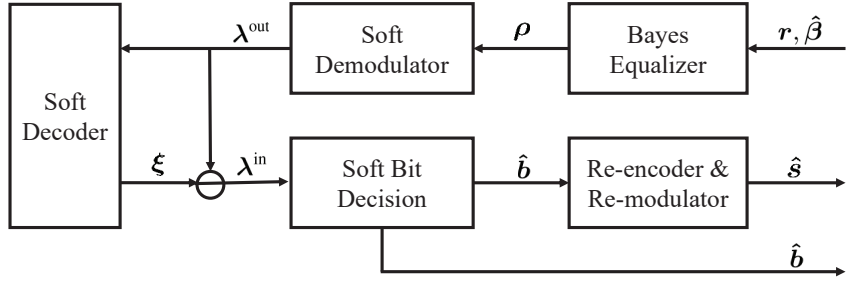}
\caption{The structure of the soft-information detector and regenerator.}
\label{fig:DEC_module}
\end{figure}
In this section, we first introduce the soft-information detector to determine the unknown information bits, which is the ultimate goal of this task. Then, the re-modulator and re-encoder module is introduced to obtain the modulated symbols $\boldsymbol{\sf s}$, which is required to estimate the unknown channel information in \eqref{LYat},  \eqref{LYphit1} and \eqref{sigma}.
The soft-information detector is composed of
the Bayes equalizer module, the module of the soft demodulator, the soft decoder, and the soft bit decision, and the re-encoder and re-modulator module, which are shown in Figure \ref{fig:DEC_module}.
In the following, we introduce the detection and the regeneration process.

First, we employ the Bayes equalizer to equalize the multipath channel.
Define ${\rho}_{m,j}$ as the posterior probability of the constellation point $\mu_m$ in $\mathcal S$ given the $j$th received symbol and the previous $L-1$ modulated symbols.
Then, ${\rho}_{m,j}$ is expressed as
\begin{figure}[!t]
\centering
\subfigure[]{
\label{fig:M-ary-QAM}
\includegraphics[width=2.8in]{{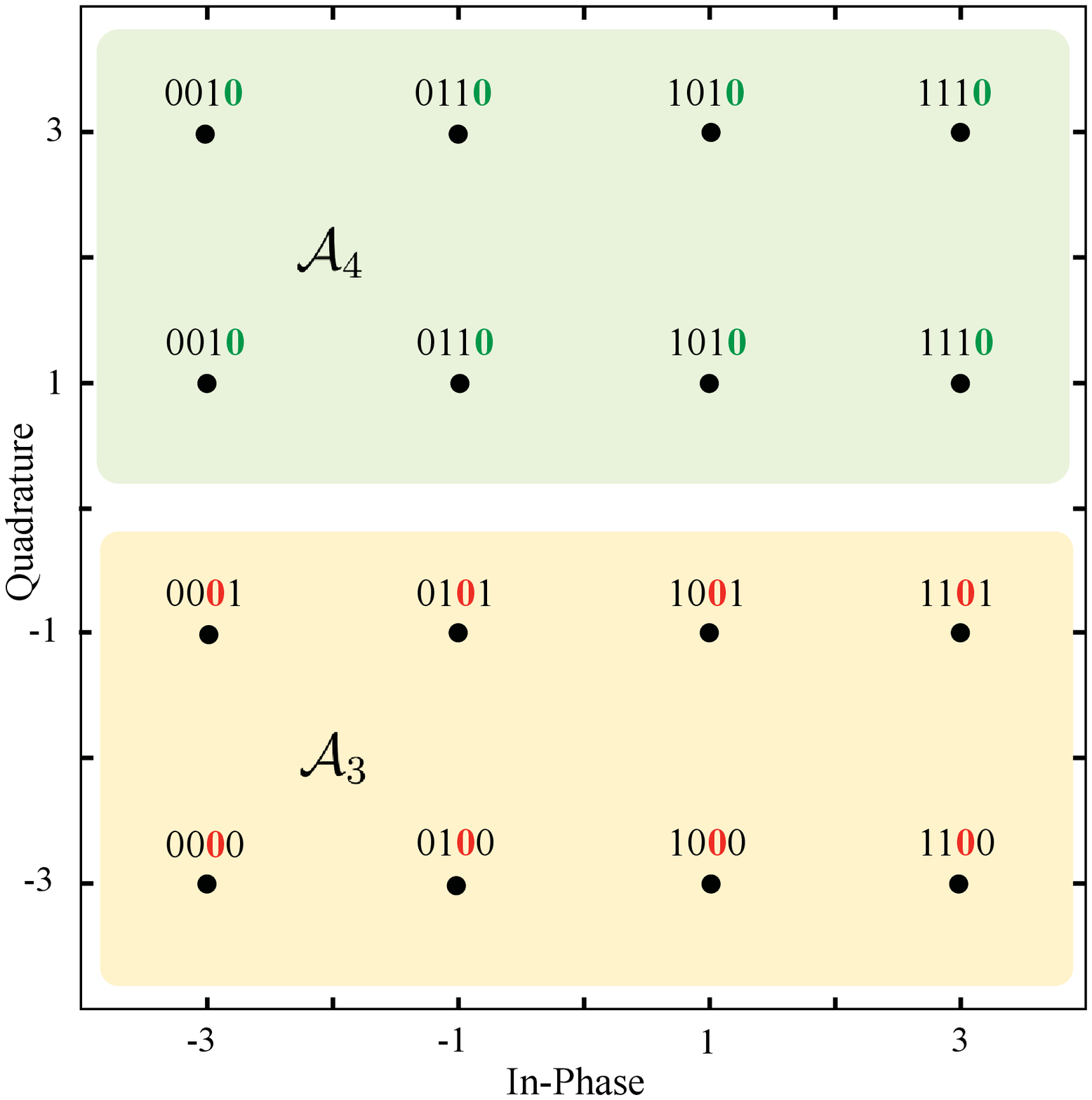}}}
\hspace{0.2in}
\subfigure[]{
\label{fig:M-ary-PSK}
\includegraphics[width=3.1in]{{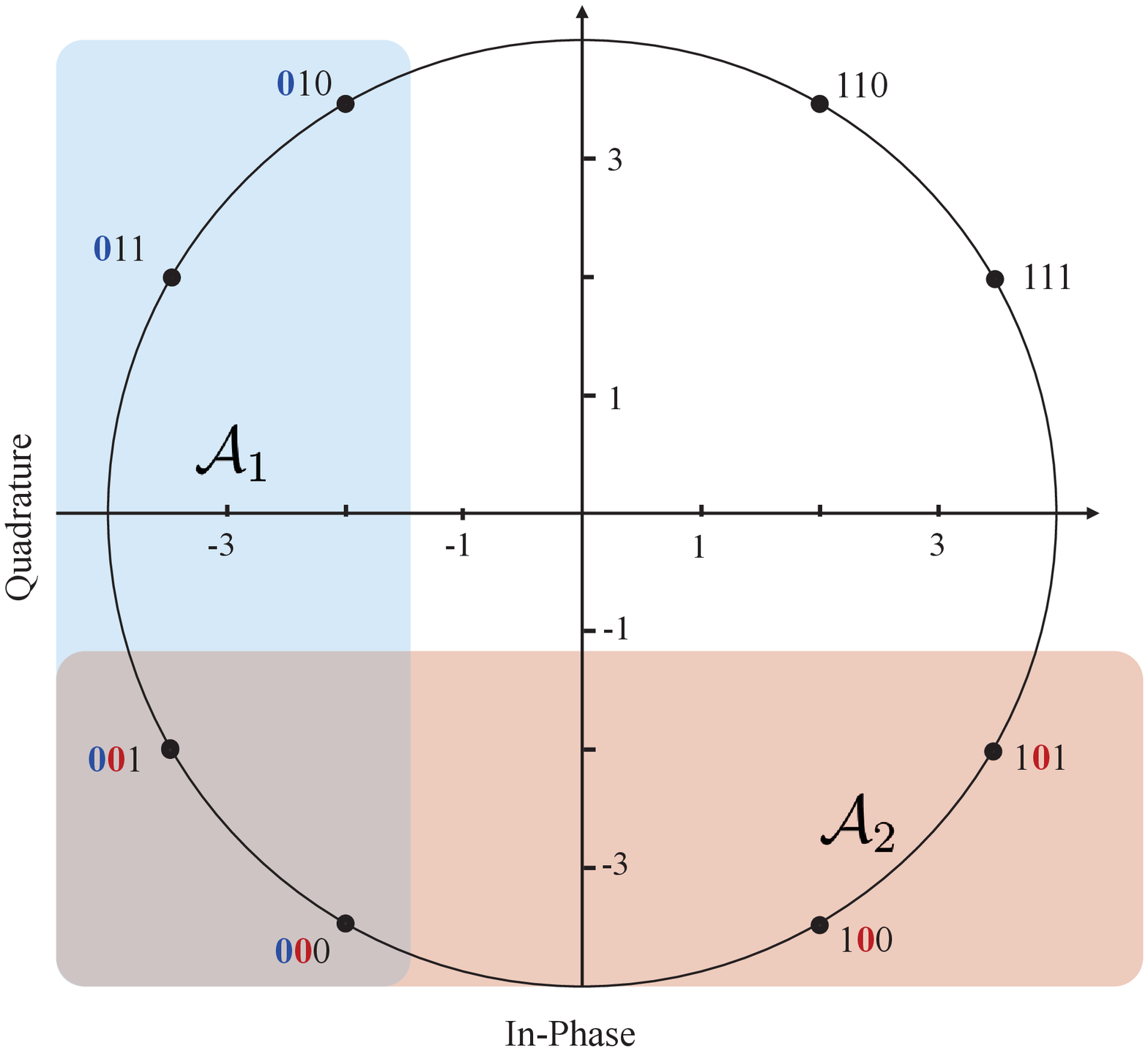}}}
\caption{The diagram of the constellation of $16$-QAM and $8$-PSK. (a) For $16$-QAM modulation, $\mathcal{A}_3$ and $\mathcal{A}_4$ represent the sets which contain all the constellation points with the third and the fourth bit being $0$. (b) For $8$-PSK modulation, $\mathcal{A}_1$ and $\mathcal{A}_2$ represent the sets which contain all the constellation points with the first and the second bit being $0$.}
\label{fig:M-ary}
\end{figure}
\begin{align}
\label{eq:rho_posteriori}
 {\rho}_{m,j}
&= {p}( {{\sf s}_{j}
= {{\mu}_m}| {r_j,\boldsymbol{\hat{s}}_{j-L+1}^{j-1};\boldsymbol{\hat{\beta}}} })\\
& = \frac{{{p}\big( {r_j|{{{\sf s}_{j}} = }{{\mu}_m},\boldsymbol{\hat{s}}_{j-L+1}^{j-1};\boldsymbol{\hat{\beta}}} \big)p\big( {{{\sf s}_{j}} = {{\mu}_m}|\boldsymbol{\hat{s}}_{j-L+1}^{j-1}; {\boldsymbol{\hat{\beta}}}} \big)}}{{\sum_{\mu_{m^{\prime}}\in\mathcal S}
{{p}\big( {{r}_j| {{{\sf s}_{j}} = } {{\mu}_{m^{\prime}}},\boldsymbol{\hat{s}}_{j-L+1}^{j-1};\boldsymbol{\hat{\beta}}} \big)p\big( {{{\sf s}_{j}}
= {{\mu}_{m^{\prime}}}|\boldsymbol{\hat{s}}_{j-L+1}^{j-1}; {\boldsymbol{\hat{\beta}}}} \big)} }} \label{eq:posti-and-modsym-bayes-rule}\\
& = \frac{{{p}\big( {r_j| {{\sf s}_{j} = }{{\mu}_m},\boldsymbol{\hat{s}}_{j-L+1}^{j-1};\boldsymbol{\hat{\beta}}} \big)}}{{\sum_{\mu_{m^{\prime}}\in\mathcal S}
{{p}\big( {r_j| {{\sf s}_{j} = } {{\mu}_{m^{\prime}}},\boldsymbol{\hat{s}}_{j-L+1}^{j-1};\boldsymbol{\hat{\beta}}} \big)} }},
\quad \mu_m\in\mathcal S, \quad j\in\mathcal{I}_{N}.
\label{eq:posti-and-modsym}
\end{align}
The equalization of \eqref{eq:posti-and-modsym-bayes-rule} follows from the Bayes rule, and
\eqref{eq:posti-and-modsym} is obtained by assuming that the transmitted symbols ${\sf s}_j$, $j\in\mathcal{I}_N$, are independent and each constellation point $\mu_m\in\mathcal S$ has an equal prior probability, i.e., $p\big( {\sf s}_{j} = {{\mu}_m}\big)=\frac{1}{|\mathcal{S}|}$.
Moreover, ${p}\big( {r_j| {{\sf s}_{j} = }{{\mu}_m},\boldsymbol{\hat{s}}_{j-L+1}^{j-1};\boldsymbol{\hat{\beta}}} \big)$ is given by
\begin{align}
\label{eq:p(rj|sj,s,beta)}
{p}\big( {r_j| {{\sf s}_{j} = }{{\mu}_m},\boldsymbol{\hat{s}}_{j-L+1}^{j-1};\boldsymbol{\hat{\beta}}} \big) \propto {\frac{1}{{\hat\sigma}} \exp \Bigg({-\frac{1}{{\hat\sigma^2}}{{{{\Big| {{r_{j}} - f({{s_{j}}=\mu_m})} \Big|}^2}}}} \Bigg)},\quad \mu_m\in\mathcal S, \quad j\in\mathcal{I}_{N}
\end{align}
and
\begin{equation}
\label{eq:f(sj)}
f( {{\sf s}_{j}}=\mu_m) = {{\hat a}_{0}}{e^{\imath{\hat\varphi_{0}}}}{\mu_{m}} + \sum_{\ell\in\mathcal{I}_{L-1}}{{{\hat a}_{\ell}}{e^{\imath{\hat\varphi_{\ell}}}}{\hat s}_{j-\ell}},\quad \mu_m\in\mathcal S, \quad j\in\mathcal{I}_{N}.
\end{equation}
Note that, to compute ${\rho}_{m,j}$ in \eqref{eq:posti-and-modsym}, $\boldsymbol{\hat{s}}_{j-L+1}^{j-1}=[{\hat s}_{j-L+1}, {\hat s}_{j-L+2},\ldots,{\hat s}_{j-1}]^{\mathrm T}$ are needed, which can be determined by
\begin{equation}
\label{LYsym}
{\hat s}_{j-\ell}= \sum_{\mu_m\in\mathcal S} {{{\mu}_{m}}{\rho}_{m,j-\ell}},
\quad j\in\mathcal{I}_{N}, \quad \ell\in\mathcal{I}_{L-1}.
\end{equation}

Hereafter, we adopt a soft demodulator to recover the symbols $\boldsymbol{\hat s}=[{\hat s}_1,{\hat s}_2,\ldots,{\hat s}_N]^{\mathrm T}$ by using the output  $\boldsymbol{\rho}=[{\rho}_{1,1},{\rho}_{2,1},\ldots,{\rho}_{\log|\mathcal S|,1},{\rho}_{1,2},\ldots,{\rho}_{\log|\mathcal S|,N}]^{\mathrm T}$ from the Bayes equalizer.
In general, a constellation point $\mu_m\in{\mathcal{S}}$ is corresponding to $\log|\mathcal S|$ coded bits.
We first define the coded bits as $\boldsymbol {\sf C} =[\boldsymbol {\sf c}_1,\boldsymbol {\sf c}_2,\ldots,\boldsymbol {\sf c}_N]\in\mathbb{Z}_2^{{\log |\mathcal S|}\times N}$, and $\boldsymbol {\sf c}_j=[{\sf c}_{j,1},{\sf c}_{j,2},\ldots,{\sf c}_{j,{\log |\mathcal S|}}]^{\mathrm T}\in\mathbb{Z}_2^{{\log |\mathcal S|}}$;
each $\boldsymbol {\sf c}_j$ maps to a constellation point in $\mathcal S$.
To describe the output of the soft demodulator explicitly, we define the constellation set $\mathcal{A}_g\subseteq\mathcal S$, $g\in\mathcal{I}_{\log |\mathcal S|}$, which contains all the constellation points with ${\sf c}_{j,g}=0$, $g\in\mathcal{I}_{\log |\mathcal S|}$.
Two examples are provided in Figure \ref{fig:M-ary}.
Then, the output posterior probability LLR ${\lambda}^{\text{out}}_{j,g}$ of the soft demodulator is denoted by
\begin{align}
{\lambda}^{\text{out}}_{j,g}& =\ln \frac{p({\sf c}_{j,g} =0|{r}_j,\boldsymbol{\hat{s}}_{j-L+1}^{j-1};\boldsymbol{\hat{\beta}})}
{p({\sf c}_{j,g}=1|{r}_j,\boldsymbol{\hat{s}}_{j-L+1}^{j-1};
\boldsymbol{\hat{\beta}})} \\
& =\ln \frac{\sum_{{\mu}_m\in\mathcal{A}_g}{\rho}_{m,j}}{ 1-\sum_{{\mu}_m\in\mathcal{A}_g} {\rho}_{m,j}},
\quad j\in\mathcal{I}_{N}, \quad g\in\mathcal{I}_{\log |\mathcal S|}.
\label{eq:lemeda_out}
\end{align}
After the soft demodulator, the output ${\boldsymbol\lambda}^{\text{out}}=[{\lambda}^{\text{out}}_{1,1},{\lambda}^{\text{out}}_{1,2},\ldots,{\lambda}^{\text{out}}_{1,\log|\mathcal S|},{\lambda}^{\text{out}}_{2,1},\ldots,{\lambda}^{\text{out}}_{N,\log|\mathcal S|}]^{\mathrm T}$ serves as the input of the soft decoder.
Define the information bits as $\boldsymbol{\sf b}=[{\sf b}_{1},{\sf b}_{2},\ldots,{\sf b}_{q}]^{\mathrm T}\in\mathbb{Z}_{2}^q$.
Then, the soft decoder outputs the posterior probability LLR  $\boldsymbol\xi=[\xi_{1},\xi_{2},\ldots,\xi_{q}]^{\mathrm T}$ as\footnote{In this section, we adopt the LDPC as an example and assume perfect synchronization which can be achieved by \cite{TB3}.
Thus, the $q$ information bits are encoded into $n$ coded bits, and then, are mapped to $N=\frac{n}{\log|\mathcal S|}$ modulated symbols.
Note that if $n$ cannot be divisible by $\log|\mathcal S|$, we can pad zero to guarantee that $\log |\mathcal S|$ divides $n$.
This operation is easy and trivial, and hence, we directly assume that $\log|\mathcal S|$ divides $n$.
In addition, the outputs $\xi_{l}$, $l\in\mathcal{I}_{q}$, of the soft decoder are determined by using the belief propagation algorithm in \cite{LLR1}.}
\begin{algorithm}[!t]
\caption{Soft-information Detection and Regeneration Algorithm}
\begin{algorithmic}[1]
\label{algo:re-sym}
\STATE{Compute $\boldsymbol\rho$ according to \eqref{eq:posti-and-modsym};}
\STATE{Compute ${\boldsymbol{\lambda}}^{\text{out}}$ according to \eqref{eq:lemeda_out};}
\STATE{Compute ${\boldsymbol\xi}$ according to \eqref{eq:BP-decode} and update ${\boldsymbol{\lambda}}^{\text{in}}$ by \eqref{eq:lemeda_in}, and compute the information bits ${\boldsymbol{\hat{b}}}$ by \eqref{eq:map_rule};}
\STATE{Regenerate $\boldsymbol{\hat {s}}$ in the re-encoder and re-modulator module, and output $\boldsymbol{\hat {b}}$ and $\boldsymbol{\hat {s}}$ to the stop criteria module in the BERD receiver.}
\end{algorithmic}
\end{algorithm}
\begin{align}
\label{eq:BP-decode}
{\xi}_l&=\ln \frac{p({\sf b}_l=0|{\Theta}^{\prime},\boldsymbol{\lambda}^{\text{out}})}
{p({\sf b}_l=1|{\Theta}^{\prime},\boldsymbol{\lambda}^{\text{out}})},
\quad l\in\mathcal{I}_{q}
\end{align}
where ${\Theta}^{\prime}$ is the parity-check relation in the hypothesis ${\theta^{\prime}}$.
Let $\boldsymbol{\tilde\lambda^{\text{out}}}=[{\tilde\lambda^{\text{out}}}_1,{\tilde\lambda^{\text{out}}}_2,\ldots,{\tilde\lambda^{\text{out}}}_q]^{\mathrm T}$, and the elements in $\boldsymbol{\tilde\lambda^{\text{out}}}$ equal to the first $q$ elements in $\boldsymbol{\lambda^{\text{out}}}$.
Furthermore, the updated extrinsic message ${\boldsymbol{\lambda}}^{\text{in}}=[{{\lambda}}^{\text{in}}_{1},{{\lambda}}^{\text{in}}_{2},\ldots,{{\lambda}}^{\text{in}}_q]^{\mathrm T}$ is given by
\begin{align}
\label{eq:lemeda_in}
{{\lambda}}^{\text{in}}_{l}= {\xi}_{l}-{\tilde\lambda}^{\text{out}}_{l},
\quad l\in\mathcal{I}_{q}.
\end{align}
After a hard decision of the soft bits, the information bits $\boldsymbol {\hat b}=[\hat b_1,\hat b_2,\ldots,\hat b_q]^{\mathrm T}$ can be obtained as
\begin{equation}
\hat{b}_{l}=\left\{
\begin{array}{rcl}
\label{eq:map_rule}
0 & & {\text{if ${{\lambda}_{l}^{\text{in}}}>0$}}\\
1 & & {\text{otherwise}, \quad l\in\mathcal I_q}.\\
\end{array} \right.
\end{equation}

Finally, the detected information bits $\boldsymbol {\hat b}$ input to the re-encoder and the re-modulator module to regenerate the modulated symbols $\boldsymbol{\hat s}$.
In addition, the outputs $\boldsymbol {\hat b}$ and $\boldsymbol{\hat s}$ serve as the inputs of the stop criteria module in the BERD receiver.
If the stop criteria are not satisfied, the updated $\boldsymbol{\hat s}$ is then fed to the blind channel and noise power estimator to update the estimate of the channel information $\boldsymbol{\hat\beta}$ in the next iteration.
The proposed soft-information detection and regeneration algorithm is summarized in Algorithm \ref{algo:re-sym}.\footnote{Note that the proposed soft-information detector and regenerator is not necessarily optimal or the fastest convergent since the extrinsic information is not eliminated in each operation during the detection and regeneration process.
The optimal or the fastest convergence detector and regenerator will be investigated in our future work.}

\section{Multistage Likelihood Decision Module}
\label{sec:multistage-decision}
We propose a multistage likelihood decision module to determine the information bits $\boldsymbol{\sf b}$, the adopted MCS $\theta=\{\eta, \zeta\}$, the unknown multipath channel and the noise power $\boldsymbol {\beta}$, as shown in Figure $\ref{fig:system_model}$(b).
To elaborate on the decision procedure explicitly, the hypothesis MCS candidate $\theta^{\prime}=\{\eta^{\prime},\zeta^{\prime}\}\in\mathcal{M}\times\mathcal{C}$ is used as the superscript of the decision metrics in this module.
In general, we first decide the modulation format, then, make the channel coding decision.
Finally, the information bits and the channel information are correspondingly determined.
In the following, we introduce the decision process.

\subsection{Modulation Decision}
In the modulation decision, the log-likelihood probability  $\mathcal F(\boldsymbol{\hat\beta}^{\eta^{\prime},\zeta^{\prime}})$, $\eta^{\prime}\in\mathcal{M}$, $\zeta^{\prime}\in\mathcal{C}$, in \eqref{Lbeta} is used as the modulation decision metric and the ML algorithm is adopted as the modulation classifier, which is denoted by
\begin{equation}
\label{eq:vote_modulation}
{{\tilde\eta}^{\zeta^{\prime}}}=\arg \max \limits_{\eta^{\prime}\in\mathcal{M}}{\mathcal F}(\boldsymbol {\hat\beta}^{\eta^{\prime},\zeta^{\prime}}),\quad \zeta^{\prime}\in\mathcal{C}.
\end{equation}
Hence, using ${{\tilde\eta}^{\zeta^{\prime}}}$, $\zeta^{\prime}\in\mathcal C$, we further determine the final decision of $\hat \eta$ by the majority vote in $\mathcal {M}$.
Hereafter, we need to recognize the channel coding scheme in $\mathcal C$ given the modulation $\hat\eta$.

\subsection{Channel Coding Decision}
In the channel coding decision, the average LLR of the syndrome APP is employed as the decision metric.
To derive this average LLR, we first provide the definition of the syndrome.
Given the channel coding
$\zeta^{\prime}\in\mathcal C$ and denote a non-zero vector $\boldsymbol{\pi}_{i}^{\hat\eta,{\zeta}^{\prime}}$ as the indices of the non-zero entries in the $i$th row of the parity-check matrix $\boldsymbol{H}^{\hat\eta,{\zeta}^{\prime}}$, i.e.,
${\boldsymbol \pi}_{i}^{\hat\eta,{\zeta}^{\prime}}=[\pi_{i}^{\hat\eta,{\zeta}^{\prime}}(1),\pi_{i}^{\hat\eta,{\zeta}^{\prime}}(2), \ldots,\pi_{i}^{\hat\eta,{\zeta}^{\prime}}(N_{i})]^{\mathrm{T}}$, $1\leqorig{\pi_{i}^{\hat\eta,{\zeta}^{\prime}}(1)}<\pi_{i}^{\hat\eta,{\zeta}^{\prime}}(2)<\cdots<\pi_{i}^{\hat\eta,{\zeta}^{\prime}}(N_{i})\leqorig{n}$,
where $N_{i}$ is the number of the non-zero elements in the $i$th row of $\boldsymbol{H}^{\hat\eta,{\zeta}^{\prime}}$.
Then, we have
\begin{equation}
\label{eq:syndrome}
{\tilde{\sf c}_{\pi_{i}^{\hat\eta,{\zeta}^{\prime}}(1)}} \oplus {\tilde{\sf c}_{\pi_{i}^{\hat\eta,{\zeta}^{\prime}}(2)}} \oplus  \cdots  \oplus {\tilde{\sf c}_{{\pi_{i}^{\hat\eta,{\zeta}^{\prime}}(N_{i})}}} = 0,\quad i\in\mathcal{I}_{n-q}.
\end{equation}
In general, if and only if $\hat\eta=\eta$ and ${\zeta}^{\prime}={\zeta}$, the relation \eqref{eq:syndrome} holds.
Furthermore, we define the LLR of the syndrome APP for the $i$th parity-check bit as $\gamma_i^{\hat\eta,\zeta^{\prime}}$, which is used to derive the average LLR metric.
To obtain $\gamma_i^{\hat\eta,\zeta^{\prime}}$, $i\in\mathcal{I}_{n-q}$, another lemma is provided as follows.

\begin{lemma}
\label{le:LLR-metric}
\em{Given the i.i.d. Bernoulli random variables ${\sf x}_j$, $j\in\mathcal I_N$, which takes the value $0$ with probability $p({\sf x}_j=0)$ and the value $1$ with probability $p({\sf x}_j=1)=1-p({\sf x}_j=0)$, the LLR metric in $\mathrm{GF}(2)$ is denoted by}
\begin{align}\label{eq:LLR-metric}
\mathcal{L}({\sf x}_1\oplus {\sf x}_2\oplus \ldots \oplus {\sf x}_N)
=2\tanh^{-1}\prod_{j\in\mathcal{I}_N}\tanh \frac{1}{2}{\mathcal{L}({\sf x}_j)}
\end{align}
where $\mathcal{L}({\sf x}_j)=\ln \frac{p({\sf x}_j=0)}{p({\sf x}_j=1)}$.
\end{lemma}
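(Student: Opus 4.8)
The plan is to reduce the entire identity to two elementary facts: a pointwise conversion between an LLR and the associated \emph{soft bit} (the difference of the two probabilities), and the multiplicativity of this soft bit under the $\mathrm{GF}(2)$ sum of independent variables. First I would establish the scalar conversion that for any single Bernoulli variable ${\sf x}_j$,
\[
\tanh \tfrac12 \mathcal{L}({\sf x}_j) = p({\sf x}_j = 0) - p({\sf x}_j = 1).
\]
This follows directly from the definition $\mathcal{L}({\sf x}_j)=\ln\frac{p({\sf x}_j=0)}{p({\sf x}_j=1)}$ by using the elementary identity $\tanh\big(\tfrac12\ln\tfrac{a}{b}\big)=\tfrac{a-b}{a+b}$ together with $p({\sf x}_j=0)+p({\sf x}_j=1)=1$. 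Equivalently, the right-hand side is just $\mathbb{E}\{(-1)^{{\sf x}_j}\}$, the expectation of the $\pm 1$ image of the bit, which is the form I will exploit next.

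The crux is the second step, where I pass to the $\pm 1$ representation $(-1)^{\sf x}$ under which the $\mathrm{GF}(2)$ sum becomes an ordinary product, $(-1)^{{\sf x}_1 \oplus \cdots \oplus {\sf x}_N} = \prod_{j\in\mathcal{I}_N}(-1)^{{\sf x}_j}$. Because the ${\sf x}_j$ are independent, the expectation factorizes, and invoking the first step on each factor yields
\[
\mathbb{E}\big\{(-1)^{{\sf x}_1\oplus\cdots\oplus{\sf x}_N}\big\}
=\prod_{j\in\mathcal{I}_N}\mathbb{E}\big\{(-1)^{{\sf x}_j}\big\}
=\prod_{j\in\mathcal{I}_N}\tanh\tfrac12\mathcal{L}({\sf x}_j).
\]
If one prefers to avoid the character-theoretic language, the same factorization can be obtained by a short induction on $N$, the base case ${\sf y}={\sf x}_1\oplus{\sf x}_2$ being the direct computation $p({\sf y}=0)-p({\sf y}=1)=\big(p({\sf x}_1=0)-p({\sf x}_1=1)\big)\big(p({\sf x}_2=0)-p({\sf x}_2=1)\big)$, with the inductive step folding one more independent variable into the product.

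Finally I would apply the first-step conversion \emph{in reverse} to the single binary variable ${\sf y}={\sf x}_1\oplus\cdots\oplus{\sf x}_N$. Since $\tanh\tfrac12\mathcal{L}({\sf y})=\mathbb{E}\{(-1)^{\sf y}\}$, inverting gives $\mathcal{L}({\sf y})=2\tanh^{-1}\big(\mathbb{E}\{(-1)^{\sf y}\}\big)$, and substituting the factorized product from the second step produces exactly the claimed relation \eqref{eq:LLR-metric}. The only real obstacle is bookkeeping: making the sum-to-product passage $(-1)^{\oplus_j {\sf x}_j}=\prod_j(-1)^{{\sf x}_j}$ explicit and checking the soft-bit conversion once; beyond that, every line is a substitution. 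No convexity, approximation, or analytic subtlety enters, so I expect the final write-up to be quite short.
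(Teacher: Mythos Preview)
Your argument is correct. It differs from the paper's proof mainly in organization: the paper works directly with exponentials of the LLRs, first computing $\mathcal{L}({\sf x}_1\oplus{\sf x}_2)=\ln\frac{1+e^{\mathcal{L}({\sf x}_1)}e^{\mathcal{L}({\sf x}_2)}}{e^{\mathcal{L}({\sf x}_1)}+e^{\mathcal{L}({\sf x}_2)}}$, then inducting to obtain the $N$-variable expression $\ln\frac{\prod_j(e^{\mathcal{L}({\sf x}_j)}+1)+\prod_j(e^{\mathcal{L}({\sf x}_j)}-1)}{\prod_j(e^{\mathcal{L}({\sf x}_j)}+1)-\prod_j(e^{\mathcal{L}({\sf x}_j)}-1)}$, and only at the very end invoking $\tanh\tfrac{x}{2}=\tfrac{e^x-1}{e^x+1}$ to reach the stated form. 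You instead front-load that identity by passing immediately to the soft bit $\tanh\tfrac12\mathcal{L}({\sf x}_j)=p({\sf x}_j=0)-p({\sf x}_j=1)=\mathbb{E}\{(-1)^{{\sf x}_j}\}$, which turns the $\mathrm{GF}(2)$ sum into an ordinary product and makes the factorization a one-liner via independence. Your route is a bit more conceptual and avoids carrying the rational-in-exponentials expression through the induction; the paper's route is purely computational and never needs to introduce the $\pm1$ representation. Since you already sketch the two-variable induction as an alternative, your write-up subsumes the paper's argument.
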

\makeatletter
\renewenvironment{proof}[1][\proofname]{\par%
\pushQED{\qed}%
\normalfont \topsep6\p@\@plus6\p@\relax%
\trivlist%
\item[\hskip\labelsep%
#1]\ignorespaces%
}{%
\popQED\endtrivlist\@endpefalse%
}
\makeatother
\begin{proof}[\quad\it{Proof:}]
See Appendix \ref{app:LLR-QAM}.
\end{proof}
Then, by using \eqref{eq:syndrome} and Lemma \ref{le:LLR-metric}, the LLR of the syndrome APP is specified by the following theorem.

\begin{Theorem}
\label{le:LLR-QAM}
\em{Given the modulation $\eta$ and a $(n, q)$ linear block code $\zeta$, the LLR of the syndrome APP for the $i$th parity-check bit is denoted by}
\begin{align}
\label{eq:LLR-QAM}
{\gamma}_{i}^{\eta,\zeta}
&= 2\tanh^{-1}\prod_{\tau\in\mathcal I_{N_i}}\tanh\frac{1}{2}{\psi}_{\pi_{i}^{\eta,\zeta}(\tau)},
\quad i\in\mathcal{I}_{n-q}
\end{align}
where $\psi_{\pi_{i}^{\eta,\zeta}(\tau)}\in\boldsymbol{\psi}$ is the posterior probability LLR of the $\pi_{i}^{\eta,\zeta}(\tau)$th coded bit in a codeword, and $\boldsymbol{\psi}=[\psi_{1},\psi_{2},\ldots,\psi_{n}]^{\mathrm T}$ is equal to ${\boldsymbol\lambda}^{\text{out}}$, which is obtained from \eqref{eq:lemeda_out}.
\end{Theorem}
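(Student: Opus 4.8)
The plan is to derive Theorem~\ref{le:LLR-QAM} directly from the syndrome relation \eqref{eq:syndrome} together with the $\mathrm{GF}(2)$ LLR identity of Lemma~\ref{le:LLR-metric}. The key observation is that the $i$th parity check associates the codeword with the single $\mathrm{GF}(2)$ sum $\tilde{\sf c}_{\pi_i^{\eta,\zeta}(1)}\oplus\cdots\oplus\tilde{\sf c}_{\pi_i^{\eta,\zeta}(N_i)}$ of the $N_i$ coded bits indexed by the support of the $i$th row of $\boldsymbol H^{\eta,\zeta}$. By definition $\gamma_i^{\eta,\zeta}$ is the APP LLR of this sum, so the whole task reduces to writing $\mathcal{L}$ of a single $\mathrm{GF}(2)$ sum in terms of the per-bit LLRs of its summands.

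First I would fix the marginal posterior of each constituent bit. The soft demodulator output \eqref{eq:lemeda_out} supplies, for each codeword position, the posterior LLR of the corresponding coded bit conditioned on the received signal and the estimated channel state $\boldsymbol{\hat\beta}$; stacking these into $\boldsymbol\psi=[\psi_1,\ldots,\psi_n]^{\mathrm T}=\boldsymbol\lambda^{\text{out}}$ gives $\mathcal{L}(\tilde{\sf c}_{\pi_i^{\eta,\zeta}(\tau)})=\psi_{\pi_i^{\eta,\zeta}(\tau)}$ for each $\tau\in\mathcal I_{N_i}$. I would then set ${\sf x}_\tau=\tilde{\sf c}_{\pi_i^{\eta,\zeta}(\tau)}$ and apply Lemma~\ref{le:LLR-metric} to the sum of these $N_i$ Bernoulli variables, which yields at once
\[
\gamma_i^{\eta,\zeta}=\mathcal{L}\!\Big(\bigoplus_{\tau\in\mathcal I_{N_i}}\tilde{\sf c}_{\pi_i^{\eta,\zeta}(\tau)}\Big)=2\tanh^{-1}\prod_{\tau\in\mathcal I_{N_i}}\tanh\frac{1}{2}\,\psi_{\pi_i^{\eta,\zeta}(\tau)},
\]
i.e. exactly \eqref{eq:LLR-QAM}; the remaining steps---substituting the per-bit LLRs inside the product and letting $i$ range over $\mathcal I_{n-q}$---are routine bookkeeping.

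The delicate point, and the one I expect to be the main obstacle, is the independence hypothesis underlying Lemma~\ref{le:LLR-metric}. The coded bits feeding a single parity check are in principle coupled through the code constraints, whereas the LLRs in \eqref{eq:lemeda_out} are computed marginally, bit by bit; moreover these bits are not identically distributed, so the genuinely needed property is independence rather than the stronger i.i.d.\ wording of the lemma. To legitimately factor the LLR of their $\mathrm{GF}(2)$ sum into the product form, I would make explicit that, at the input of the syndrome computation, the relevant coded bits are modeled as conditionally independent Bernoulli variables given the received samples and $\boldsymbol{\hat\beta}$---a working assumption consistent with the symbol-wise factorization already used by the Bayes equalizer in \eqref{eq:posti-and-modsym}. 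Once this independence is granted, the application of Lemma~\ref{le:LLR-metric} is immediate and the theorem follows; I would flag this assumption rather than leave it implicit, since it is the only place where the derivation could be over-claimed.
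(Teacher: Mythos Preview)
Your proposal is correct and follows essentially the same route as the paper: identify the per-bit posterior LLRs with $\boldsymbol\psi=\boldsymbol\lambda^{\text{out}}$, then apply Lemma~\ref{le:LLR-metric} to the $\mathrm{GF}(2)$ sum defining the $i$th syndrome bit to obtain~\eqref{eq:LLR-QAM}. Your explicit flagging of the conditional-independence hypothesis is a welcome addition that the paper's proof leaves implicit.
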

\makeatletter
\renewenvironment{proof}[1][\proofname]{\par%
\pushQED{\qed}%
\normalfont \topsep6\p@\@plus6\p@\relax%
\trivlist%
\item[\hskip\labelsep%
#1]\ignorespaces%
}{%
\popQED\endtrivlist\@endpefalse%
}
\makeatother
\begin{proof}[\quad\it{Proof:}]
See Appendix \ref{app:LLR-QAM}.
\end{proof}

\begin{algorithm}[!t]
\caption{Multistage Likelihood Decision Algorithm}
\label{algo:decision}
\begin{algorithmic}[1]
\STATE{Decide the possible modulation candidates according to \eqref{eq:vote_modulation}. }
\IF{the modulation candidate having the majority vote in $\mathcal {M}$ is unique, i.e., $\hat \eta$}
\STATE{Compute ${\gamma}_{i}^{\hat\eta,\zeta^{\prime}}$, $i\in\mathcal{I}_{n-q}$, from Theorem \ref{le:LLR-QAM}.}
\STATE{Compute $\varGamma^{\hat\eta,\zeta^{\prime}}$ from \eqref{eq:vote_mod_decision}.}
\STATE{The channel coding $\hat\zeta$ is determined from \eqref{eq:vote_encoder_decision}.}
\ELSE
\STATE{Compute $\varGamma^{\theta^{\prime}}$ using \eqref{eq:vote_nomod_decision}.}
\STATE{The MCS $\hat\theta$ is determined from \eqref{eq:nomod_Taoiota}.}
\ENDIF
\STATE{$\boldsymbol{\hat{b}}^{\hat{\theta}}$ and $\boldsymbol{\hat{\beta}}^{\hat{\theta}}$ are determined correspondingly.}
\end{algorithmic}
\end{algorithm}

From Theorem \ref{le:LLR-QAM}, the average LLR of the syndrome APP $\varGamma^{\hat\eta,\zeta^{\prime}}$ for the channel coding decision is calculated by
\begin{equation}
\label{eq:vote_mod_decision}
\varGamma^{\hat\eta,\zeta^{\prime}} = \frac{1}{{n - q}}\sum\limits_{ i\in\mathcal{I}_{n-q}}{{\gamma}_{i}^{\hat\eta,\zeta^{\prime}}},
\quad {{\zeta}^{\prime}}\in\mathcal C.
\end{equation}
To further explain the relationship between the average LLR $\varGamma^{\hat\eta,\zeta^{\prime}}$ and the number of the parity-check bits, we have
\begin{align}
\label{eq:mod_Taoiota}
\varGamma^{\hat\eta,\zeta^{\prime}}(\iota) = \frac{1}{\iota}\sum_{i\in\mathcal I_\iota} {\gamma}_i^{\hat\eta,\zeta^{\prime}}, \quad\iota\in\mathcal{I}_{n-q},
\quad {{\zeta}^{\prime}}\in\mathcal C
\end{align}
and $\varGamma^{\hat\eta,\zeta^{\prime}}(\iota)$ represents the average LLR of the first $\iota$ parity-check bits, it is equal to $\varGamma^{\hat\eta,\zeta^{\prime}}$ in \eqref{eq:vote_mod_decision} if $\iota=n-q$.
Then, the decision of the channel coding is made as
\begin{equation}\label{eq:vote_encoder_decision}
{\hat\zeta} = \mathop {\arg \max \limits_{{{\zeta}^{\prime}} \in \mathcal{C} }} \varGamma ^{\hat\eta,{\zeta}^{\prime}}(\iota).
\end{equation}
Occasionally, the modulation decision step is incapable to determine a modulation format if the modulation candidate having the majority vote is not unique, we directly employ the average LLR $\varGamma ^{\theta^{\prime}}(\iota)$, $\iota\in\mathcal I_{n-q}$, as the decision metric for both the modulation classification and the channel coding recognition, which is given by
\begin{equation}
\label{eq:vote_nomod_decision}
\varGamma ^{{\theta ^{\prime}}}(\iota) = \frac{1}{\iota}\sum\limits_{i\in\mathcal{I}_{\iota}} {\gamma _{i}^{{\theta ^{\prime}}}}, \quad \iota\in\mathcal I_{n-q}, \quad \theta^{\prime}\in\mathcal M \times \mathcal C.
\end{equation}
Thus, the final decision of the adopted MCS $\hat\theta$ is made by
\begin{equation}
\label{eq:nomod_Taoiota}
{\hat\theta} = \mathop {\arg \max_{{\theta ^{\prime}} \in\mathcal{M}\times \mathcal{C} }} \varGamma ^{{\theta}^{\prime}}(\iota).
\end{equation}

\begin{algorithm}[!t]
\caption{The Proposed BERD Receiver}
\begin{algorithmic}[1]
\label{algo:joint}
\STATE{{\bf{Init:}} $\boldsymbol{\hat\beta}^{\theta^{\prime}}$ as Section \ref{sec:initial};}
\WHILE{the variation of the estimated channel is more than $\varepsilon$ or the number of iterations does not exceed the maximum threshold}
\STATE{Compute $\boldsymbol{\rho}^{\theta^{\prime}}$ in the Bayes equalizer module according to Algorithm \ref{algo:re-sym};}
\STATE{Detect $\boldsymbol{\hat b}^{\theta^{\prime}}$ in the soft demodulator and decoder module according to Algorithm \ref{algo:re-sym};}
\STATE{Regenerate $\boldsymbol {\hat s}^{\theta^{\prime}}$ in the re-encoder and re-modulator module according to Algorithm \ref{algo:re-sym};}
\STATE{Update $\boldsymbol{\hat\beta}^{\theta^{\prime}}$ in the blind channel and noise power estimator module as Algorithm \ref{algo:EM_based};}
\ENDWHILE
\STATE{The outputs including the detected bits $\boldsymbol{\hat b}^{\hat\theta}$, the MCS $\hat\theta$, and the estimated channel information $\boldsymbol{\hat\beta}^{\hat\theta}$ are determined in the multistage likelihood decision module according to Algorithm \ref{algo:decision};}
\end{algorithmic}
\end{algorithm}

The final decision of information bits is $\boldsymbol{\hat{b}}^{\hat{\theta}}$ correspondingly.
Furthermore, the multipath channel state information and the noise power are decided as $\boldsymbol{\hat{\beta}}^{\hat{\theta}}$.
The multistage likelihood decision algorithm is summarized in Algorithm \ref{algo:decision},
and the overall BERD receiver is summarized in Algorithm \ref{algo:joint}.

\section{BERD Approach for Multiple Receivers}
\label{sec:BERD-K-receive}
In this section, we extend the proposed BERD approach to the system with multiple receivers, which further enhances the performance of the data detection, the MCS recognition, and the channel estimation.
For multiple receivers, the BERD approach supports both the distributed and the cooperative manners.
The notation is summarized in Table \ref{table:symbol-list-multiple} in Appendix \ref{app:table} for the convenience of the readers.

\subsection{Multiple Receivers in Cooperative Manner}
\subsubsection{System Model}
We first extend the BERD approach to the system with multiple receivers in a cooperative manner.
Assume that the number of the receivers is $K$ and the received signal at $k$th receiver is $\boldsymbol{\sf r}_{k,:}=[{\sf r}_{k,1},{\sf r}_{k,1},\ldots,{\sf r}_{k,N}]^{\mathrm T}$, then, the $j$th received symbol at $k$th receiver ${\sf r}_{k,j}$ is given by
\begin{equation}\label{eq:system-Kreveiver}
{{\sf r}_{k,j}} = \sum\limits_{\ell\in\mathcal{I}_{L}-1} {{{a}_{k,\ell}}{e^{\imath{\varphi}_{k,\ell}}}{\sf s}_{j-\ell}+ {{\sf v} _{k,j}}},
\quad  k\in\mathcal{I}_K, \quad j\in\mathcal{I}_N
\end{equation}
where ${a}_{k,\ell}>0$ and ${\varphi}_{k,\ell}\in[0,2\pi)$ are the unknown channel gain and the unknown channel phase of the $\ell$th path at the $k$th receiver;
${\sf v}_{k,j}$ is the noise at $k$th receiver which follows a CSCG distribution i.e., ${\sf v}_{k,j}\sim\mathcal{CN}(0,{\sigma}_{k}^{2})$.
In particular, with the cooperation of multiple receivers, the task of the BERD approach is to detect the information bits $\boldsymbol {\sf b}$, recognize the MCS $\theta$, and estimate the unknown channel information  $\boldsymbol{B}=[\boldsymbol{\beta}_1,\boldsymbol{\beta}_2,\ldots,\boldsymbol{\beta}_K]$, where $\boldsymbol{\beta}_k=[{a}_{k,0},{a}_{k,1},$ $\ldots,{a}_{k,L-1}, {\varphi}_{k,0},{\varphi}_{k,1},\ldots,{\varphi}_{k,L-1},\sigma_k^2]^{\mathrm T}$.

\begin{figure}[tbp]
\centering
\subfigure[]
{
\begin{minipage}{0.9\linewidth}
\centering
\includegraphics[width=5.55in]{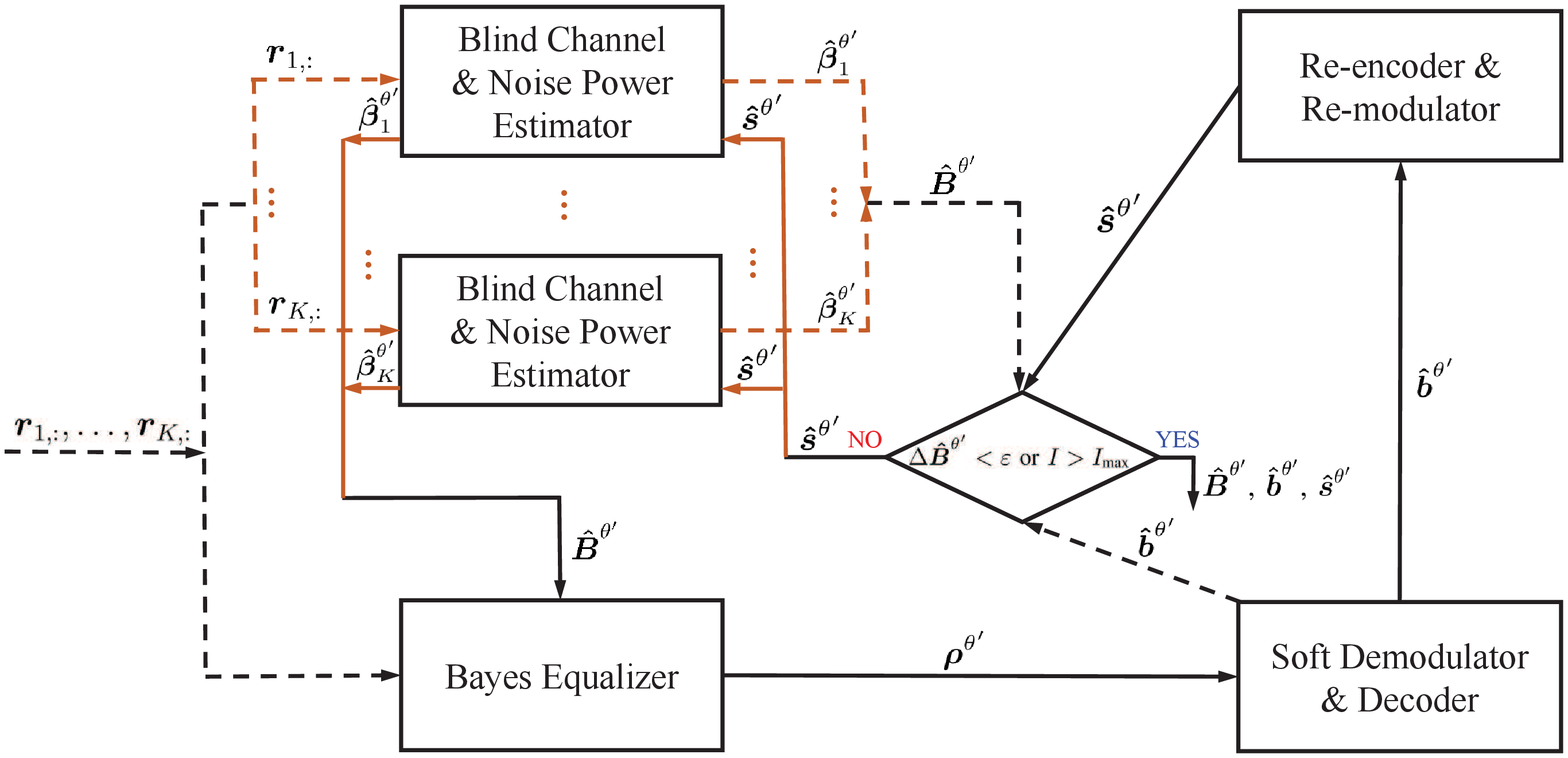}
\end{minipage}
}

\subfigure[]
{
\begin{minipage}{0.35\linewidth}
\includegraphics[width=2.7in]{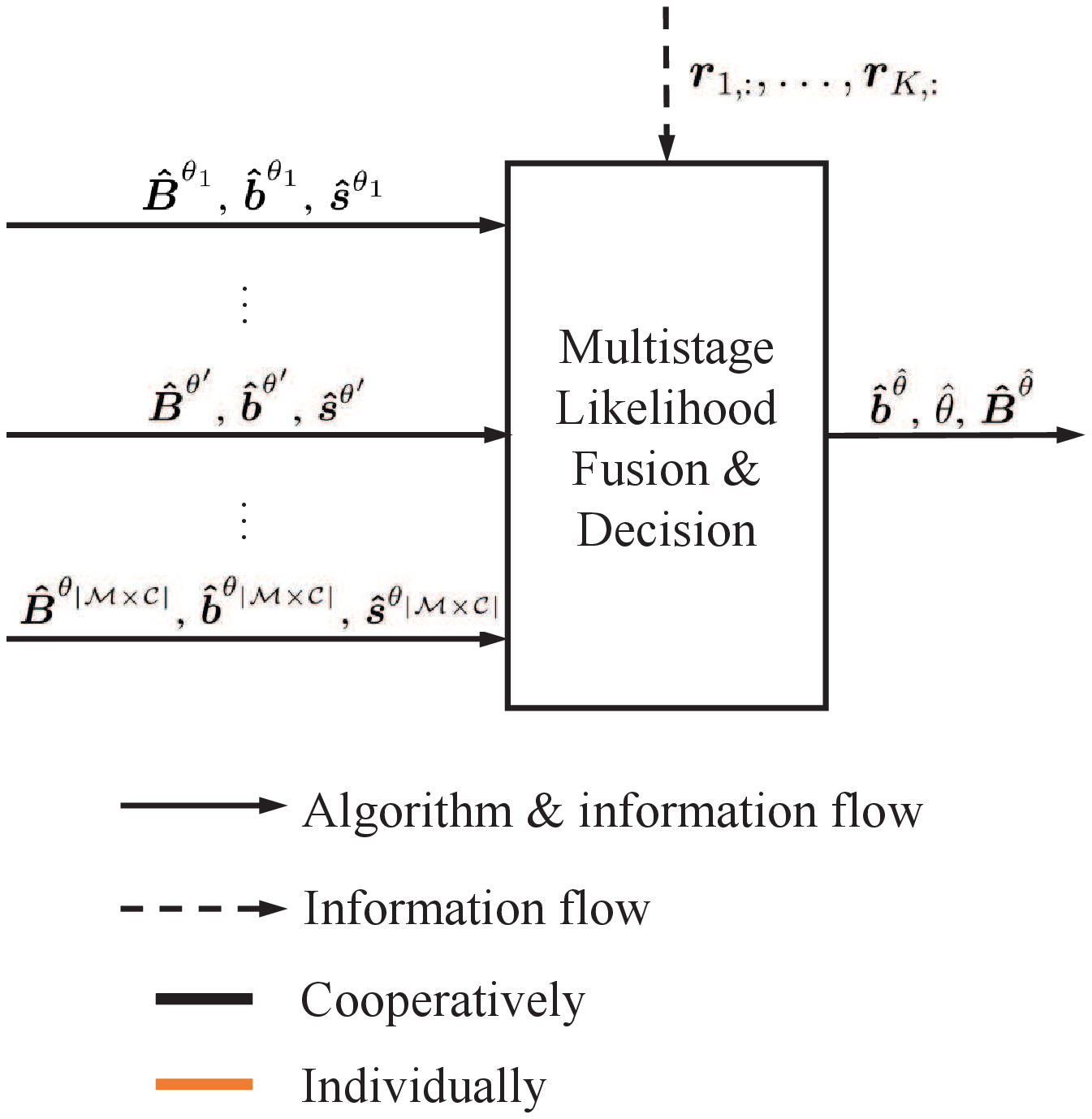}
\label{fig3}
\end{minipage}
}
\caption{The block diagram of the BERD approach for multiple receivers in a cooperative manner.
(a) The algorithm and information flow between the different  modules in the hypothesis MCS candidate $\theta^{\prime}$.
(b) The algorithm and information flow of the multistage likelihood fusion and decision module.
}
\label{fig:BERD_Kreceiver_cooperative}
\end{figure}

The essential procedure of the BERD approach for multiple receivers can be summarized as follows.
First, the multiple receivers individually estimate the multipath channel and the noise power $\boldsymbol{\hat\beta}_k$, $k\in\mathcal{I}_K$, which follows Algorithm \ref{algo:EM_based}.
Then, the soft-information detector and regenerator uses the estimation $\boldsymbol{\hat B}=[\boldsymbol{\hat\beta}_1,\boldsymbol{\hat\beta}_2,\ldots,\boldsymbol{\hat\beta}_K]$ from the multiple receivers to cooperatively detect the information bits $\boldsymbol {\hat b}$ and regenerate the modulated symbols $\boldsymbol{\hat s}$.
We revise some of the previous expressions for a single receiver, which are provided in Section \ref{sec:DEC_K_receiver}.
The BERD approach iterates between the blind channel and noise power estimator, and the soft-information detector and regenerator until the stop criteria are satisfied.
Finally, by utilizing $\boldsymbol{\hat B}$,  $\boldsymbol{\hat b}$, and $\boldsymbol{\hat s}$ determined in each hypothesis MCS $\theta^{\prime}\in\mathcal{M}\times\mathcal{C}$, the multistage likelihood fusion and decision module makes the final decision of the information bits, the MCS, and the channel information, which details are introduced in Section \ref{sec:decision-K-receiver}.
The extension of multiple receivers further enhances the performance of the detection, the recognition, and the estimation since the cooperative manner between the multiple receivers brings diversity gain.
The algorithm and the information flow of the proposed BERD approach for multiple receivers in cooperative manner are shown in Figure \ref{fig:BERD_Kreceiver_cooperative}.

\subsubsection{Soft-information Detector and Regenerator}
\label{sec:DEC_K_receiver}
The information bits $\boldsymbol{\sf b}$ and the modulated symbols $\boldsymbol {\sf s}$ are determined in a cooperative manner in the soft-information detector and regenerator.
The likelihood probability ${p}( {{\boldsymbol{r}_{:,j}}| {{{\sf s}_{j}} ={\mu}_m},\boldsymbol{\hat s}_{j-L+1}^{j-1};\boldsymbol{\hat \beta}})$ in \eqref{eq:p(rj|sj,s,beta)} is rewritten as
\begin{align}
\label{eq:likelihood_probability_Kreceiver}
& {p}( {{\boldsymbol{r}_{:,j}}| {{{\sf s}_{j}} ={\mu}_m},\boldsymbol{\hat s}_{j-L+1}^{j-1};\boldsymbol{\hat B}}) \notag \\
&\quad\quad \propto\frac{1}{\prod_{k\in\mathcal{I}_K}{\hat\sigma}_k} {\exp \Bigg( \sum_{k\in\mathcal{I}_{K}}
{-\frac{1}{{{\hat\sigma} _{k}^2}}
{{{{\big| {{{r}_{k,j}} - f_{k}({{\sf s}_{j}=\mu_m})} \big|}^2}}}} \Bigg)},
\quad \mu_m\in\mathcal S, \quad j\in\mathcal{I}_{N}
\end{align}
with $\boldsymbol{r}_{:,j}=[{r}_{1,j},{r}_{2,j},\ldots,{r}_{K,j}]^{\mathrm T}\in\mathbb{C}^{K}$ as the $j$th received symbols of $K$ receivers.
Let $\boldsymbol{\hat\beta}=\boldsymbol{\hat\beta}_k$, $f_{k}({{\sf s}_{j}}=\mu_m)$ is derived using \eqref{eq:f(sj)} at each receiver.
Then, the output posterior probability $\boldsymbol{\rho}$ of the Bayes equalizer is determined by plugging \eqref{eq:likelihood_probability_Kreceiver} into \eqref{eq:posti-and-modsym}.
After the Bayes equalizer, the methods to obtain the output of the soft demodulator $\boldsymbol{\lambda}^{\text{out}}$, the output of the soft decoder ${\boldsymbol\xi}$, the detected information bits $\boldsymbol{\hat b}$, and the regenerated modulated symbols $\boldsymbol{\hat s}$ are the same as that in Section \ref{sec:DEC}.

\subsubsection{Multistage Likelihood Fusion and Decision Module}
\label{sec:decision-K-receiver}
For the case of multiple receivers, the general idea of how to make the final decision is the same as the multistage likelihood decision module proposed in Section \ref{sec:multistage-decision}.
However, considering the cooperative manner of multiple receivers, we should modify some of the formulas in section \ref{sec:multistage-decision}.
In the modulation decision, the likelihood function ${\mathcal F}(\boldsymbol {\hat\beta})$ in \eqref{eq:vote_modulation} is computed in each hypothesis MCS $\theta^{\prime}=\{\eta^{\prime},\zeta^{\prime}\}\in\mathcal M \times \mathcal C$, which is rewritten as
\begin{align}
\mathcal F(\boldsymbol {\hat B})
= {\sum_{k\in\mathcal{I}_K}\sum_{j\in\mathcal{I}_N}-\frac{1}{{{\hat\sigma}_{k}^2}} {{\Big|{r}_{k,j}-\sum\limits_{\ell\in\mathcal{I}_{L}-1} {{\hat a}_{k,\ell}{e^{\imath{\hat \varphi}_{k,\ell}}}} {{\hat s}_{j-\ell}} \Big|^{2}}}}-N\ln\prod_{k\in\mathcal{I}_K}{\hat\sigma}_k.
\label{eq:Lbeta-K-receiver}
\end{align}
The decision of the channel coding $\hat\zeta$, the information data bits $\boldsymbol{\hat b}^{\hat\theta}$ and the channel information $\boldsymbol{\hat B}^{\hat\theta}$ performs as the methods in Section \ref{sec:multistage-decision}.

\subsection{Multiple Receivers in Distributed Manner}
\begin{figure}[!t]
\centering
\includegraphics[width=5.5in]{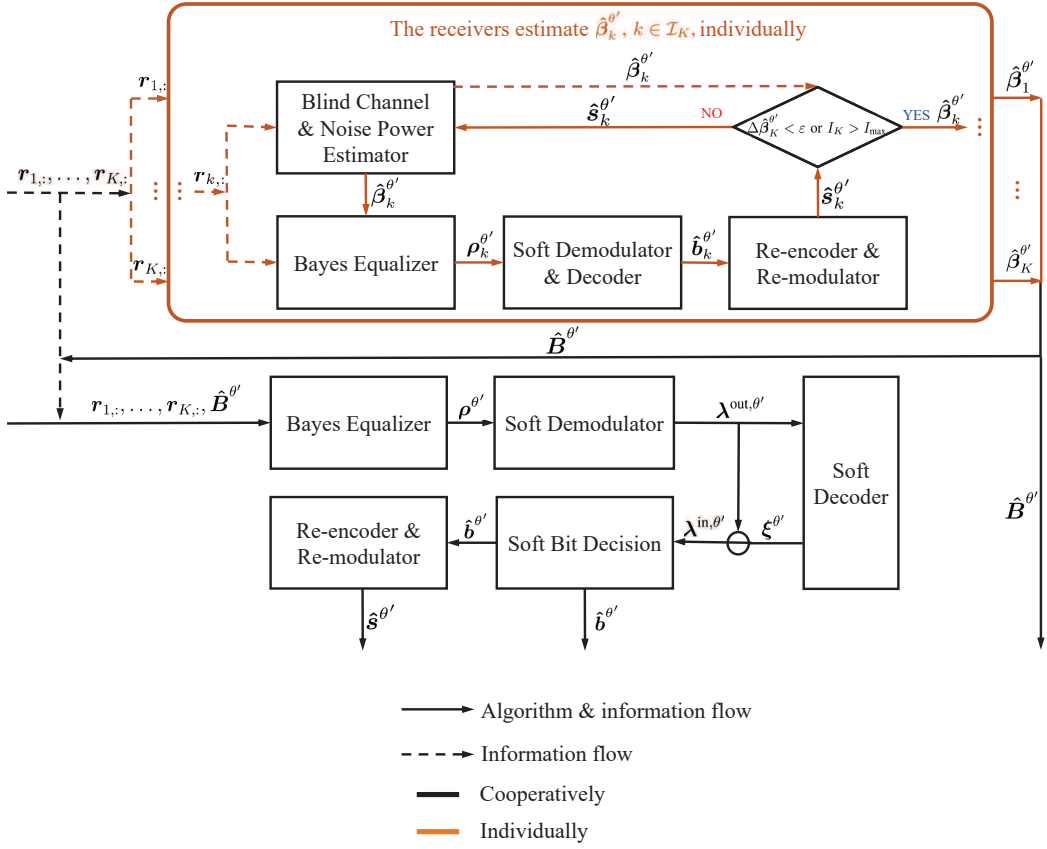}
\caption{The algorithm and information flow between the different  modules in the hypothesis MCS candidate $\theta^{\prime}$ of the BERD approach for multiple receivers in a distributed manner.
}
\label{fig:BERD_Kreceiver_distributed}
\end{figure}
Compared with a cooperative manner, the essential procedure of the distributed manner is summarized as follows.
First, the multiple receivers estimate the channel information, detect the information bits, and regenerate the modulated symbols individually instead of cooperation.
If the stop criteria are satisfied, multiple receivers  output the estimate $\boldsymbol{\hat B}^{\theta^{\prime}}$ to the soft-information detector and regenerator.
Then, the information bits $\boldsymbol{\hat b}^{\theta^{\prime}}$ and the modulated symbols $\boldsymbol{\hat s}^{\theta^{\prime}}$ are re-detected and re-generated before making the final decision in the multistage likelihood fusion and decision module.
Finally, utilizing $\boldsymbol{\hat B}^{\theta^{\prime}}$, $\boldsymbol{\hat b}^{\theta^{\prime}}$, and $\boldsymbol{\hat s}^{\theta^{\prime}}$ determined in each MCS $\theta^{\prime}\in\mathcal M\times\mathcal C$, the final decision of the MCS $\hat\theta$, the information data bits $\boldsymbol{\hat b}^{\hat\theta}$, and the channel information $\boldsymbol{\hat B}^{\hat\theta}$ performs as Section \ref{sec:decision-K-receiver}, shown in \ref{fig:BERD_Kreceiver_cooperative}(b).
The algorithm and the information flow between the different modules in the hypothesis MCS $\theta^{\prime}$ of the BERD approach for multiple receivers in a distributed manner are shown in Figure \ref{fig:BERD_Kreceiver_distributed}.

\section{Numerical Results}
\label{sec:simulation-results}
In this section, we provide various simulations to validate the proposed algorithm.
The number of channel paths is $L=6$ and the number of the receivers is $K=5$.
Without loss of generality, the leading coefficient of multipath channel is set to $1$, i.e., $h_{k,0}=a_{k,0}e^{\imath\varphi_{k,0}}=1$ and the remaining channel coefficients follow from the CSCG distribution with $\epsilon^{2}=0.1$ \cite{FC1,FC2,ref:ZJW}.

\noindent\emph{\textbf{Observation 1:}}
\noindent\emph{The average LLR metric $\varGamma^{\theta^{\prime}}(\iota)$ is much larger when the hypothetical MCS $\theta^{\prime}$ is accepted than that of being rejected, which indicates that the average LLR metric is effective for the recognition task. (c.f. Figure \ref{fig:LLR})}

\begin{figure}[!t]
\centering
\includegraphics[width=5in]{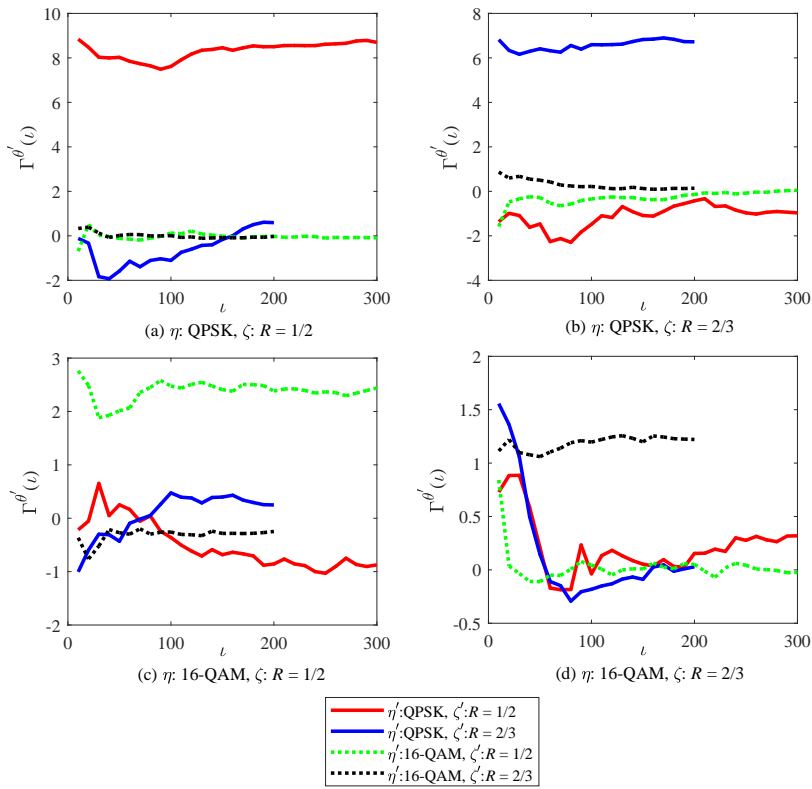}
\caption{The average LLR $\varGamma^{\theta^{\prime}}(\iota)$ is evaluated w.r.t. $\iota$, with curves parameterized by the different hypothesis MCS when $\mathsf{SNR}=2\,\mathrm{dB}$. The code length is $n=648$, and the number of the received symbols is $N=648$.
The adopted modulation $\eta$ and the channel coding $\zeta$ at the transmitter are (a) $\eta$: QPSK, $\zeta$: $R = \frac{1}{2}$; (b) $\eta$: QPSK, $\zeta$: $R = \frac{2}{3}$; (c) $\eta$: $16$-QAM, $\zeta$: $R = \frac{1}{2}$; (d) $\eta$: $16$-QAM, $\zeta$: $R = \frac{2}{3}$, respectively.}
\label{fig:LLR}
\end{figure}

In Figure \ref{fig:LLR}, we evaluate the characteristic of the average LLR of syndrome APP $\varGamma^{\theta^{\prime}}(\iota)$ for the first $\iota$ parity-check bits according to \eqref{eq:mod_Taoiota}.
We consider the modulation candidate set $\mathcal M$ as \{QPSK, $16$-QAM\}. 
The encoder candidate set $\mathcal C$ contains the LDPC encoders with the code rate $\frac{1}{2}$ and $\frac{2}{3}$, and the code length is fixed at $n = 648$.
As Figure \ref{fig:LLR} shows, the adopted MCS $\theta$ at the transmitter are
\{$\eta$: QPSK, $\zeta$: $R=\frac 1 2$\}, \{$\eta$: QPSK, $\zeta$: $R=\frac 2 3$\}, \{$\eta$: $16$-QAM, $\zeta$: $R=\frac 1 2$\}, and \{$\eta$: $16$-QAM, $\zeta$: $R=\frac 2 3$\}, respectively.
In addition, we initialize the channel information $\boldsymbol\beta$ by the true value with some bias, and the maximum bias set is $(\delta_{a} = 0.1, \delta_{\varphi} = \frac{\pi}{20}$, $\delta_{\sigma} = 0.1)$ \cite{TWC[26]}.
From the simulation results, the average LLR $\varGamma^{\theta^{\prime}}(\iota)$ is always stay positive and it is larger when the hypothetical MCS $\theta$ are exactly the adopted $\theta$, i.e., $\zeta^{\prime} =\zeta$ and $\eta^{\prime} =\eta$.
For other hypothesis $\zeta^{\prime} \neq\zeta$ and/or $\eta^{\prime} \neq\eta$,
the average LLR is close to $0$ as the parity-check bits increases.

In the following, we illustrate the data detection performance, the recognition performance, and the channel information estimation performance of the proposed BERD receiver, where the BER, the correct recognition probability, and the MSE are adopted as the performance metric.
The modulation candidate set $\mathcal M$ is \{QPSK, $8$-PSK, $16$-QAM\}.
The LDPC codewords defined in IEEE 802.11ac standard are used in our simulations.
Three code lengths $n=648$, $1296$, and $1944$ are defined in this standard, and each code length corresponds to four different code rates $R=\frac{1}{2}$, $\frac{2}{3}$, $\frac{3}{4}$, and $\frac{5}{6}$.
We consider two initial schemes of $\boldsymbol\beta$.
The first one is the true value of it plus some bias, and the maximum bias set is $(\delta_{a} = 0.1, \delta_{\varphi} = \frac{\pi}{20}$, $\delta_{\sigma} = 0.1)$ \cite{TWC[26]}.
The second one is the fourth-order moment-based initial scheme introduced in Section \ref{sec:initial}, and the search step size $\alpha=0.1$.
In addition, we set $I_{\text{max}}=30$, $t_{\text{max}}=30$, and $\varepsilon=10^{-3}$.

\begin{figure}
\centering
\subfigure[ ]
{\begin{minipage}[b]{0.7\textwidth}
\includegraphics[width=1\textwidth]{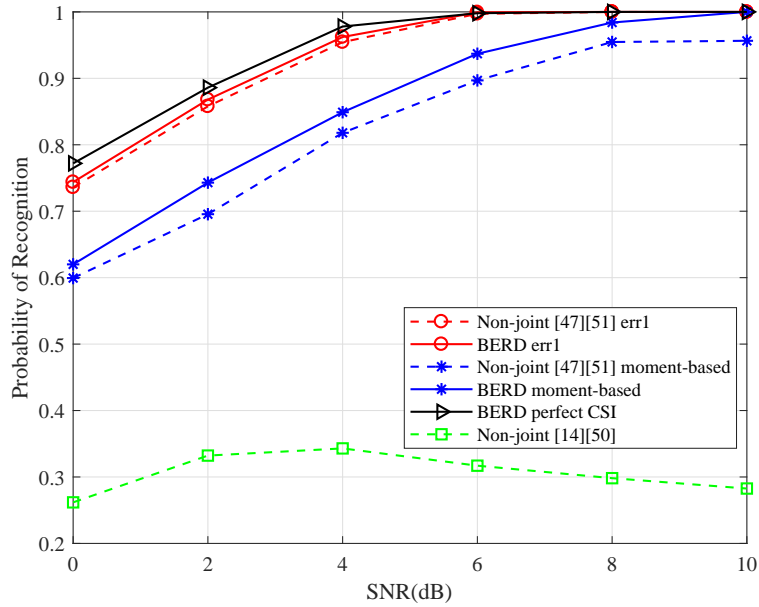}
\end{minipage}}
\subfigure[ ]
{\begin{minipage}[b]{0.7\textwidth}
\includegraphics[width=1\textwidth]{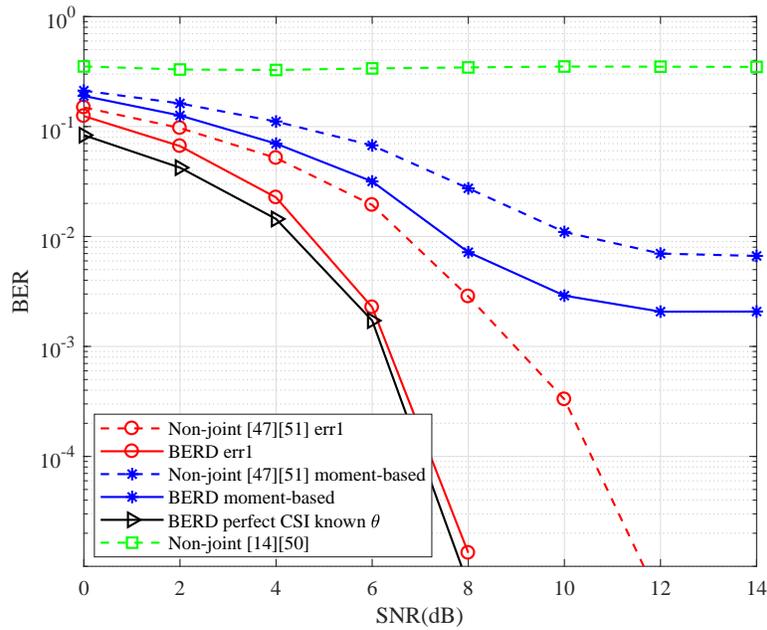}
\end{minipage}}
\caption{(a) The correct MCS recognition probability is evaluated w.r.t. SNR for the proposed BERD receiver compared to the existing schemes.
The benchmark is the BERD receiver with the perfect CSI.
(b) The BER is evaluated w.r.t. SNR for the proposed BERD receiver compared to the existing schemes.
The benchmark is the BERD receiver with the perfect CSI and the true MCS $\theta$.
The code length is $n=648$. The number of the received symbols is $N=648$.}
\label{fig:PCC-R}
\end{figure}

\begin{figure}
\centering
\subfigure[ ]
{\begin{minipage}[b]{0.7\textwidth}
\includegraphics[width=1\textwidth]{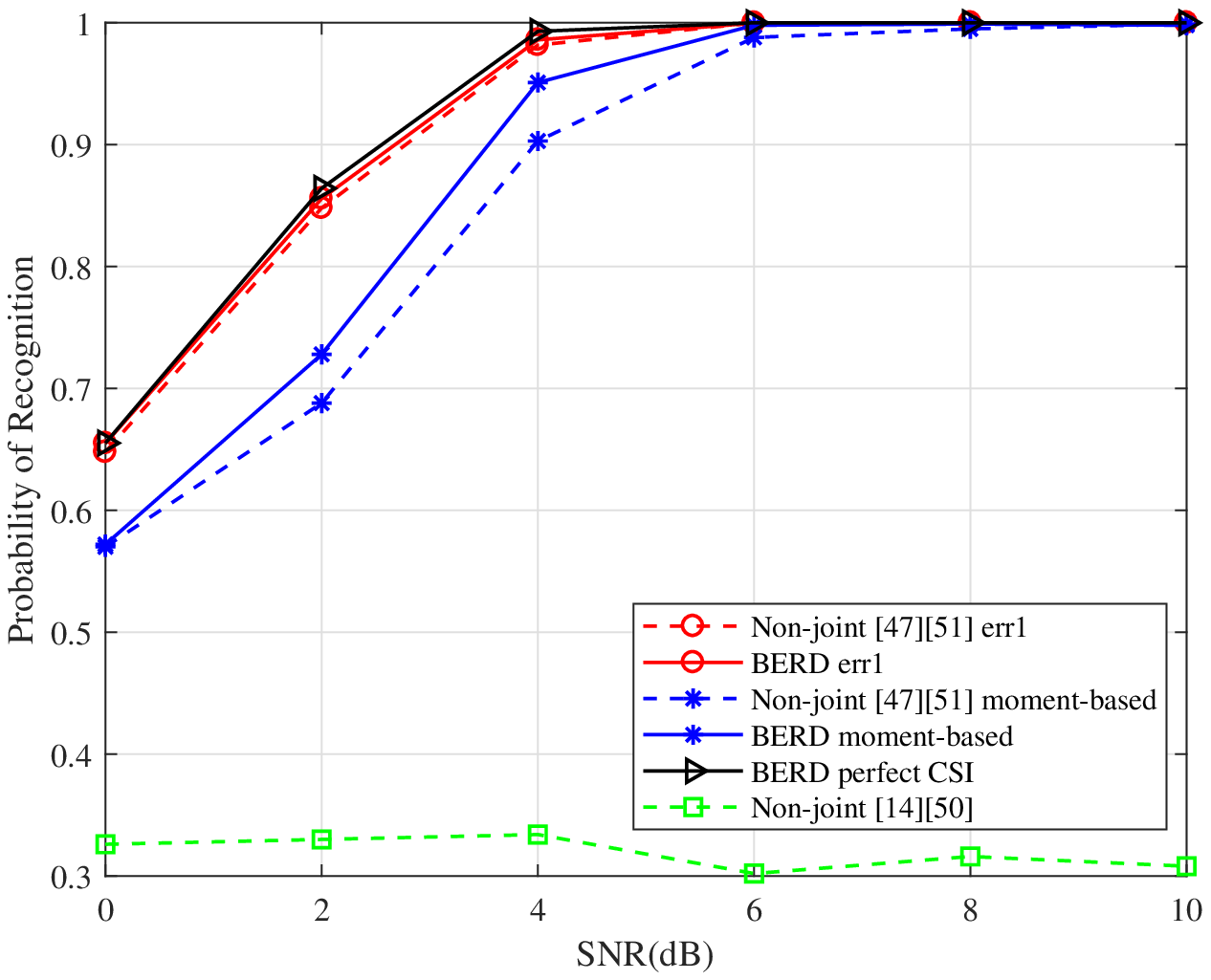}
\end{minipage}}
\subfigure[ ]
{\begin{minipage}[b]{0.7\textwidth}
\includegraphics[width=1\textwidth]{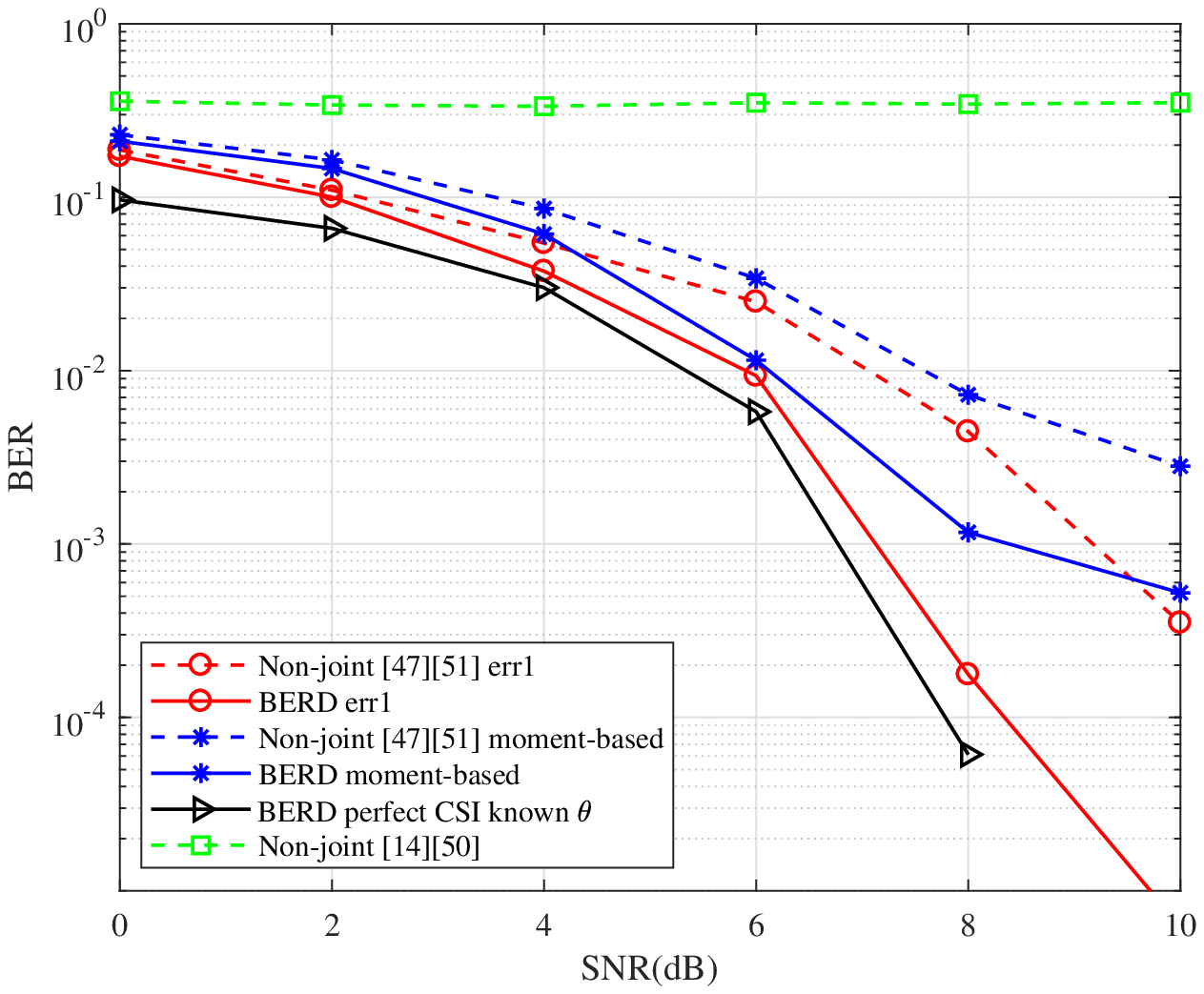}
\end{minipage}}
\caption{(a) The correct MCS recognition probability is evaluated w.r.t. SNR for the proposed BERD receiver compared to the existing schemes.
The benchmark is the BERD receiver with the perfect CSI.
(b) The BER is evaluated w.r.t. SNR for the proposed BERD receiver compared to the existing schemes.
The benchmark is the BERD receiver with the perfect CSI and the true MCS $\theta$.
The code rate is $R=\frac 5 6$. The number of the received symbols is $N=3888$.}
\label{fig:PCC-N}
\end{figure}

\noindent\emph{\textbf{Observation 2:}}
\noindent\emph{The proposed BERD receiver outperforms the existing schemes.
Moreover, with good initial, the MCS recognition performance is within $\mathit{0.3\,{dB}}$ as close to the one with the perfect CSI;
the loss of the BER is within $\mathit{0.5\,{dB}}$ at $\mathit{10^{-3}}$ compared with the one with the perfect CSI and the true MCS $\theta$. (c.f. Figures \ref{fig:PCC-R} and \ref{fig:PCC-N})
}

In Figures \ref{fig:PCC-R} and \ref{fig:PCC-N}, we evaluate the MCS recognition performance and the data detection performance of the proposed BERD receiver with the different initial schemes.
For MCS recognition, the BERD receiver with the perfect CSI serves as the benchmark.
Meanwhile, for data detection, the benchmark is the BERD receiver with the perfect CSI and the true MCS $\theta$.
We also compare our proposed BERD receiver with the existing schemes, which solve the overall problem by simply cascading the existing solution:
1) the first one was designed for the multipath scenario, which combines the approaches in \cite{ref:ZJW} and \cite{ref:LY};
2) the second one was designed for the single-path flat-fading scenario, which cascades the schemes in \cite{ref:modut2} and \cite{MULTI1}.
Note that the schemes in \cite{ref:ZJW} and \cite{ref:modut2} solve the modulation classification, the data detection, and the channel information estimation in the multipath channel and the single-path flat-fading channel, respectively; \cite{ref:LY} and \cite{MULTI1} tackle the channel coding identification, the data detection, and the channel information estimation in the multipath channel and the single-path flat-fading channel, respectively.
In Figure \ref{fig:PCC-R}, the channel coding candidate set $\mathcal{C}$ contains the encoders with different code rates, and the code length is $n=648$; while in Figure \ref{fig:PCC-N}, $\mathcal{C}$ contains the encoders with different code lengths, and the code rate is $R=\frac{5}{6}$.
In addition, the number of the received symbols at each receiver is $N=648$ and $N=3888$ in Figures \ref{fig:PCC-R} and \ref{fig:PCC-N}, respectively.
Note that, for the scheme \cite{ref:modut2} \cite{MULTI1}, we cannot initialize $\boldsymbol\beta$ by the true value of it with some bias, since $\boldsymbol\beta$ is treated as the single-path channel during the estimation and the dimension of the true value of $\boldsymbol\beta$ is not match to $\boldsymbol{\hat\beta}$.
Thus, only the fourth-order moment-based initial scheme is evaluated in this case.

From Figures \ref{fig:PCC-R}(a) and \ref{fig:PCC-N}(a), we can see that, for each initial scheme, the proposed BERD receiver achieves better MCS recognition performance than both the scheme \cite{ref:ZJW} \cite{ref:LY} and the scheme \cite{ref:modut2} \cite{MULTI1}.
Moreover, with good initial, the MCS recognition performance of the BERD receiver is within $0.3\,\mathrm{dB}$ as close to the benchmark.
From Figures \ref{fig:PCC-R}(b) and \ref{fig:PCC-N}(b), the data detection performance of the proposed BERD receiver outperforms the existing schemes with different initials.
Especially in the SNR region with the correct MCS recognition probability of over $90$\%, the gain of the BERD receiver in terms of the data detection is significant.
In addition, with the good initial scheme, the loss in the BER of the data detection is within $0.5\,\mathrm{dB}$ at $10^{-3}$ compared to the benchmark.

\begin{figure}
\centering
\subfigure[ ]
{\begin{minipage}[b]{0.7\textwidth}
\includegraphics[width=1\textwidth]{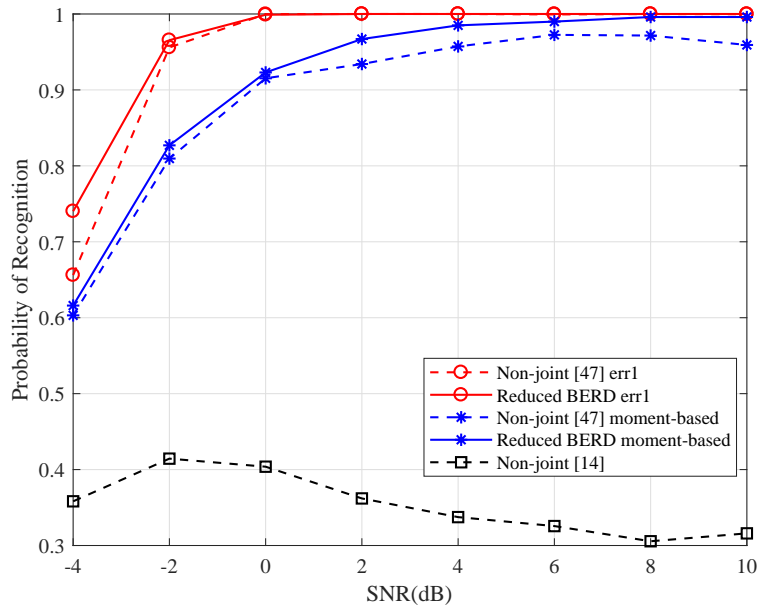}
\end{minipage}}
\subfigure[ ]
{\begin{minipage}[b]{0.7\textwidth}
\includegraphics[width=1\textwidth]{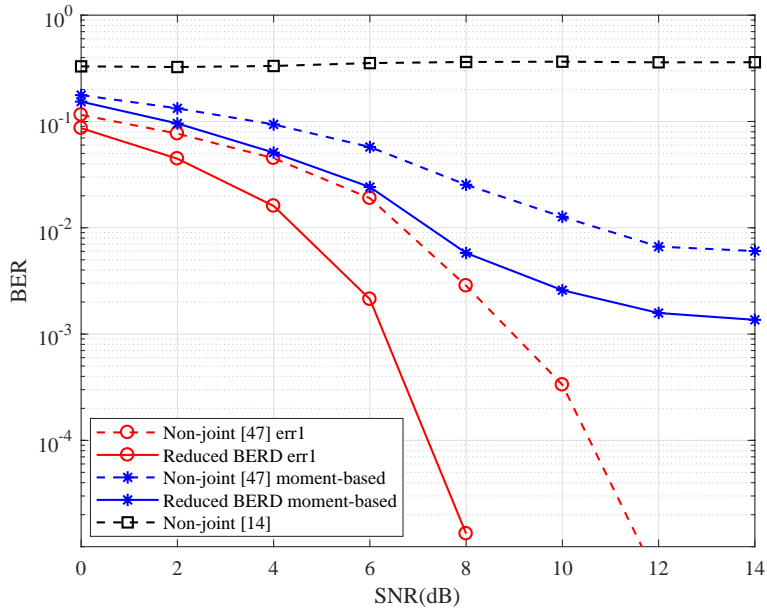}
\end{minipage}}
\caption{(a) The correct modulation classification probability is evaluated w.r.t. SNR for the reduced BERD receiver compared to the existing schemes.
(b) The BER is evaluated w.r.t. SNR for the reduced BERD receiver compared to the existing schemes.
The channel coding $\zeta$ is randomly selected from the encoder candidate set $\mathcal C$, which contains the encoders with different code rates, and the code length is $n=648$. The number of the received symbols is $N=648$.}
\label{fig:PCC-MC}
\end{figure}

\begin{figure}
\centering
\subfigure[ ]
{\begin{minipage}[b]{0.7\textwidth}
\includegraphics[width=1\textwidth]{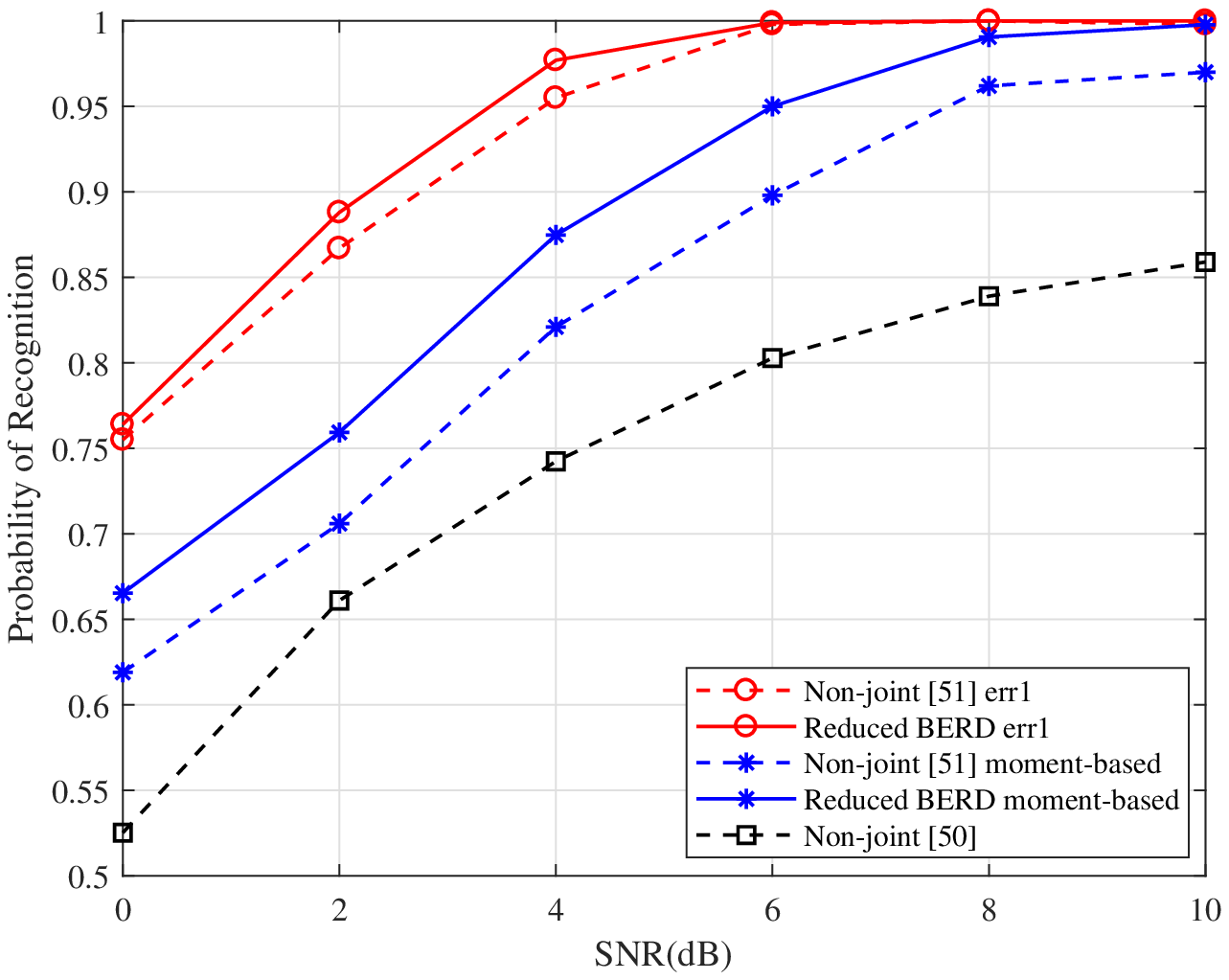}
\end{minipage}}
\subfigure[ ]
{\begin{minipage}[b]{0.7\textwidth}
\includegraphics[width=1\textwidth]{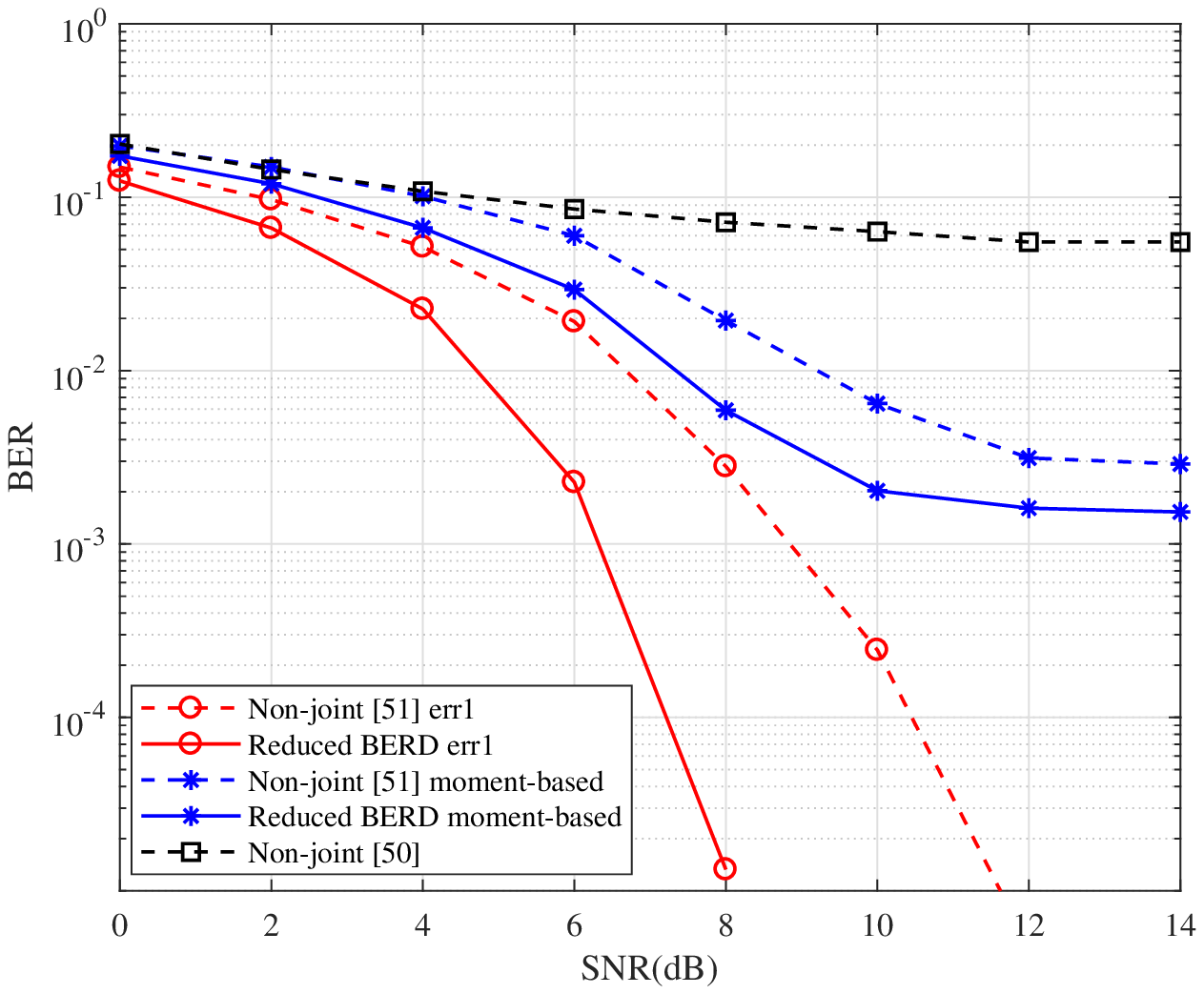}
\end{minipage}}
\caption{(a) The correct channel coding identification probability is evaluated w.r.t. SNR for the reduced BERD receiver compared to the existing schemes.
(b) The BER is evaluated w.r.t. SNR for the reduced BERD receiver compared to the existing schemes.
The modulation $\eta$ is randomly selected from the modulation candidate set $\mathcal M$.
The number of the received symbols is $N=648$.}
\label{fig:PCC-EC}
\end{figure}

In Figures \ref{fig:PCC-R} and \ref{fig:PCC-N}, compared to the scheme \cite{ref:ZJW} \cite{ref:LY}, the distinct merit of the BERD receiver lies in an iterative manner between the EM-based channel estimator and the soft-information detector.
To be specific, the soft-information detector corrects the errors from the Bayes equalizer and regenerates more reliable modulated symbols.
Then, the channel estimation is improved accordingly which further enhances the following data detection.
This iterative manner finally enhances the MCS recognition performance and decreases the BER.
Moreover, we can see that the scheme \cite{ref:modut2} \cite{MULTI1} cannot even achieve the acceptable performance.
This is because the single-path channel estimation method is improper for the multipath scenario, which further results in the low MCS recognition probability and the high BER.

\noindent\emph{\textbf{Observation 3:}}
\noindent\emph{The proposed BERD receiver can be applied to the reduced version of the original problem.
In addition, the reduced BERD receiver still outperforms the existing schemes. (c.f. Figures \ref{fig:PCC-MC} and \ref{fig:PCC-EC})}

In Figures \ref{fig:PCC-MC} and \ref{fig:PCC-EC}, we evaluate the correct recognition probability and BER performance for the reduced version of the original problem.
In Figure \ref{fig:PCC-MC}, we first demonstrate the reduced BERD receiver which contains the tasks of the data detection, the modulation classification, the multipath channel estimation, and the noise power estimation.
In this reduced case, the channel coding $\zeta$ is known at each receiver.
The adopted $\zeta$ is randomly selected from $\mathcal C$, which contains the encoders with different code rates, and the code length $n=648$.
Two schemes in \cite{ref:ZJW} and \cite{ref:modut2} are also evaluated for comparison.
Then, in Figure \ref{fig:PCC-EC}, we evaluate another reduced case, which involves the data detection, the channel coding identification, the multipath channel estimation, and the noise power estimation.
In this reduced BERD receiver, the modulation $\eta$ is known at each receiver, which is randomly selected from $\mathcal M$.
In addition, $\mathcal C$ is the same as Figure \ref{fig:PCC-MC}.
Two schemes in \cite{ref:LY} and \cite{MULTI1} are evaluated for comparison.

From Figures \ref{fig:PCC-MC}(a) and \ref{fig:PCC-EC}(a), we can see that the reduced BERD receiver achieves better modulation classification performance and channel coding identification performance than the existing schemes with different initials.
As Figures \ref{fig:PCC-MC}(b) and \ref{fig:PCC-EC}(b) show, the reduced BERD receiver outperforms the existing schemes in terms of the data detection.
Moreover, the performance gap between the reduced BERD receiver and the existing schemes becomes larger as the SNR increases.
The gain of the BER is attributed to the iterative manner between the EM-based channel estimator and the soft-information detector in the reduced BERD receiver.

\begin{figure}[!t]
\centering
\includegraphics[width=5in]{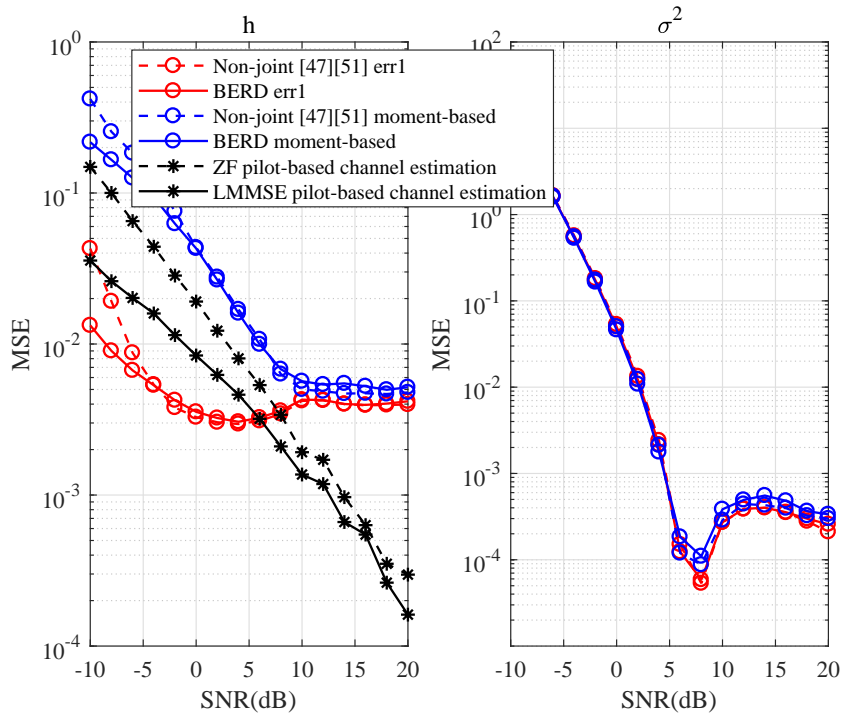}
\caption{(a) The MSE of the channel estimation is evaluated w.r.t. SNR for the proposed BERD receiver compared to the existing schemes.
The benchmarks are the ZF and LMMSE pilot-based channel estimation methods.
(b) The MSE of the noise power estimation is evaluated w.r.t. SNR for the proposed BERD receiver compared to the existing schemes.
The code length is $n=648$. The number of the received symbols is $N=648$.}
\label{fig:MSE}
\end{figure}

\noindent\emph{\textbf{Observation 4:}}
\noindent\emph{The BER performance gain of the proposed BERD receiver over the existing schemes is significant even the recognition performance is similar, which reveals the correct recognition does not guarantee the correct data detection. (c.f. Figures \ref{fig:PCC-R}-\ref{fig:PCC-EC})
}

From Figures \ref{fig:PCC-R}-\ref{fig:PCC-EC}, we observe that the proposed BERD receiver is able to provide slight performance gains in terms of the recognition task compared with the existing schemes; while the BERD receiver outperforms the existing approaches significantly in the BER performance, even when they have similar correct recognition probability.
This result reveals that the correct recognition does not guarantee the correct data detection.

\noindent\emph{\textbf{Observation 5:}}
\noindent\emph{The proposed BERD receiver achieves better channel estimation performance than the existing schemes in the low SNR region.
Moreover, with good initial, the MSE has $3.5\,\mathrm{dB}$ gain at $10^{-2}$ compared to the LMMSE pilot-based channel estimation method;
with worse initial, the loss in the MSE is within $3\,\mathrm{dB}$ at $10^{-2}$ compared to the ZF pilot-based channel estimation method. (c.f. Figure \ref{fig:MSE})}

In Figure \ref{fig:MSE}, we demonstrate the MSE performance of the multipath channel and noise power estimator in the BERD receiver.
The scheme proposed \cite{ref:ZJW} \cite{ref:LY} is evaluated for comparison.
In addition, we also provide the MSE performance of the ZF and LMMSE pilot-based channel estimation methods, which exploits all the transmitted data bits as pilots.
The channel coding candidate set $\mathcal C$ contains the encoders with different code rates, and the code length is $n=648$.

From Figure \ref{fig:MSE}, we can see that the channel estimation performance of the BERD receiver is better than the scheme \cite{ref:ZJW} \cite{ref:LY} in the low SNR region.
This is because the iterative manner between the EM-based channel estimator and the soft-information detector can bring MSE performance gain.
In addition, with good initial, the MSE of the channel estimation has $3.5\,\mathrm{dB}$ gain at $10^{-2}$ compared to the LMMSE pilot-based scheme;
with worse initial, the loss in the MSE of the channel estimation is within $2\,\mathrm{dB}$ gain at $10^{-2}$ compared to the ZF pilot-based scheme.
In addition, in Figure \ref{fig:MSE}(b), we can see that the MSE of the noise power estimation is nearly the same with different initials, which means the noise power estimation is not sensitive to the initial schemes.
Note that, with good initial, the MSE has a deterioration in the high SNR region. 
The reason is that the power of the transmitted signal is set to $1$ in the simulations, then, the noise power decreases relatively as the SNR increases.
However, for the initialization of the noise power, the maximum bias is fixed to $\delta_{\sigma}=0.1$ during the estimation process, which means the relative bias of the initial is larger in the high SNR region.
On the other hand, as the SNR increases, the MLE problem in \eqref{LYbeta} contains more local optimal solutions.
Thus, even the initial is not far from the true value of $\boldsymbol\beta$, the estimation result $\boldsymbol{\hat\beta}$ is still more likely to be trapped in the local optimal solution, which lead to the worse MSE performance.

\section{Conclusion}
\label{sec:conclusion}
In this paper, a complete blind receiver approach named BERD was proposed, which can be applied in both the single receiver and the multiple receivers cases with the distributed manner or the cooperative one.
By iterating between the EM-based channel estimator and the soft-information detector, then, exploiting the likelihood fusion and decision module, the BERD receiver jointly solves the five tasks, including the blind multipath channel estimation, noise power estimation, modulation classification, channel coding identification, and data detection.
We show that the BERD receiver is extremely close to the benchmarks in terms of the MCS recognition and data detection,  and it outperforms the schemes which simply cascade the existing solution to each individual task.
Furthermore, the data detection performance of the reduced BERD receiver also outperforms the existing schemes, even when their recognition performances are similar.
In addition, with a good initial, the channel estimation performance of the proposed BERD receiver is close to the pilot-based methods in the low SNR region; while it floors in the high SNR region, which is not as good as the pilot-based ones.
The ramification of this paper is that an unknown signal can be recognized and decoded with quite little side information. It can be used to combat unknown interference in spectrum sharing or wiretap the information from an adversary, which finds many applications in both civilian and military scenarios.

\appendices
\section{Notations}
\label{app:table}
The notation and description of the BERD approach with a single receiver and multiple receivers are summarized in Tables \ref{table:symbol-list} and \ref{table:symbol-list-multiple}, respectively.

{\small
{
\setlength{\LTcapwidth}{\textwidth}
\begin{longtable}{p{2cm} p{13.8cm}}
\caption{Notation for the Case of a Single Receiver}
\label{table:symbol-list}\\
\toprule

\makebox[1.8cm][c]{Notation} & \makebox[13cm][c]{Description}\\

\midrule
\endfirsthead
\toprule
\endhead
\bottomrule
\endfoot
\endlastfoot

${a}_{\ell}$, ${\hat a}_{\ell}$ & The true and the estimated channel gain of the $\ell$th path, respectively \\

${{\varphi}}_{\ell}$, ${\hat{\varphi}}_{\ell}$ & The true and the estimated channel phase of the $\ell$th path, respectively \\

$\mathcal{A}_g^{\eta}$ & The set of all points which the $g$th bit is zero in the constellation $\eta$ \\

$\boldsymbol{\sf b}$, $\boldsymbol{\hat{b}}$ & The uncoded information bit sequence, the detected information bit sequence\\

$\boldsymbol{{\beta}}$, $\boldsymbol{\hat{\beta}}$ & The collection of the true and the estimated channel parameters including all channel gains, phases, and noise power, respectively \\

$\mathcal C$ & The set of candidate channel encoders \\

${\hat{c}_{j,g}}$ & The detected coded bit that maps to the $g$th bit in the $j$th modulated symbol \\

$\eta$, $\eta^{\prime}$ $\hat\eta$ & The unknown, the hypothesis, and the recognized modulation, respectively \\

${\mathcal F}(\boldsymbol\beta)$ & The log-likelihood function of $\boldsymbol\beta$  \\

$\boldsymbol{G}$, $\boldsymbol{H}$ & The generator matrix, the parity-check matrix\\

$\gamma_i^{\zeta^{\prime}}$ & The LLR of the syndrome APP of the $i$th parity-check bit in the hypothesis channel coding ${\zeta^{\prime}}$ \\

$\varGamma^{\zeta^{\prime}}(\iota)$ & The average LLR of the syndrome APP of the first $\iota$ parity-check bits in the hypothesis channel coding ${\zeta^{\prime}}$ \\

${\hat h}_\ell$ & The initial multipath channel of the $\ell$th path \\

$\varepsilon$, $I_{\text{max}}$, $t_{\text{max}}$ & The stop threshold, the maximum iterations of the BERD receiver and the blind channel estimator, respectively, \\

$L$ & The number of the paths of the wireless channel\\

${\hat\lambda}_{j,g}^{\text{out}}$ & The $g$th output LLR of the $j$th modulated symbol in the soft demodulator \\

${\hat\lambda}_{j,g}^{\text{in}}$ & The $g$th input LLR of $j$th modulated symbol in the soft bit decision module \\

$\mathcal M$ & The set of candidate modulation schemes\\

${\mu}_m$ & The $m$th constellation point in the constellation set\\

$n$ & The length of the codeword \\

$N$ & The number of the received symbols\\

$N_i$ & The number of the non-zero elements in the $i$th row of the parity-check matrix \\

$P$ & The sum of the power of the transmitted symbols\\

$q$ & The length of the uncoded information bit sequence \\

$R$ & The code rate of the codeword \\

${\sf r}_j$ & The $j$th received symbol \\

${\hat\rho}_{m,j}$ & The posterior probability of the $j$th modulated symbol maps to the $m$th constellation point \\

${\sf s}_j$ & The $j$th unknown modulated symbol\\

${\hat{s}}_j$ & The $j$th detected modulated symbol \\

$\boldsymbol{\hat s}_{j}^{j+L}$ & The detected symbol vector including the modulated symbols from ${\hat s}_j$ to ${\hat s}_{j+L}$ \\

$\mathcal S^{\eta}$ & The set of all constellation points in the constellation $\eta$\\

${\sf v}_j$ & The CSCG noise of the $j$th received symbol \\

${\sigma}^2$, ${\hat\sigma}^2$ & The true and the estimated noise power, respectively \\

${w}_{\ell}$ & The noise decomposition factor of the $\ell$th path\\

${{\sf z}_{\ell,j}}$ & The $j$th unknown complete data of the $\ell$th path\\

${{\sf z}_{\ell,j}}$, ${\hat{z}_{\ell,j}}$  & The $j$th determined complete data of the $\ell$th path\\

${\bar{\sf z}_{\ell,j}}$ & The signal which is obtained by taking the expectation of the noise component of ${{\sf z}_{\ell,j}}$ \\

$\theta$, $\theta^{\prime}$, $\hat\theta$ & The unknown, the hypothesis, and the recognized MCS, respectively \\

$\zeta$, $\zeta^{\prime}$ $\hat\zeta$ & The unknown, the hypothesis, and the recognized channel encoder, respectively \\

\bottomrule

\end{longtable}}}

{\small
{
\setlength{\LTcapwidth}{\textwidth}
\begin{longtable}{p{2cm} p{13.8cm}}
\caption{Notation for the Case of Multiple Receivers}
\label{table:symbol-list-multiple}\\
\toprule

\makebox[1.8cm][c]{Notation} & \makebox[13cm][c]{Description}\\
\midrule
\endfirsthead
\toprule
\endhead
\bottomrule
\endfoot
\endlastfoot
${a}_{k,l}, {\hat a}_{k,l}$ & The true and the estimated channel gain of the $\ell$th path at the $k$th receiver, respectively \\

${\varphi}_{k,l}$, ${\hat \varphi}_{k,l}$ & The true and the estimated channel phase of the $\ell$th path at the $k$th receiver, respectively \\

${\boldsymbol B}$, $\boldsymbol{\hat{B}}$ & The collection of the true and the estimated parameter including all channel gains, phases, and noise powers, respectively\\

${\boldsymbol\beta}_k$, $\boldsymbol{\hat{\beta}}_k$ & The collection of the true and the estimated parameters including all channel gains, phases, and noise power at the $k$th receiver, respectively\\

$K$ & The number of the multiple receivers\\

$\boldsymbol{\sf r}_{k,:}$ & The received signal at $k$th receiver \\

${\boldsymbol{\sf r}_{:,j}}$ & The $j$th received symbols of $K$ receivers \\

${\rv r}_{k,j}$ & The $j$th received symbol at the $k$th receiver\\

${\sigma}^2_{k}$, ${\hat\sigma}^2_{k}$ & The true and the estimated noise power at the $k$th receiver, respectively\\

${\sf v}_{k,j}$ & The noise of the $j$th received symbol at the $k$th receiver \\

${w}_{k,\ell}$ & The noise decomposition factor of $\ell$th path at the $k$th receiver \\

\bottomrule

\end{longtable}}}

\section{Proof of Lemma \ref{le:a_phi} and Lemma \ref{le:sigma}}
\label{app:EM}
We first proof the Lemma \ref{le:a_phi}, which states the closed-form expressions of the estimated channel gain and the estimated channel phase in \eqref{LYat} and \eqref{LYphit1}, respectively.
The details of how to design the EM-based channel estimator is provided in the following.

To deal with the MLE problem in \eqref{LYbeta} in a tractable way, the EM-based estimation algorithm is proposed to obtain the local optimal solution of the unknown $\boldsymbol{\beta}$.
In our problem, the E-step and M-step are formulated as
\begin{align}
\text{E-step}:& \quad J( {\boldsymbol{{\beta}}; {\boldsymbol{\hat\beta}[t]}} ) = {{\mathbb E}_{\boldsymbol{\sf Z}| {\boldsymbol{r},\boldsymbol{\hat s};\boldsymbol{\hat\beta}[t]} }} \big[ {\ln p( {\boldsymbol{\sf Z}|\boldsymbol{\hat s};\boldsymbol{{\beta}}})|{\boldsymbol{r}},\boldsymbol{\hat s};\boldsymbol{\hat\beta}[t]} \big] \label{eq:e-step}, \\
\text{M-step}:& \quad \boldsymbol{\hat\beta}[t+1] = \mathop {\arg \max \limits_{\boldsymbol{{\beta}}} }
 J( {\boldsymbol{{\beta}};{\boldsymbol{\hat\beta}[t]}}), \label{eq:m-step}
\end{align}
where $\boldsymbol{\sf Z}=[\boldsymbol {\sf z}_1,\boldsymbol {\sf z}_2,\ldots,\boldsymbol {\sf z}_N]
\in\mathcal{S}^{L\times N}$ is the complete data which cannot be obtained directly,
and $\boldsymbol {\sf z}_j=[{\sf z}_{0,j}, {\sf z}_{1,j},\ldots,{\sf z}_{L-1,j}]^{\mathrm{T}}\in\mathcal{S}^{L}$.
Additionally, $p(\boldsymbol{\sf Z}|\boldsymbol{\hat s};\boldsymbol{{\beta}})$ is the known density of $\boldsymbol{\sf Z}$.
Considering the multipath channel estimation problem in our BERD receiver, the received signal from the multipath channel is the summation of the signals from all the independent paths.
Hence, we choose the complete data as
\begin{align}
\label{eq:z_lj}
{\sf z}_{\ell,j} = {a}_{\ell}{e^{\imath{\varphi}_{\ell}}}{{\sf s}_{j-\ell}} + {{\sf v} _{\ell,j}}, \quad j\in\mathcal{I}_{N}, \quad \ell\in\mathcal{I}_{L}-1
\end{align}
where ${\sf v}_{\ell,j}$ is an i.i.d. CSCG distributed noise with the power
${\sigma}_{\ell}^{2}={w}_{\ell}{\sigma}^{2}$.
Define the noise decomposition factor as $\boldsymbol{w}=[{w}_0,{w}_1,\ldots,{w}_{L-1}]^{\mathrm T}$
and all the elements satisfy $\sum_{\ell\in\mathcal{I}_{L}-1} {w}_{\ell}=1$,\footnote{The choice of the noise decomposition factor $\boldsymbol{w}$ does not affect the estimation results, and the impact of $\boldsymbol{w}$ on the convergence of the proposed iterative algorithm is discussed in Appendix \ref{app:convergence}.} thus, the noise element ${\sf v}_{j}$ satisfies $\sum_{\ell\in\mathcal{I}_{L}-1}{\sf v}_{\ell,j}={\sf v}_{j}$.
Then, the relation between the received signal ${\sf r}_{j}$ and the complete data ${\sf z}_{\ell,j}$ is given by
\begin{align}\label{Fz}
{\sf r}_{j}& = \sum_{\ell\in\mathcal{I}_{L}-1}{{\sf z}_{\ell,j}} \quad j\in\mathcal{I}_{N}.
\end{align}
Let $\overline{\sf z}_{\ell,j}={a}_{\ell}e^{\imath{\varphi}_{\ell}} {\sf s}_{j-\ell}$.
Since the modulated symbols ${\hat{s}_{j}}$, $j\in\mathcal{I}_N$, have been determined by the soft-information detector and regenerator in Section \ref{sec:DEC},
$\overline{z}_{\ell,j}={a}_{\ell}e^{\imath{\varphi}_{\ell}} {\hat{s}_{j-\ell}}$ is the unknown deterministic signal.
Then, $\ln p\left( {\boldsymbol{\sf Z}|\boldsymbol{\hat{s}};\boldsymbol{{\beta}}} \right)$ in \eqref{eq:e-step} can be expressed as \cite{ref[TWC29]}
\begin{equation}\label{LYlnk1}
\ln p( {\boldsymbol{\sf Z}|\boldsymbol{\hat{s}}; \boldsymbol{{\beta}}}) = {C_1} - \sum\limits_{j\in\mathcal{I}_{N}} {\sum\limits_{\ell\in\mathcal{I}_{L}-1} {\frac{1}{{{\sigma} _{\ell}^2}}} } {\left| {{{\sf z}_{\ell,j}} - {{\bar {z}}_{\ell,j}}} \right|^2}
\end{equation}
where $C_{1}$ is a value that is independent of the blind channel estimation.
Then, given $\boldsymbol{r}$, $\boldsymbol{\hat s}$, and ${\boldsymbol{\hat\beta}[t]}$, the conditional expectation of \eqref{LYlnk1} is written as \cite{ref[TWC30]}
\begin{equation}\label{LYC2}
J( {\boldsymbol{{\beta}} ;{\boldsymbol{\hat\beta}[t]} } )
= {C_2} - \sum\limits_{j\in\mathcal{I}_{N}} {\sum\limits_{\ell\in\mathcal{I}_{L}-1} {\frac{1}{{\sigma} _{\ell}^2}} } {\left| {{\hat{z}_{\ell,j}[t]} - {a}_{\ell}e^{\imath{\varphi}_{\ell}}{{\hat s}_{j-\ell}}} \right|^2}
\end{equation}
where $C_{2}$ is another value independent of the blind channel estimation;
$\hat z_{\ell,j}$ is the conditional expectation of the $j$th complete data of the $\ell$th path.
Accordingly, the E-step in \eqref{eq:e-step} and the M-step in \eqref{eq:m-step} can be respectively simplified as

\noindent \quad E-step: Compute the complete data
\begin{equation}
\label{LYcomplete}
{\hat{z}_{\ell,j}[t]} = {\bar{z}_{\ell,j}[t]} + w_{\ell}\Bigg( {{r_{j}}
 - \sum\limits_{\ell\in\mathcal{I}_{L}-1} {\bar{z}_{\ell,j}[t]} } \Bigg),
  \quad \ell\in\mathcal{I}_{L}-1, \quad j\in\mathcal{I}_{N}
\end{equation}
\quad M-step: Estimate the channel information
\begin{equation}\label{LYbetamin}
\boldsymbol{\hat\beta}_{\ell}[t+1] = \mathop {\arg \min \limits_{\boldsymbol{{\beta}}_{\ell} }} \sum\limits_{j\in\mathcal{I}_{N}} {\frac{1}{{\sigma} _{\ell}^2}} {\left| {{\hat{z}_{\ell,j}[t]} - {a}_{\ell}e^{\imath{\varphi}_{\ell}}{{\hat s}_{j-\ell}}} \right|^2}, \quad \ell\in\mathcal{I}_{L}-1.
\end{equation}
It should be noted that by setting the derivative w.r.t. ${a}_{\ell}$ in \eqref{LYbetamin} to zero, we have the updated channel gain ${\hat a}_{\ell}[t+1]$ in \eqref{LYat}.
Since the second derivative of \eqref{LYbetamin} w.r.t. $a_\ell$ is a negative definite matrix, the equation \eqref{LYat} is the optimal estimate of ${a}_{\ell}$.
Then, substituting \eqref{LYat} into \eqref{LYbetamin}, we obtain the updated channel phase ${\hat\varphi}_{\ell}[t+1]$ in \eqref{LYphit1} with some straightforward operations.
Therefore, the proof of Lemma \ref{le:a_phi} is concluded.

Furthermore, we prove the estimated noise power in Lemma \ref{le:sigma} with Lemma \ref{le:a_phi}.
The noise-free signal is first determined by using the updated ${\hat a}_{\ell}[t+1]$ in \eqref{LYat}, the updated ${\hat \varphi}_{\ell}[t+1]$ in \eqref{LYphit1}, and the modulated symbols $\boldsymbol{\hat s}$.
Then, the noise element is derived by subtracting the noise-free signal from the received signal.
Consequently, the noise power ${\hat\sigma}^2[t+1]$ in iteration $t+1$ is simply estimated by computing the expectation of the noise power, as formulated in \eqref{sigma}.

\section{Convergence of the EM-based Channel Estimation}
\label{app:convergence}
The convergence of the proposed EM-based channel  estimation algorithm is discussed in this section, which is directly related to the performance of the BERD receiver.
The noise decomposition factor $\boldsymbol{w}$ is introduced to define the complete data, and the impact of $\boldsymbol{w}$ on the convergence rate and the convergence result of the EM-based channel estimation algorithm is clarified in Lemma \ref{le:convergence}.
\begin{lemma} \label{le:convergence}
\em{The choice of the noise decomposition factor $\boldsymbol{w}$ affects the convergence rate of the proposed EM-based estimation algorithm only, while it does not change the convergence results.}
\end{lemma}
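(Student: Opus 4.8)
The plan is to exploit the single structural fact that the decomposition factor $\boldsymbol{w}$ enters the algorithm \emph{only} through the definition of the complete data in \eqref{eq:z_lj}--\eqref{LYcomplete}, and never through the observed-data log-likelihood $\mathcal{F}(\boldsymbol\beta)$ in \eqref{Lbeta}, which depends on $\boldsymbol r$, $\boldsymbol{\hat s}$, and $\boldsymbol\beta$ alone. From this I would separate the two assertions: first that the limit point is $\boldsymbol{w}$-invariant, and second that $\boldsymbol{w}$ can alter only the speed of approach to that limit. Recall also that $w_\ell>0$ is forced by $\sigma_\ell^2=w_\ell\sigma^2>0$ together with $\sum_{\ell} w_\ell=1$.

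For the invariance of the result, I would work directly with the fixed-point equations. At a fixed point the updates \eqref{LYat} and \eqref{LYphit1} reproduce the current $\hat a_\ell,\hat\varphi_\ell$, so that $\bar z_{\ell,j}=\hat a_\ell e^{\imath\hat\varphi_\ell}\hat s_{j-\ell}$ is stationary and $\hat z_{\ell,j}=\bar z_{\ell,j}+w_\ell\big(r_j-\sum_{\ell'}\bar z_{\ell',j}\big)$. Substituting this into \eqref{LYat}--\eqref{LYphit1} and demanding that the gain and phase be unchanged, one finds that every deviation term carries the common factor $w_\ell$; since $w_\ell>0$, dividing it out collapses both conditions to $\sum_{j}\hat s_{j-\ell}^{*}\big(r_j-\sum_{\ell'}\hat a_{\ell'}e^{\imath\hat\varphi_{\ell'}}\hat s_{j-\ell'}\big)=0$ for each $\ell$, which is exactly the stationarity equation $\nabla\mathcal{F}(\boldsymbol\beta)=\boldsymbol 0$ and is manifestly $\boldsymbol{w}$-free. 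The noise-power update \eqref{sigma} is a deterministic function of the converged $(\hat a_\ell,\hat\varphi_\ell)$ and the data with no explicit $\boldsymbol{w}$, so its limit inherits the invariance. Together with the monotone ascent $\mathcal{F}(\boldsymbol{\hat\beta}[t+1])\geq\mathcal{F}(\boldsymbol{\hat\beta}[t])$ guaranteed for any EM specification, a common initialization therefore yields the same stationary point for every admissible $\boldsymbol{w}$.

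For the rate, I would linearize the EM map $M$ about a fixed point $\boldsymbol\beta^{*}$, giving $\boldsymbol{\hat\beta}[t+1]-\boldsymbol\beta^{*}\approx \boldsymbol{D}M(\boldsymbol\beta^{*})\,(\boldsymbol{\hat\beta}[t]-\boldsymbol\beta^{*})$, so the local linear convergence rate is the spectral radius of the Jacobian $\boldsymbol{D}M(\boldsymbol\beta^{*})$. By the Dempster--Laird--Rubin identity this Jacobian equals the fraction of missing information $\boldsymbol{I}-\boldsymbol{I}_{\text{com}}^{-1}(\boldsymbol\beta^{*})\,\boldsymbol{I}_{\text{obs}}(\boldsymbol\beta^{*})$, where $\boldsymbol{I}_{\text{obs}}=-\nabla^2\mathcal{F}$ is $\boldsymbol{w}$-free, whereas $\boldsymbol{I}_{\text{com}}$ is assembled from the per-path terms of \eqref{LYlnk1} weighted by $1/\sigma_\ell^2=1/(w_\ell\sigma^2)$. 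Hence $\boldsymbol{I}_{\text{com}}$, and with it the spectral radius of $\boldsymbol{D}M(\boldsymbol\beta^{*})$, varies with $\boldsymbol{w}$: the weights govern how fast, not where, the iteration settles.

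The main obstacle I anticipate is not the invariance---the $w_\ell$-cancellation above makes that clean---but showing that the rate \emph{genuinely} depends on $\boldsymbol{w}$, i.e.\ that the $\boldsymbol{w}$-dependence entering through $\boldsymbol{I}_{\text{com}}$ does not accidentally cancel in the dominant eigenvalue. I would address this by evaluating the fraction-of-missing-information matrix explicitly for the Gaussian complete-data model, where the block structure makes the $1/w_\ell$ scaling of each path's information transparent, and by exhibiting a small two-path instance in which two admissible choices of $\boldsymbol{w}$ yield different spectral radii while, as proved above, returning the identical fixed point. This simultaneously confirms both halves of the claim.
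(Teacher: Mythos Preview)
Your argument is correct and, for the invariance half, essentially coincides with the paper's: both show that at a fixed point the update reads $\hat h_\ell[t+1]=\hat h_\ell[t]+\tfrac{w_\ell}{P}\sum_j \hat s_{j-\ell}^{*}\big(r_j-\sum_{\ell'}\bar z_{\ell',j}[t]\big)$, so stationarity forces the $\boldsymbol w$-free normal equations $\sum_j \hat s_{j-\ell}^{*}\big(r_j-\sum_{\ell'}\hat h_{\ell'}\hat s_{j-\ell'}\big)=0$ once $w_\ell>0$ is divided out.

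For the rate, the two routes differ. You invoke the Dempster--Laird--Rubin missing-information identity and argue that $\boldsymbol I_{\text{com}}$ carries the $1/w_\ell$ weights while $\boldsymbol I_{\text{obs}}$ does not, hence the spectral radius of $\boldsymbol I-\boldsymbol I_{\text{com}}^{-1}\boldsymbol I_{\text{obs}}$ varies with $\boldsymbol w$. The paper instead works purely algebraically: it substitutes \eqref{LYcomplete} into the M-step, rewrites the iteration in matrix form, and reads off the Jacobian explicitly as $\boldsymbol U=\boldsymbol I_L-\tfrac{1}{P}(\boldsymbol w\otimes\boldsymbol 1_L^{\mathrm T})\boldsymbol{\hat S}\boldsymbol{\hat S}^{\mathrm H}$. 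Your approach is more conceptual and transfers to other EM specifications without recomputation; the paper's buys a closed-form expression for $\boldsymbol U$ in which the $\boldsymbol w$-dependence is manifest, so your ``non-cancellation'' worry is settled by inspection rather than by the two-path example you propose.
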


\begin{proof}[\quad\it{Proof:}]
We first prove the impact of $\boldsymbol w$ on the convergence rate.
The EM algorithm utilizes the estimates of the previous iteration to update the new estimates of the unknown parameters by iterating between the E-step and the M-step. Thus, a mapping is defined as $\boldsymbol{\hat{\beta}}[t+1]=\mathfrak{F}(\boldsymbol{\hat{\beta}}[t])$,
where $\mathfrak{F}(\cdot)$ is a continuous function.
Note that $\mathfrak{F}(\cdot)$ can find a stationary point
when the EM algorithm converges, i.e., $\boldsymbol{\tilde{{\beta}}}=\mathfrak{F}(\boldsymbol{ \tilde {\beta}})$.\footnote{In our proposed algorithm, $\boldsymbol{\tilde{{\beta}}}$ may be a local optimal solution or a global optimal solution,
which depends on the initial.}
Then, the Taylor's series expansion of $\mathfrak{F}(\cdot)$ w.r.t. $\boldsymbol{\tilde {\beta}}$ can be expressed as \cite{ref[TWC27]}
\begin{align}
\label{eq:EM-F}
\mathfrak{F}(\boldsymbol{\hat{\beta}}[t])=\mathfrak{F}(\boldsymbol{\tilde{\beta}})
+\boldsymbol{U}(\boldsymbol{\hat{\beta}}[t]-\boldsymbol{\tilde{\beta}})
\end{align}
where $\boldsymbol{U}=\frac{\partial {\mathfrak{F}(\boldsymbol{\hat{\beta}}[t])}}{\partial \boldsymbol{\hat{\beta}}[t]}\bigg|
_{\boldsymbol{\hat{\beta}}[t]=\boldsymbol{\tilde{{\beta}}}}$.
By adopting the mapping function, \eqref{eq:EM-F} can be rewritten as
\begin{align}
\label{eq:h_r+1-h}
\boldsymbol{\hat{\beta}}[t+1]-\boldsymbol{\tilde{\beta}}
=\boldsymbol{U}(\boldsymbol{\hat{\beta}}[t]-\boldsymbol{\tilde{\beta}}).
\end{align}
From \cite{ref[TWC27]}, we know that the convergence rate $u_{c}$ of the EM algorithm
is defined as the largest eigenvalue ${\delta}_{\text{max}}$ of $\boldsymbol{U}$, i.e., ${u}_c={\delta}_{\text{max}}$.
In the following, we define $\boldsymbol{h}=\left[{h}_{0},{h}_{1},\ldots,{h}_{L-1}\right]^{\mathrm{T}}$
to simplify the expression of $\boldsymbol U$,
where ${h}_{\ell}={a}_{\ell}e^{\imath{\varphi}_{\ell}}$.
Then, the formula \eqref{LYat} and \eqref{LYphit1} are  rewritten as
\begin{align}
\label{eq:h=aphi}
{\hat h}_{\ell}[t+1]=
\frac{1}{{P}} \sum_{j\in\mathcal{I}_{N}}\hat {s}_{j-\ell}^{\ast}
\hat{z}_{\ell,j}[t].
\end{align}
Substituting the complete data in \eqref{LYcomplete} into \eqref{eq:h=aphi}, we have
\begin{align}
\label{eq:h_simplify}
{\hat h}_{\ell}[t+1]=
{\hat h}_{\ell}[t]+\frac{1}{{P}}\sum_{j\in\mathcal{I}_{N}}
{w}_{\ell}\Big(\hat {s}_{j-\ell}^{\ast}{r}_{j}- \hat {s}_{j-\ell}\sum_{\ell\in\mathcal{I}_{L}-1}
\hat {s}_{j-\ell}^{\ast}
{\hat h}_{\ell}[t]\Big).
\end{align}
By some manipulation, \eqref{eq:h_simplify} can be simplified as
\begin{align}
\label{eq:h_matrix}
\boldsymbol{\hat h}[t]=\Big(\boldsymbol{I}_L-\frac{1}{P}\big(\boldsymbol{w} \otimes \boldsymbol{1}_L^{\mathrm{T}} \big)
{\boldsymbol{\hat S}\boldsymbol{\hat S}^{\mathrm{H}}}\Big)^{\mathrm T}\Big(\boldsymbol{\hat h}[t+1]-\frac{1}{P}\big(\boldsymbol{w} \otimes \boldsymbol{1}_L^{\mathrm{T}} \big)
{\boldsymbol{\hat S}\boldsymbol{r}}\Big)
\end{align}
where $\boldsymbol{\hat S}\in \mathcal{S}^{L\times N}$ represents the transmitted symbol matrix, and the $\ell$th row of $\boldsymbol{\hat S}$ is the transmitted signal passing through the $\ell$th path, which has been determined in the soft-information detector and regenerator in Section \ref{sec:DEC}.
Substituting \eqref{eq:h_matrix} into \eqref{eq:h_r+1-h}, we have
\begin{align}
& \bigg(\boldsymbol{I}_L-\boldsymbol{U}\Big(\boldsymbol{I}_L-\frac{1}{P}\big(\boldsymbol{w} \otimes \boldsymbol{1}_L^{\mathrm{T}} \big)
{\boldsymbol{\hat S} \boldsymbol{\hat S}^{\mathrm{H}}}\Big)^{\mathrm T}\bigg)\boldsymbol{\hat h}[t+1] \\
&\quad\quad =(\boldsymbol{I}_L-\boldsymbol{U})\boldsymbol{\tilde{h}}-\frac{\boldsymbol{U}}{P}\Big(\boldsymbol{I}_L-\frac{1}{P}\big(\boldsymbol{w} \otimes \boldsymbol{1}_L^{\mathrm{T}} \big)
{\boldsymbol{\hat S} \boldsymbol{\hat S}^{\mathrm{H}}}\Big)^{\mathrm T}\big(\boldsymbol{w} \otimes \boldsymbol{1}_L^{\mathrm{T}} \big){\boldsymbol{\hat S} \boldsymbol{r}}.
\end{align}
From the mapping function $\boldsymbol{\hat h}[t+1]=\mathfrak{F}(\boldsymbol{\hat h}[t])$ and $\boldsymbol{U}=\frac{\partial {\mathfrak{F}(\boldsymbol{\hat h}[t])}}{\partial \boldsymbol{\hat h}[t]}\Big|
_{\boldsymbol{\hat h}[t]=\boldsymbol{\tilde{h}}}$, we obtain $\boldsymbol U$ as
\begin{align}
\label{eq:U}
\boldsymbol{U}=\boldsymbol{I}_L-\frac{1}{P}\big(\boldsymbol{w} \otimes \boldsymbol{1}_L^{\mathrm{T}} \big)
{\boldsymbol{\hat S} \boldsymbol{\hat S}^{\mathrm{H}}}.
\end{align}
The convergence rate $u_c$ is the largest eigenvalue of $\boldsymbol U$, which is related to $\boldsymbol w$.
Hence, we conclude that the noise decomposition factor $\boldsymbol w$ has impact on $u_c$.

To further illustrate the statement that the noise decomposition factor $\boldsymbol{w}$ has no impact on the convergence result, we first substitute the E-step in \eqref{LYcomplete} into \eqref{eq:h=aphi}, and \eqref{eq:h=aphi} can be rewritten as
\begin{align}
\label{eq:h_convergence}
{\hat h}_{\ell}[t+1]
=\frac{1}{{P}}\sum_{j\in\mathcal{I}_{N}}
{\hat s}_{j-\ell}^{\ast}
\Bigg(\bar{z}_{\ell,j}[t]+{w}_{\ell}\bigg({r}_{j}-\sum_{\ell\in\mathcal{I}_{L}-1}\bar{z}_{\ell,j}[t]\bigg)\Bigg).
\end{align}
By some manipulation, we have
\begin{align}
{\hat h}_{\ell}[t+1]
&=\frac{{\hat h}_{\ell}[t]}{{P}}\sum_{j\in\mathcal{I}_{N}}
|{\hat {s}}_{j-\ell}|^2
+\frac{{w}_{\ell}}{{P}}\sum_{j\in\mathcal{I}_{N}}{\hat {s}}_{j-\ell}^{\ast}\bigg({r}_{j}-\sum_{\ell\in\mathcal{I}_{L}-1}\bar{z}_{\ell,j}[t]\bigg) \\
&={\hat h}_{\ell}[t]+\frac{{w}_{\ell}}{{P}}
\sum_{j\in\mathcal{I}_{N}}{\hat s}_{j-\ell}^{\ast}\bigg({r}_{j}-\sum_{\ell\in\mathcal{I}_{L}-1}\bar{z}_{\ell,j}[t]\bigg).
\label{eq:h_converge_result}
\end{align}
Since it has been proved that, when the EM algorithm converges, we have $\|{\hat h}_{\ell}[t+1]-\hat{{h}}_{\ell}[t]\|^2\rightarrow 0$.
From \eqref{eq:h_converge_result}, we can see that the impact of the noise on the channel estimation becomes smaller, i.e., ${\hat s}_{j-\ell}^{\ast}({r}_{j}-\sum_{\ell\in\mathcal{I}_{L}-1}\bar{z}_{\ell,j}[t])\rightarrow 0$ as the iteration proceeds, which indicates that the choice of $\boldsymbol w$ has no impact on the convergence results.
\end{proof}

\begin{remark}
\em{Different from the general intuitions, the choice of $\boldsymbol w$ relevant to the complete data ${\hat z}_{\ell,j}$ in the E-step has no impact on the convergence result of the channel information estimation.
This crucial discovery guarantees the convergence and effectiveness of the proposed BERD receiver, which means no matter how to choose $\boldsymbol w$, the proposed scheme always converges to the same result.
Nevertheless, a better $\boldsymbol w$ can accelerate the convergence rate.}
\end{remark}

\section{Proofs of Lemma \ref{le:LLR-metric} and Theorem \ref{le:LLR-QAM}}
\label{app:LLR-QAM}
We first prove the LLR metric in $\mathrm{GF} (2)$ in Lemma \ref{le:LLR-metric}, which is used to prove the LLR of the syndrome APP stated in Theorem \ref{le:LLR-QAM}.

Considering two i.i.d. Bernoulli random variables ${\sf x}_1$ and ${\sf x}_2$, the probability of taking ${\sf x}_1\oplus {\sf x}_2=0$ is written as
\begin{align}
p({\sf x}_1\oplus {\sf x}_2=0)=p({\sf x}_1=0)p({\sf x}_2=0)+(1-p({\sf x}_1=0))(1-p({\sf x}_2=0))
\end{align}
where
\begin{align}
p({\sf x}_j=0)=\frac{e^{\mathcal{L}({\sf x}_j)}}{1+e^{\mathcal{L}({\sf x}_j)}}, \quad j\in\mathcal I_2.
\end{align}
Then, the LLR metric of ${\sf x}_1\oplus {\sf x}_2$ is derived as
\begin{align}
\mathcal{L}({\sf x}_1\oplus {\sf x}_2)
=\ln \frac{1+e^{\mathcal{L}({\sf x}_1)} e^{\mathcal{L}({\sf x}_2)}}{e^{\mathcal{L}({\sf x}_1)}+e^{\mathcal{L}({\sf x}_2)}}.
\end{align}
Furthermore, for the i.i.d. Bernoulli random variables ${\sf x}_j$, $j\in\mathcal I_N$,
$\mathcal{L}({\sf x}_1\oplus {\sf x}_2\oplus $ $\ldots \oplus {\sf x}_N)$ can de obtained by adopting the inductive methods
\begin{align}
\mathcal{L}({\sf x}_1\oplus {\sf x}_2\oplus \ldots \oplus {\sf x}_N)
=\ln \frac{\prod_{j\in\mathcal{I}_N}\left(e^{\mathcal{L}({\sf x}_j)}+1\right) + \prod_{j\in\mathcal{I}_N}\left(e^{\mathcal{L}({\sf x}_j)}-1\right)}
{\prod_{j\in\mathcal{I}_N}\left(e^{\mathcal{L}({\sf x}_j)}+1\right) - \prod_{j\in\mathcal{I}_N}\left(e^{\mathcal{L}({\sf x}_j)}-1\right)}.
\end{align}
By utilizing the function $\tanh\frac{1}{2}{{\sf x}_j}=\frac{e^{{\sf x}_j}-1}{e^{{\sf x}_j}+1}$,
$\mathcal{L}({\sf x}_1\oplus {\sf x}_2\oplus \ldots \oplus {\sf x}_N)$ is rewritten as
\begin{align}
\label{eq:tanh-1}
\mathcal{L}({\sf x}_1\oplus {\sf x}_2\oplus \ldots \oplus {\sf x}_N)
&=\ln \frac{1+\prod_{j\in\mathcal{I}_N}\tanh\frac{1}{2}{\mathcal{L}({\sf x}_j)}}
{1-\prod_{j\in\mathcal{I}_N}\tanh\frac{1}{2}{\mathcal{L}({\sf x}_j)}} \\
&=2\tanh^{-1}\prod_{j\in\mathcal{I}_N}\tanh \frac{1}{2}{\mathcal{L}({\sf x}_j)}.
\end{align}
Hence, we obtain the LLR metric in Lemma \ref{le:LLR-metric}.

To further prove the LLR of the syndrome APP stated in Theorem \ref{le:LLR-QAM}, we first derive the posterior probability LLR of the coded bit ${\sf {c}}_{j,g}$, which is denoted by
\begin{align}
\mathcal{L}({\sf {c}}_{j,g}|{r}_{j},\boldsymbol{{s}}_{j-L+1}^{j-1};{\boldsymbol{\beta}})
&={\lambda}_{j,g}^{\text{out}},
\quad j\in\mathcal{I}_N, \quad g\in\mathcal{I}_{\log|\mathcal S|}.
\label{QAMLLR}
\end{align}
Since we assume perfect synchronization, the relation between the codeword $\boldsymbol{\sf{\tilde{c}}}$ and the coded bits ${\sf c}_{j,g}$, $j\in\mathcal{I}_N$, $g\in\mathcal{I}_{\log|\mathcal S|}$, is $[\tilde{\sf c}_1,\tilde{\sf c}_2,\ldots,\tilde{\sf c}_n]^\mathrm{T}=[{\sf c}_{1,1},{\sf c}_{1,2},\ldots,{\sf c}_{1,\log|\mathcal S|},{\sf c}_{2,1},\ldots,{\sf c}_{N,\log|\mathcal S|}]^\mathrm{T}$.
In addition, $\boldsymbol{\psi}=\boldsymbol{\lambda}^{\text{out}}$.
Given the modulation $\eta$ and a $(n,q)$ linear block code $\zeta$, from Lemma \ref{le:LLR-metric}, \eqref{eq:syndrome}, and \eqref{QAMLLR}, the LLR of the syndrome APP of the $i$th parity-check bit is obtained by
\begin{align}  \label{gamaQAM}
{\gamma}_{i}^{\eta,\zeta}
&=\mathcal{L} \bigg({\tilde{\sf c}_{\pi_{i}^{\eta,\zeta}(1)}}  \oplus  \cdots  \oplus {\tilde{\sf c}_{{\pi_{i}^{\eta,\zeta}(N_{i})}}}|\boldsymbol{r},\boldsymbol{s}^{\eta,\zeta};\boldsymbol{\beta}^{\eta,\zeta}\bigg) \\
&= 2\tanh^{-1}\Bigg(\prod_{\tau\in\mathcal I_{N_i}}\tanh\Big(\frac{1}{2}{\psi}_{\pi_{i}^{\eta,\zeta}(\tau)}\Big)\Bigg),
\quad i\in\mathcal{I}_{n-q}.
\end{align}
Therefore, Theorem \ref{le:LLR-QAM} is concluded.

\bibliographystyle{plainnat}

\bibliography{BERD_iterative}

\begin{thebibliography}{10}
\providecommand{\url}[1]{#1}
\csname url@samestyle\endcsname
\providecommand{\newblock}{\relax}
\providecommand{\bibinfo}[2]{#2}
\providecommand{\BIBentrySTDinterwordspacing}{\spaceskip=0pt\relax}
\providecommand{\BIBentryALTinterwordstretchfactor}{4}
\providecommand{\BIBentryALTinterwordspacing}{\spaceskip=\fontdimen2\font plus
\BIBentryALTinterwordstretchfactor\fontdimen3\font minus
  \fontdimen4\font\relax}
\providecommand{\BIBforeignlanguage}[2]{{%
\expandafter\ifx\csname l@#1\endcsname\relax
\typeout{** WARNING: IEEEtran.bst: No hyphenation pattern has been}%
\typeout{** loaded for the language `#1'. Using the pattern for}%
\typeout{** the default language instead.}%
\else
\language=\csname l@#1\endcsname
\fi
#2}}
\providecommand{\BIBdecl}{\relax}
\BIBdecl

\bibitem{INTencoder4}
M.~Marazin, R.~Gautier, and G.~Burel, ``Dual code method for blind
  identification of convolutional encoder for cognitive radio receiver
  design,'' in \emph{Proc. IEEE GLOBECOM Workshops}, 2009, pp. 1--6.

\bibitem{OA1}
O.~A. Dobre, A.~Abdi, Y.~Bar-Ness, and W.~Su, ``Survey of automatic modulation
  classification techniques: classical approaches and new trends,'' \emph{IET
  Commun.}, vol.~1, no.~2, pp. 137--156, Apr. 2007.

\bibitem{ref:CE-subspace-LB}
{Lang Tong} and S.~{Perreau}, ``Multichannel blind identification: from
  subspace to maximum likelihood methods,'' \emph{Proceed. of the IEEE},
  vol.~86, no.~10, pp. 1951--1968, Oct. 1998.

\bibitem{ref:CE-ML-MMSE}
{Shao Min} and C.~L. {Nikias}, ``An {ML}/{MMSE} estimation approach to blind
  equalization,'' in \emph{Proc. IEEE ICASSP}, vol.~iv, Apr. 1994, pp.
  IV/569--IV/572 vol.4.

\bibitem{ref:CE-ML-KL}
E.~{Weinstein}, M.~{Feder}, and A.~V. {Oppenheim}, ``Sequential algorithms for
  parameter estimation based on the {K}ullback-{L}eibler information measure,''
  \emph{IEEE Trans. Signal Process.}, vol.~38, no.~9, pp. 1652--1654, Sep.
  1990.

\bibitem{ref:CE-classes-sos}
V.~U. {Reddy}, C.~B. {Papadias}, and A.~J. {Paulraj}, ``Blind identifiability
  of certain classes of multipath channels from second-order statistics using
  antenna arrays,'' \emph{IEEE Signal Process. Lett.}, vol.~4, no.~5, pp.
  138--141, May 1997.

\bibitem{ref:CE-sos-timedomain}
{Lang Tong}, {Guanghan Xu}, and T.~{Kailath}, ``Blind identification and
  equalization based on second-order statistics: a time domain approach,''
  \emph{IEEE Trans. Inf. Theory}, vol.~40, no.~2, pp. 340--349, Mar. 1994.

\bibitem{ref:CE-multipath-subspace}
L.~{Perros-Meilhac}, E.~{Moulines}, K.~{Abed-Meraim}, P.~{Chevalier}, and
  P.~{Duhamel}, ``Blind identification of multipath channels: a parametric
  subspace approach,'' \emph{IEEE Trans. Signal Process.}, vol.~49, no.~7, pp.
  1468--1480, Jul. 2001.

\bibitem{ref:CE-multipath-modal}
{Insung Kang}, M.~P. {Fitz}, and S.~B. {Gelfand}, ``Blind estimation of
  multipath channel parameters: a modal analysis approach,'' \emph{IEEE Trans.
  Commun.}, vol.~47, no.~8, pp. 1140--1150, Aug. 1999.

\bibitem{ref:CE-MB-matching-correlation}
Y.~{Li} and Z.~{Ding}, ``New results on the blind identification of {FIR}
  channels based on second order statistics,'' in \emph{Proc. IEEE MILCOM},
  Oct. 1993, pp. 644--647 vol.2.

\bibitem{ref:CE-MB-matching-cyc2}
S.~D. {Halford} and G.~B. {Giannakis}, ``Asymptotically optimal blind
  equalizers based on cyclostationary statistics,'' in \emph{Proc. IEEE
  MILCOM}, Oct. 1994, pp. 306--310 vol.1.

\bibitem{ref:CE-MB-matching-cyc}
G.~B. {Giannakis} and S.~D. {Halford}, ``Asymptotically optimal blind
  fractionally spaced channel estimation and performance analysis,'' \emph{IEEE
  Trans. Signal Process.}, vol.~45, no.~7, pp. 1815--1830, Jul. 1997.

\bibitem{ref:modut1}
F.~{Hameed}, O.~A. {Dobre}, and D.~C. {Popescu}, ``On the likelihood-based
  approach to modulation classification,'' \emph{IEEE Trans. Wireless Commun.},
  vol.~8, no.~12, pp. 5884--5892, Dec. 2009.

\bibitem{ref:modut2}
O.~{Ozdemir}, R.~{Li}, and P.~K. {Varshney}, ``Hybrid maximum likelihood
  modulation classification using multiple radios,'' \emph{IEEE Commun. Lett.},
  vol.~17, no.~10, pp. 1889--1892, Oct. 2013.

\bibitem{ref[TWC15]}
F.~{Wang} and X.~{Wang}, ``Fast and robust modulation classification via
  {K}olmogorov-{S}mirnov test,'' \emph{IEEE Trans. Commun.}, vol.~58, no.~8,
  pp. 2324--2332, Aug. 2010.

\bibitem{ref[TWC17]}
F.~{Wang}, O.~A. {Dobre}, C.~{Chan}, and J.~{Zhang}, ``Fold-based
  {K}olmogorov-{S}mirnov modulation classifier,'' \emph{IEEE Signal Process.
  Lett.}, vol.~23, no.~7, pp. 1003--1007, Jul. 2016.

\bibitem{FC1}
H.~C. Wu, M.~Saquib, and Z.~Yun, ``Novel automatic modulation classification
  using cumulant features for communications via multipath channels,''
  \emph{IEEE Trans. Wireless Commun.}, vol.~7, no.~8, pp. 3098--3105, Aug.
  2008.

\bibitem{FC2}
V.~D. Orlic and M.~L. Dukic, ``Multipath channel estimation algorithm for
  automatic modulation classification using sixth-order cumulants,''
  \emph{Electron. Lett.}, vol.~46, no.~19, pp. 1348--1349, Sept. 2010.

\bibitem{ref[TWC4]}
O.~A. {Dobre}, A.~{Abdi}, Y.~{Bar-Ness}, and W.~{Su}, ``Survey of automatic
  modulation classification techniques: classical approaches and new trends,''
  \emph{IET Commun.}, vol.~1, no.~2, pp. 137--156, Apr. 2007.

\bibitem{ref[TWC5]}
O.~A. {Dobre}, A.~{Abdi}, and Y.~B. and, ``Blind modulation classification: a
  concept whose time has come,'' in \emph{Proc. IEEE Sarnoff Symp.}, Apr. 2005,
  pp. 223--228.

\bibitem{MC-LB-ALRT-MILCOM}
E.~{Kanterakis} and W.~{Su}, ``Modulation classification in {MIMO} systems,''
  in \emph{Proc. IEEE MILCOM}, Nov. 2013, pp. 35--39.

\bibitem{MC-LB-WEI}
{Wen Wei} and J.~M. {Mendel}, ``Maximum-likelihood classification for digital
  amplitude-phase modulations,'' \emph{IEEE Trans. Commun.}, vol.~48, no.~2,
  pp. 189--193, Feb. 2000.

\bibitem{MC-LB-MILCOM}
P.~{Panagiotou}, A.~{Anastasopoulos}, and A.~{Polydoros}, ``Likelihood ratio
  tests for modulation classification,'' in \emph{Proc. IEEE MILCOM}, vol.~2,
  Oct. 2000, pp. 670--674 vol.2.

\bibitem{MC-LB-TSMC}
J.~L. {Xu}, W.~{Su}, and M.~{Zhou}, ``Likelihood-ratio approaches to automatic
  modulation classification,'' \emph{IEEE Trans. Systems, Man, Cybern. C, Appl.
  Rev.}, vol.~41, no.~4, pp. 455--469, July 2011.

\bibitem{MC-LB-GLRT-BDULEK}
B.~{Dulek}, ``Online hybrid likelihood based modulation classification using
  multiple sensors,'' \emph{IEEE Trans. Wireless Commun.}, vol.~16, no.~8, pp.
  4984--5000, Aug 2017.

\bibitem{MC-LB-GLRT-CCECE}
A.~A. {Tadaion}, M.~{Derakhtian}, S.~{Gazor}, and M.~R. {Aref}, ``Likelihood
  ratio tests for {PSK} modulation classification in unknown noise
  environment,'' in \emph{Proc. IEEE CCECE}, May 2005, pp. 151--154.

\bibitem{MC-LB-HLRT-MIL2000}
L.~{Hong} and K.~C. {Ho}, ``{BPSK} and {QPSK} modulation classification with
  unknown signal level,'' in \emph{Proc. IEEE MILCOM}, vol.~2, Oct. 2000, pp.
  976--980 vol.2.

\bibitem{MC-LB-HLRT-TWC2015}
B.~{Dulek}, O.~{Ozdemir}, P.~K. {Varshney}, and W.~{Su}, ``Distributed maximum
  likelihood classification of linear modulations over nonidentical flat
  block-fading {G}aussian channels,'' \emph{IEEE Trans. Wireless Commun.},
  vol.~14, no.~2, pp. 724--737, Feb. 2015.

\bibitem{ref:MC-wcsp13}
{Jingwen Zhang}, {Fanggang Wang}, and {Zhangdui Zhong}, ``Comprehensive
  evaluation of cumulant-based and {K}olmogorov-{S}mirnov-based modulation
  classifiers,'' in \emph{Proc. IEEE WCSP}, Oct. 2013, pp. 1--6.

\bibitem{ref:MC-wcnc11}
F.~{Wang}, R.~{Xu}, and Z.~{Zhong}, ``Low complexity {K}olmogorov-{S}mirnov
  modulation classification,'' in \emph{Proc. IEEE WCNC}, Mar. 2011, pp.
  1607--1611.

\bibitem{MC-FB-CL2011}
P.~{Urriza}, E.~{Rebeiz}, P.~{Pawelczak}, and D.~{Cabric}, ``Computationally
  efficient modulation level classification based on probability distribution
  distance functions,'' \emph{IEEE Commun. Lett.}, vol.~15, no.~5, pp.
  476--478, May 2011.

\bibitem{MC-FB-PDF}
{Yawpo Yang} and {Ching-Hwa Liu}, ``An asymptotic optimal algorithm for
  modulation classification,'' \emph{IEEE Commun. Lett.}, vol.~2, no.~5, pp.
  117--119, May 1998.

\bibitem{MC-FB-phase}
{Yawpo Yang} and S.~S. {Soliman}, ``Statistical moments based classifier for
  {MPSK} signals,'' in \emph{Proc. IEEE GLOBECOM}, Dec. 1991, pp. 72--76.

\bibitem{MC-FB-high}
O.~A. {Dobre}, Y.~{Bar-Ness}, and {Wei Su}, ``Higher-order cyclic cumulants for
  high order modulation classification,'' in \emph{Proc. IEEE MILCOM}, vol.~1,
  Oct. 2003, pp. 112--117.

\bibitem{code1}
R.~Swaminathan and A.~S. Madhukumar, ``Classification of error correcting codes
  and estimation of interleaver parameters in a noisy transmission
  environment,'' \emph{IEEE Trans. Broadcasting}, vol.~63, no.~3, pp. 463--478,
  Sept. 2017.

\bibitem{CODEMS1}
A.~Bonvard, S.~Houcke, R.~Gautier, and M.~Marazin, ``Classification based on
  {E}uclidean distance distribution for blind identification of error
  correcting codes in noncooperative contexts,'' \emph{IEEE Trans. Signal
  Process.}, vol.~66, no.~10, pp. 2572--2583, May 2018.

\bibitem{TB3}
T.~Xia and H.~C. Wu, ``Joint blind frame synchronization and encoder
  identification for low-density parity-check codes,'' \emph{IEEE Commun.
  Lett.}, vol.~18, no.~2, pp. 352--355, Feb. 2014.

\bibitem{ref:blind-turbo}
Y.~G. {Debessu}, H.~{Wu}, and H.~{Jiang}, ``Novel blind encoder parameter
  estimation for turbo codes,'' \emph{IEEE Commun. Lett.}, vol.~16, no.~12, pp.
  1917--1920, Dec. 2012.

\bibitem{ref:blind-cyclic}
A.~D. {Yardi}, S.~{Vijayakumaran}, and A.~{Kumar}, ``Blind reconstruction of
  binary cyclic codes from unsynchronized bitstream,'' \emph{IEEE Trans.
  Commun.}, vol.~64, no.~7, pp. 2693--2706, Jul. 2016.

\bibitem{ref:EC-linear-block}
A.~{Bonvard}, S.~{Houcke}, M.~{Marazin}, and R.~{Gautier}, ``Order statistics
  on minimal {E}uclidean distance for blind linear block code identification,''
  in \emph{Proc. IEEE ICC}, May 2018, pp. 1--5.

\bibitem{ref:EC-cyclic-code}
J.~{Wang}, Y.~{Yue}, and J.~{Yao}, ``A method of blind recognition of cyclic
  code generator polynomial,'' in \emph{Proc.IEEE WiCOM}, Sep. 2010, pp. 1--4.

\bibitem{ref:EC-convo-code}
A.~G. {Soteh} and H.~{Khaleghi Bizaki}, ``On the analytical solution of rank
  problem in the convolutional code identification context,'' \emph{IEEE
  Commun. Lett.}, vol.~20, no.~3, pp. 442--445, Mar. 2016.

\bibitem{ref:EC-turbo}
Y.~{Zhang}, G.~{Wei}, and X.~{Dai}, ``An effective method for automatic
  recognition of turbo codes,'' in \emph{Proc. IEEE WCSP}, Oct. 2016, pp. 1--4.

\bibitem{ref:blind-DT1}
X.~{Wang} and R.~{Chen}, ``Blind turbo equalization in {G}aussian and impulsive
  noise,'' \emph{IEEE Trans. Veh. Technol.}, vol.~50, no.~4, pp. 1092--1105,
  Jul. 2001.

\bibitem{ref:DT-CE-BW}
{Ghassan Kawas Kaleh} and R.~{Vallet}, ``Joint parameter estimation and symbol
  detection for linear or nonlinear unknown channels,'' \emph{IEEE Trans.
  Commun.}, vol.~42, no.~7, pp. 2406--2413, Jul. 1994.

\bibitem{ref:DT-MC}
E.~{Kazikli}, B.~{Dulek}, and S.~{Gezici}, ``Optimal joint modulation
  classification and symbol decoding,'' \emph{IEEE Trans. Wireless Commun.},
  vol.~18, no.~5, pp. 2623--2638, May 2019.

\bibitem{ref:ZJW}
J.~Zhang, D.~Cabric, F.~Wang, and Z.~Zhong, ``Cooperative modulation
  classification for multipath fading channels via expectation-maximization,''
  \emph{IEEE Trans. Wireless Commun.}, vol.~16, no.~10, pp. 6698--6711, Oct.
  2017.

\bibitem{TWC[26]}
O.~Ozdemir, T.~Wimalajeewa, B.~Dulek, P.~K. Varshney, and W.~Su, ``Asynchronous
  linear modulation classification with multiple sensors via generalized {EM}
  algorithm,'' \emph{IEEE Trans. Wireless Commun.}, vol.~14, no.~11, pp.
  6389--6400, Nov. 2015.

\bibitem{LDPC1}
T.~Xia and H.~C. Wu, ``Novel blind identification of {LDPC} codes using average
  {LLR} of syndrome a posteriori probability,'' \emph{IEEE Trans. Signal
  Process.}, vol.~62, no.~3, pp. 632--640, Feb. 2014.

\bibitem{MULTI1}
T.~Xia, H.~C. Wu, S.~Y. Chang, X.~Liu, and S.~C.~H. Huang, ``Blind
  identification of binary {LDPC} codes for {M-QAM} signals,'' in \emph{Proc.
  IEEE GLOBECOM}, 2014, pp. 3532--3536.

\bibitem{ref:LY}
Y.~Liu, F.~Wang, J.~Zhang, B.~Ai, and Z.~Zhong, ``Blind identification of
  {LDPC} codes in multipath fading channel via expectation maximization,'' in
  \emph{Proc. IEEE GLOBECOM}, 2018, pp. 1--6.

\bibitem{ref:random-start}
D.~{Karlis} and E.~{Xekalaki}, ``Choosing initial values for the {EM} algorithm
  for finite mixtures,'' in \emph{IEEE Comput. Statist. Data Anal.}, Jan. 2003,
  pp. 577--590.

\bibitem{ref:SA1}
M.~Lavielle and E.~Moulines, ``A simulated annealing version of the {EM}
  algorithm for non-{G}aussian deconvolution,'' \emph{Statist. Comput.},
  vol.~7, no.~4, pp. 229--236, Dec. 1997.

\bibitem{ref:ZJW-ICC}
J.~{Zhang}, F.~{Wang}, Z.~{Zhong}, and D.~{Cabric}, ``Cooperative multiuser
  modulation classification in multipath channels via
  expectation-maximization,'' in \emph{Proc. IEEE ICC}, May 2017, pp. 1--6.

\bibitem{LLR1}
J.~Hagenauer, E.~Offer, and L.~Papke, ``Iterative decoding of binary block and
  convolutional codes,'' \emph{IEEE Trans. Inf. Theory}, vol.~42, no.~2, pp.
  429--445, Mar. 1996.

\bibitem{ref[TWC29]}
H.~L. Trees, \emph{Detection Estimation Modulation Theory, Part I}.\hskip 1em
  plus 0.5em minus 0.4em\relax New York, NY, USA: Wiley, 1968.

\bibitem{ref[TWC30]}
A.~Gelb, \emph{Applied Optimal Estimation}.\hskip 1em plus 0.5em minus
  0.4em\relax Cambridge, MA, USA: MIT Press, 1974.

\bibitem{ref[TWC27]}
A.~P. Dempster, N.~M. Laird, and D.~B. Rubin, ``Maximum likelihood from
  incomplete data via the {EM} algorithm,'' \emph{J. Roy. Statist. Soc., B
  (Methodological)}, vol.~39, no.~1, pp. 1--38, 1977.

\end{thebibliography}

\end{document}